\newif\ifcomments
\newcommand{\whp}{with high probability}
\newtheorem{theorem}{Theorem}
\newtheorem{corollary}{Corollary}
\newtheorem{lemma}{Lemma}
\newtheorem{definition}{Definition}
\newcommand{\AOneL}{c(\mbox{A1, \texttt{low}})\xspace}
\newcommand{\AOneH}{c(\mbox{A1, \texttt{high}})\xspace}
\newcommand{\ATwoL}{c(\mbox{A2, \texttt{low}})\xspace}
\newcommand{\ATwoH}{c(\mbox{A2, \texttt{high}})\xspace}
\newcommand{\POW}{PoW\xspace}
\newcommand{\AlgA}{\textsc{SybilControl\xspace}}
\newcommand{\AlgB}{\textsc{CCom}\xspace}
\newcommand{\Diffuse}{\textsc{Diffuse}\xspace}
\newcommand{\ppuzzle}{purge\ puzzle\xspace}
\newcommand{\epuzzle}{entrance\ puzzle\xspace}
\newcommand{\tstamp}{\mathcal{T}}
\newcommand{\Old}{\mathcal{S}_{\mbox{\tiny i}}}
\newcommand{\Iters}{\mathcal{I}}
\newcommand{\qcomm}{committee\xspace}
\newcommand{\cgoal}{Committee Invariant\xspace}
\newcommand{\sgoal}{Population Invariant\xspace}
\newcommand{\joinRate}{J^G\xspace}  
\newcommand{\iJRate}{J^G\xspace} 
\newcommand{\joinRateBad}{J^B\xspace}
\newcommand{\depRateTot}{D\xspace}  
\newcommand{\joinRateAll}{J^{\mbox{\tiny all}}\xspace}  
\newcommand{\elen}{\ell\xspace} 
\newcommand{\epochRate}{\rho\xspace}  
\newcommand{\estSet}{\tilde{S}\xspace} 
\newcommand{\estDur}{\tilde{\elen}\xspace}  
\newcommand{\advCost}{\mathcal{T}\xspace}
\newcommand{\advAveCost}{T\xspace}
\newcommand{\algGM}{\textsc{GMCom}\xspace}
\newcommand{\genID}{\textsc{GenID}\xspace}
\newcommand{\rateCur}{J_{\mbox{\tiny cur}}\xspace}
\newcommand{\setIDs}{\mathcal{S}\xspace}
\newcommand{\setGood}{\mathcal{G}\xspace}
\newcommand{\iterIDs}{\mathcal{S}\xspace}
\newcommand{\curIDs}{\mathcal{S}_{\mbox{\tiny cur}}\xspace}
\newcommand{\JoinEst}{\tilde{J}^{G}\xspace}
\newcommand{\bigconstant}{c(JE, \texttt{high})\xspace}
\newcommand{\smallconstant}{c(JE, \texttt{low})\xspace}
\newcommand{\advRate}{\ensuremath{J^B}\xspace}
\newcommand{\probDef}{\textsc{DefID}\xspace}
\newcommand{\defn}[1]{\textbf{\emph{#1}}}
\title{Resource-Competitive Sybil Defenses}
\date{}
\author[1]{Diksha Gupta}
\author[2]{Jared Saia\thanks{This work is supported by the National Science Foundation grants CNS-1318880 and CCF-1320994.}}
\author[3]{Maxwell Young\thanks{This work is supported by the National Science Foundation grant CCF 1613772 and by a research gift from C Spire.}}
\affil[1]{\small Dept. of Computer Science, University of New Mexico, NM, USA\hspace{6cm} \mbox{\texttt{dgupta@unm.edu}}}
\affil[2]{\small Dept. of Computer Science, University of New Mexico, NM, USA\hspace{6cm} \mbox{\texttt{saia@cs.unm.edu}}}
\affil[3]{\small Computer Science and Engineering Dept., Mississippi State University, MS, USA\hspace{6cm} \texttt{myoung@cse.msstate.edu}}
\begin{document}


\maketitle

\begin{abstract}
Proof-of-work (\POW) is an algorithmic tool used to secure networks by imposing a computational cost on participating devices. Unfortunately, traditional \POW schemes require that correct devices perform significant computational work in perpetuity, even when the system is not under attack.

We address this issue by designing general \POW protocols that ensure two properties. First, the fraction of identities in the system that are controlled by an attacker is a minority. Second, the computational cost of our protocol is comparable to the cost of an attacker.  In particular, we present an efficient algorithm, \algGM, which guarantees that the average computational cost to the good ID per unit time is $O(J + \sqrt{T(J+1)})$, where $J$ is the average number of joins by the good IDs and $T$ is the average computational spending of the adversary. Additionally, we discuss a precursor to this algorithm, \AlgB, which guarantees an average computational cost to good IDs per unit time of $O(J + T)$.

We prove a lower bound showing that \algGM's spending rate is asymptotically optimal among a large family of algorithms.  Finally, we provide empirical evidence that our algorithms can be significantly more efficient than previous defenses under various attack scenarios. 

\end{abstract}

\section{Introduction}\label{sec:intro}

Twenty-five years after its introduction by  Dwork and Naor ~\cite{dwork:pricing}, \defn{proof-of-work (\POW)} is enjoying a research renaissance.  Succinctly, \POW is an economic tool to prevent abuse of a system by requiring participants to solve computational puzzles in order to access system resources or participate in group decision making.  In recent years, \POW~is playing a critical role in cryptocurrencies such as Bitcoin~\cite{nakamoto:bitcoin}, and other blockchain technologies~\cite{litecoin,blockstack,chain1,iota,ethereum,dashcoin,primecoin}. Yet, despite success with Bitcoin and other cryptocurrencies, \POW~has not found a widespread application in mitigating malicious behavior in networks. This is despite numerous proposals for PoW-based defenses~\cite{parno2007portcullis,wang:defending,kaiser:kapow,green:reconstructing,feng:design,waters:new,martinovic:wireless,borisov:computational,li:sybilcontrol}.


\medskip

\noindent{\bf A Barrier to Widespread Use.} A major impediment to the widespread use of~\POW~is ``the work''. In particular, current \POW approaches have significant computational overhead, given that puzzles must always be solved, {\it even when the system is not under attack}.  This non-stop resource burning translates into a substantial energy --- and, ultimately, a monetary -- cost~\cite{economistBC,coindesk,arstechnica}.\footnote{For example, in 2018, the Economist calculated that Bitcoin consumes at least $22$ terawatt-hours of electricity per year, or enough to power Ireland~\cite{economistBC}.}  Consequently,  \POW approaches are currently used primarily in applications where participants have a financial incentive to continually perform work, such as cryptocurrencies.  This is a barrier to wide-spread use of a technique that has the potential to fundamentally advance the field of cybersecurity.  

In light of this, we seek to reduce the cost of \POW systems and focus on the following question: \textbf{Can we design \POW systems where resource costs are low in the absence of attack, and grow slowly with the effort expended by an attacker?}

In this paper, we design and analyze algorithms that answers this question in the affirmative. {We begin by discussing a straightforward algorithm, \defn{\AlgB} in Section \ref{sec:server1}, followed by some minor enhancements to this approach in Section~\ref{sec:enhancements}. Next, we present an improved algorithm, \defn{\algGM} (Section \ref{sec:gmcom}) and analyze it (Section~\ref{s:anal-gmcomm}).  Then we prove a matching lower bound for this class of algorithms in Section \ref{sec:lower}.} We present  empirical results which complement our theoretical work in Section \ref{sec:experiments}. Finally, we conclude and offer directions for future research in Section \ref{sec:future}.


\subsection{Our Model}\label{sec:model-main}

Our system consists of virtual \defn{identifiers (IDs)}, and an attacker (or \defn{adversary}).  Each ID is either good or bad.   Each \defn{good} ID follows our algorithm, and all \defn{bad} IDs are controlled by the adversary. 

\subsubsection{Communication}\label{sec:com} All communication among good IDs occurs through a broadcast primitive, denoted by~{\bf \Diffuse}, which  allows  a  good  ID  to  send  a  value  to all other good IDs within a known and bounded amount of time, despite the presence of an adversary. We assume that when a message is diffused in the network, it is not possible to determine which ID initiated the diffusion of that message. Such a primitive is a standard assumption in \POW schemes~\cite{Garay2015,bitcoinwiki,GiladHMVZ17,Luu:2016}; see~\cite{miller:discovering} for empirical justification.  We assume that each message originating at a good ID is signed by the ID's private key. 

A \defn{round} is the amount of time it takes to solve our easiest computational puzzle plus the time to communicate the solution to the rest of the network via \Diffuse.  We assume that a constant number of rounds is sufficient to perform Byzantine consensus (see Section~\ref{sec:our-problems} for definition of Byzantine consensus and an algorithm to solve it). 

As is standard, we assume all IDs are synchronized. For simplicity, we initially assume that the time to diffuse a message is small in comparison to the time to solve computational puzzles.\footnote{A recent study of the Bitcoin network shows that the communication latency is 12 seconds in contrast to the 10 minutes block interval~\cite{croman2016scaling}, which motivates our assumption of computational latency dominating the communication latency in our system.} However, we later relax this assumption to account for a network latency which is upper bounded by $\Delta$ in Section \ref{sec:bounded-latency}. We pessimistically assume that the adversary can send messages to any ID at will, and that it can read the messages diffused by good IDs before sending its own.

\subsubsection{Puzzles} \label{sec:puzzle} We assume a source of computational puzzles of varying difficulty, whose solutions cannot be stolen or pre-computed.  This is a common assumption in \POW systems~\cite{nakamoto:bitcoin, li:sybilcontrol, andrychowicz2015pow}. We now provide an overview of the standard way in which this assumption is achieved; more details on the puzzle construction are provided in Section~\ref{subsec:puzzle-construction}.

All IDs have access to a hash function, {\boldmath{$h$}}, about which we make the  \emph{random oracle assumption}~\cite{bellare1993random,koblitz2015random}.  Succinctly, this assumption is that when first computed on an input, $x$, $h(x)$ is selected independently and uniformly at random from the output domain, and that on subsequent computations of $h(x)$ the same output value is always returned.  We assume that both the input and output domains are the real numbers between $0$ and $1$.  In practice, $h$ may be a cryptographic hash function, such as SHA-2~\cite{sha2}, with inputs and outputs of sufficiently large bit lengths. 

Solving a puzzle requires that an ID find an input $x$ such that $h(x)$ is less than some threshold.   The input found is the \defn{puzzle solution}. Decreasing this threshold value will increase the difficulty, since one must compute the hash function on more inputs to find an output that is sufficiently small.   

Finally, we must be able to address an adversary who attempts  (1)  to falsely claim puzzle solutions computed and transmitted by good IDs, and (2) to pre-compute solutions to puzzles. These details are deferred until Section~\ref{subsec:puzzle-construction} when we describe our algorithm. 

\subsubsection{Adversary}\label{sec:adv} A single adversary controls all bad IDs.  This pessimistically represents perfect collusion and coordination by the bad IDs. Bad IDs may arbitrarily deviate from our protocol, including sending incorrect or spurious messages. This type of adversary encapsulates the Sybil attack~\cite{douceur02sybil} (see Section~\ref{sec:related-work}).  

The adversary controls an $\alpha$-fraction of computational power, where $\alpha>0$ is a small constant.  Thus, in a single round where all IDs are solving puzzles, the adversary can solve an $\alpha$-fraction of the puzzles. This assumption  is standard in past \POW literature~\cite{nakamoto:bitcoin, andrychowicz2015pow, walfish2010ddos, GiladHMVZ17}.

Our algorithms employ digital signatures, but do not require any public key infrastructure. Further, we assume the adversary knows our algorithm, but does not know the private random bits of any good ID.  

\subsection{System Initialization}\label{sec:initialization} 
We assume an \defn{initialization phase} over which initial IDs join the system. At the end of the initialization phase, a heavy-weight protocol (\genID, described in Section~\ref{sec:genID}) is run, ensuring that (1) all IDs learn an initial system membership, $S_0$; and (2) no more than an $\alpha$-fraction of the membership of $S_0$ is bad.  The initial estimate of the join rate of good IDs is the size of $S_0$ divided by the time taken for the initialization phase.

\subsubsection{Joins and Departures}\label{sec:join} The system is dynamic with IDs joining and departing over time, subject to the constraint that at most a constant fraction of the good IDs can join or depart in any round.  Maintaining performance guarantees amidst many nodes joining and leaving a system is challenging in decentralized systems~\cite{Augustine:2013:SSD:2486159.2486170,7354403,Augustine2015,Augustine:2012,augustine:fast}.\footnote{Informally this process of nodes joining and departing over time is often referred to as ``churn".} For \AlgB we pessimistically assume that all join and departure events are scheduled in a worst-case fashion by the adversary.  However, we assume that departing good IDs are selected uniformly at random from the set of all good IDs in the population.

After the initialization phase, we assume that all good IDs announce their departure to the network.  In practice, this assumption could be relaxed through the use of \defn{heartbeat messages} that are periodically sent out to indicate that an ID is still in the network.

The minimum number of good IDs in the system at any point is assumed to be at least {\boldmath{$n_0$}}.  Our goal is to provide security and performance guarantees  for {\boldmath{$O(n_0^{\gamma})$} } joins and departures of IDs, for any desired constant {\boldmath{$\gamma$}} $ \geq 1$. In other words, the guarantees on our system hold with high probability (w.h.p.)\footnote{With probability at least $1-n_0^{-c}$ for any desired $c\geq 1$.} over this polynomial number of dynamic events. This implies security despite a system size that may vary wildly over time, increasing and decreasing polynomially in $n_0$ above a minimum number of $n_0$ good IDs. 

We assume that at most an {\boldmath{$\epsilon_a$}}-fraction of good IDs may join in a single round, where  $\epsilon_a>0$ is a small constant. Similarly, we assume an {\boldmath{$\epsilon_d$}}-fraction of good IDs may depart in a single round, where $\epsilon_d>0$ are small constants that depends on the constant number of rounds required to solve Byzantine consensus.


\subsection{Problem Definition}\label{sec:our-problems}

We define the {\defn {Defend-ID}} (\textbf{\textsc{\probDef}}) problem which consists of maintaining the following two security invariants.\medskip

\noindent{\defn{\sgoal}{\bf:}} Ensure that the fraction of bad IDs in the entire system is less than $1/2$.
\medskip

We use $1/2$ for ease of presentation, but there is nothing special about this fraction. We can adjust this to be any fraction less than $3\alpha$ by adjusting our algorithm as discussed in Section \ref{sec:popgoalmod}. Achieving the \sgoal will help ensure that system resources -- such as jobs executed on a server, or bandwidth obtained from a wireless access point -- consumed by the bad IDs are (roughly) proportional to the adversary's computational power.  Possible application domains include: content-sharing overlays~\cite{falkner:profiling,steiner:global}; and open cloud services~\cite{Mohaisen:2013:TDC:2484313.2484332,Anderson:2004:BSP:1032646.1033223,Chandra:2009:NUD:1855533.1855535,WeissmanSGRNC11}. 

\medskip  

\noindent{\defn{\cgoal}{\bf:}} Ensure there always exists a committee that (i) is known to all good IDs; (ii) is of \emph{scalable} size i.e., of size $\Theta(\log n_0)$; and (iii) contains less than a $1/2$  fraction of bad IDs.

\medskip

Again, there is nothing special about $1/2$, and this can be adjusted to a smaller fraction of bad IDs (see Section~\ref{lem:maj_comm-GM}). Achieving the \cgoal ensures that the committee can solve the \defn{Byzantine consensus}~problem~\cite{lamport1982byzantine} in a scalable manner.  In this problem, each good ID has an initial input bit. The goal is for (i) all good IDs to decide on the same bit; and (ii) this bit to equal the input bit of at least one good ID.  
Byzantine consensus enables participants in a distributed network to reach agreement on a decision, even in the presence of a malicious minority.  Thus, it is a fundamental building block for many cryptocurrencies~\cite{BonneauMCNKF15,eyal2016bitcoin,cryptoeprint:2015:521,GiladHMVZ17}; trustworthy computing \cite{castro1998practical,castro2002practical,cachin:secure,kotla2007zyzzyva,clement-making,1529992}; peer-to-peer networks~\cite{oceanweb,adya:farsite}; and databases~\cite{GPD,preguica2008byzantium,zhao2007byzantine}.

Establishing Byzantine consensus via the use of committees is a common  approach; for examples, see~\cite{GiladHMVZ17,Luu:2016,KSSV,pass2016hybrid}.  There are algorithms that enable a set of $x$ nodes to solve Byzantine consensus in constant time and sending $O(x^2)$ messages, when communication is synchronous, nodes have digital signatures, and the fraction of Byzantine nodes is strictly less than $1/2$.  We make use of a recent algorithm by Abraham et al.~\cite{abraham2018synchronous}.

\medskip


\subsection{Results for \AlgB}\label{sec:main}

We measure \defn{computational cost} as the effort required to solve computational puzzles (see Section~\ref{sec:puzzle} for details). 

\smallskip
We begin by presenting our results for \AlgB (see Section \ref{sec:server1}). The \defn{system lifetime} consists of a number of ID joins and departures that is polynomial in $n_0$. Let  {\boldmath{$\advAveCost$}} be the  \defn{adversarial spending rate}, which is the cost to the adversary for solving puzzles over the system lifetime.  Let {\boldmath{$\joinRate$}} denote the \defn{good ID join rate}, which is the number of good IDs that join during the system lifetime. Finally, the \defn{algorithmic spending rate}  is the total cost to the good IDs for solving puzzles  over the system lifetime.

\begin{theorem}\label{thm:main1}
For $\alpha < 1/6$, with error probability polynomially small in $n_0$ over the system lifetime,  \AlgB solves \probDef with an algorithmic spending rate of $O(T + \joinRate)$.
\end{theorem}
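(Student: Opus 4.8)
The plan is to argue by induction over the polynomial number of join and departure events that constitute the system lifetime, showing that \AlgB maintains three invariants at once: the \sgoal, the \cgoal (these two together being exactly \probDef), and a bookkeeping invariant asserting that the cumulative computational cost paid by good IDs up to any point is at most a constant times the cumulative cost paid by the adversary plus a constant times the number of good joins so far; dividing this last bound by the length of the lifetime gives the claimed $O(\advAveCost + \joinRate)$ rate. All high-probability guarantees come from a union bound over the randomness used during the lifetime — the committee re-elections run by \QE and the uniformly random choice of which good IDs depart — against the $O(n_0^{\gamma})$ bound on events; and the random-oracle and puzzle-unforgeability properties (Section~\ref{subsec:puzzle-construction}) are what let us assert that an adversary who injects a bad ID through an entrance puzzle of difficulty $d$ must actually have paid $\Omega(d)$ for it. Given this, the \cgoal is immediate from the \sgoal: using the stronger form of the \sgoal noted after its statement, when $\alpha < 1/6$ the bad fraction of the membership stays below a constant strictly less than $1/2$, so each committee output by \QE — a fresh, near-uniform sample of the current membership of size $\Theta(\log n_0)$ — has a bad minority except with probability $n_0^{-\Omega(1)}$ by a Chernoff bound (the constant in the committee size tuned to beat the desired error exponent), and a union bound over the $O(n_0^{\gamma})$ re-elections finishes it. The committee is known to every good ID because its roster is diffused and agreed on via the Byzantine-consensus routine of Abraham et al.~\cite{abraham2018synchronous}, which is correct precisely because that roster has a bad minority.

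For the \sgoal I would run a potential argument with potential $\Phi = (\#\,\text{bad IDs}) - c_1(\#\,\text{good IDs})$ for a suitable constant $c_1 < 1$. Good IDs can enter only by solving an entrance puzzle whose difficulty is chosen by the committee, and all IDs are periodically forced to re-solve a purge puzzle on pain of eviction; since good departures — hence purges — remove IDs uniformly at random, bad IDs are flushed at the same rate as good ones, so $\Phi$ cannot increase except through fresh injection, which costs the adversary $\Omega(d)$ per ID at the current difficulty $d$. The committee's monitoring rule raises $d$ (and the purge frequency) as soon as the observed join rate exceeds a constant multiple of its running estimate of $\joinRate$; combined with the $\epsilon_a,\epsilon_d$ bounds on per-round churn, this caps the transient overshoot within a constant number of rounds, keeping $\Phi < 0$ and hence the bad fraction below $1/2$ throughout, with the $\alpha < 1/6$ slack leaving exactly the room needed for the Chernoff bound above.

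For the cost bound, partition the lifetime into maximal phases by the current entrance-puzzle difficulty, equivalently by whether the committee currently judges the system to be under attack. In a quiet phase the difficulty is at its baseline, good IDs pay $O(1)$ per join and at most baseline-rate purges, contributing $O(\joinRate)$ to the ledger; in an attack phase the committee has raised the difficulty precisely because the observed join rate exceeded its estimate of $\joinRate$ by a constant factor, and since genuine good joins only inflate that count (and the estimate tracks them), the excess must be adversarial injection costing $\Omega(d)$ per ID. Charging the good entrance cost $O(d\cdot(\text{good joins in the phase}))$ and the good purge cost — which the algorithm keeps within a constant factor of the number of injected IDs being expelled — against the adversary's spend in the phase shows the good cost in that phase is $O(T_{\text{phase}} + J_{\text{phase}})$; summing over phases and dividing by the lifetime length yields $O(\advAveCost + \joinRate)$.

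The main obstacle is the circular dependency in the last two steps: the committee's difficulty depends on its estimate of $\joinRate$ and the observed rate, the adversary's best injection schedule (a slow drip just under the detection threshold, a single burst, or a pattern imitating legitimate joins) depends on that rule, and the costs on both sides depend on the difficulty. Making the charging airtight requires (a) an estimator for $\joinRate$ that cannot be inflated by earlier injections so much that a later genuine attack goes undetected; (b) care at phase boundaries, where an attack may end before the injected IDs have been purged, so that some good purge cost must be charged to adversarial spending in the \emph{previous} phase; and (c) bounding the one-time transient from the initialization-derived estimate of $\joinRate$. I expect (a) to be the crux — it forces the monitoring rule to compare against a conservative, slowly adapting estimate, and the analysis must show this still reacts within a constant number of rounds (so the \sgoal is never even momentarily violated) yet never over-reacts enough to spoil the $O(\advAveCost + \joinRate)$ bound.
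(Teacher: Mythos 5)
Your proposal fundamentally misidentifies the algorithm: nearly all of the machinery you invoke — a committee that raises the entrance-puzzle difficulty $d$ when the observed join rate exceeds a running estimate of $\joinRate$, phases delineated by the current difficulty, an estimator for $\joinRate$ that must resist inflation by earlier injections, transients from the initialization-derived estimate — belongs to \algGM, not to \AlgB. In \AlgB every entrance puzzle and every purge puzzle has fixed difficulty $1$, and there is no estimation of $\joinRate$ anywhere in the algorithm; the only adaptivity is the trigger for Step~2, which fires when $n_i^a + n_i^d \geq |\setIDs_{i-1}|/3$. Because of this, your cost analysis does not apply: there are no ``quiet'' versus ``attack'' phases, no $d$ to charge $\Omega(d)$ per injected ID, and no circular dependency between an estimator, the adversary's schedule, and the difficulty. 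The paper's actual cost argument (Lemma~\ref{lemma:committee-cost}) is a per-iteration accounting that needs none of this: the good cost in iteration $i$ is $|G_i| + g_i^a$ (one purge-puzzle for every surviving good ID plus one entrance-puzzle per good join), and the trigger condition together with $|B_{i-1}|\leq\alpha|S_{i-1}|$ yields $|S_{i-1}| = O(g_i^a + g_i^d + b_i^a)$, while $b_i^a \leq T_i$ because every bad join cost the adversary at least $1$. Summing over iterations and dividing by total time gives $O(T+\joinRate)$ directly, with no charging across phase boundaries.

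Your \sgoal argument also has a genuine flaw independent of the algorithm mix-up. You claim that since purges ``remove IDs uniformly at random, bad IDs are flushed at the same rate as good ones,'' letting the potential $\Phi$ stay bounded. But a purge does not remove IDs at random: it removes exactly those IDs that fail to return a valid solution within one round, and whether a bad ID survives is entirely the adversary's choice, subject only to its computational budget. The adversary can keep every bad ID alive through a purge if it is willing to pay; randomness plays no role in which bad IDs are expelled. The paper's proof of the Population Invariant (Lemma~\ref{lem:badlesshalf-full}) is instead a deterministic worst-case count: at the end of every purge $|B_i|\leq\alpha|S_i|$ (because the adversary's $\alpha$-fraction of power caps how many purge puzzles it can solve), and between purges the trigger condition caps the drift in the bad fraction at any instant, yielding a ratio strictly below $3/8$ via a short algebraic maximization. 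Your Committee-Invariant sketch is closest to the paper's Lemma~\ref{lem:maj_comm}, but you should note that the paper's proof also bounds, separately, the number of good committee members that \emph{depart} during an iteration (a second Chernoff bound), since the committee's goodness must persist for the whole iteration, not just at formation.
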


Note that the computational cost incurred by the good IDs grows slowly with the cost incurred by the adversary. When there is no attack on the system, the costs are low and solely a function of the number of good IDs; there is no excessive overhead. But as the adversary spends more to attack the system, the costs required to keep the system secure grow commensurately with the adversary's costs. 

\subsection{\algGM Join and Departure Assumptions for Good IDs} \label{sec:modelGMCom}

Following are additional assumptions needed to achieve our main result for \algGM.  Time is divided into disjoint intervals called \defn{epochs}, defined as follows.

\begin{definition}\label{def:epoch}
Let {\boldmath{$\setGood_{i}$}} be the set of good IDs in the system at the end of epoch $i$ and let epoch $1$ begin at system creation. Then, epoch $i$ is defined as the shortest amount of time until $|\setGood_{i} - \setGood_{i-1}| \geq (3/4)|\setGood_{i}|$.
\end{definition}

Let {\boldmath{$\elen_i$}} denote the number of seconds in epoch $i$. Let {\boldmath{$\epochRate_i$}} be the join rate of good IDs in epoch $i$; that is, the number of good IDs that join in epoch $i$ divided by $\elen_i$.  

We make the assumptions A1 and A2 stated below. For ease of exposition, we use notation $c(\texttt{A}, \texttt{low})$ to indicate a positive constant at most $1$ used to lower bound a quantity in assumption \texttt{A}. Similarly, the constant $c(\texttt{A}, \texttt{high})\geq 1$ is used to upper bound a quantity in assumption \texttt{A}. By parameterizing our analysis later on with these constants, our results generalize to a variety systems where different bounds may be appropriate.

\medskip

\begin{itemize}[leftmargin=10pt]
\item{\bf A1.} For all $i > 1$, $\AOneL \epochRate_{i-1}  \leq \epochRate_{i} \leq \AOneH \epochRate_{i-1}$.\smallskip
\item{\bf A2.} For any period of time within an epoch that contains at least $2$ good join events, the good join rate during that period is between $\ATwoL\epochRate_{i}$ and $\ATwoH \epochRate_{i}$. \end{itemize} 
\medskip

Informally, assumption A1 implies that the rate at which good IDs join is fairly ``smooth'',  A2 implies that the ``rate of change" in the set of good IDs is at least $\Omega(J_i)$ over the course of an epoch, and assumption A2 prohibits good IDs  from being too ``bunched up" during a small interval of time.

In practice, the performance of our algorithms will depend on these constants, which may differ between various network scenarios. For this reason, we parameterize our analysis of \algGM (Sections~\ref{sec:estimating} and~\ref{sec:cost-analysis-gm}).  Later, in Section~\ref{s:JandLAssum}, we provide empirical results using data from real-world systems. 


\subsection{Results for \algGM}

Next, we present results for \algGM, presented in Section \ref{sec:gmcom}, which ensures that the algorithmic spending rate is asymptotically less than the adversarial spending rate in the presence of an attack, whereas in the absence of an attack it is commensurate with the join rate of good IDs.

\begin{theorem}\label{thm:main-upper}
For $\alpha \leq 1/18$, with error probability polynomially small in $n_0$ over the system lifetime, \algGM solves \probDef with an algorithmic spending rate of $O(\sqrt{\advAveCost (\joinRate+1)} + \joinRate)$.
\end{theorem}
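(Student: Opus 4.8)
The plan is to split the argument into a \emph{correctness} part --- showing \algGM maintains both the \sgoal and the \cgoal \whp{} over the polynomially many dynamic events making up the system lifetime --- and a \emph{cost} part bounding the algorithmic spending rate. The algorithmic knobs I would track are: a per-epoch entrance-puzzle difficulty $k_i$ that every newly joining ID (good or bad) must pay; the committee's running estimate $\tilde J$ of the current good join rate, computed by \algEst from the epoch structure; and a \emph{purge} operation --- fired whenever the committee's sampled estimate of the bad fraction approaches a fixed threshold, and also unconditionally at least once per epoch --- in which every current ID must answer a low-difficulty challenge or be evicted. Since $T$ and $J$ in the statement are lifetime \emph{averages} while the algorithm only ever has local estimates, the whole analysis should be run as a charging/amortization argument and only converted to average rates at the very end.

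For correctness I would argue the \cgoal by induction on committee re-elections: assuming the current committee has a strict good majority, (i) it runs Byzantine consensus in $O(1)$ rounds via Abraham et al.~\cite{abraham2018synchronous}, and (ii) the next committee is drawn from the IDs that have most recently solved entrance puzzles; entrance rate-limiting plus a Chernoff bound makes this pool a $1-o(1)$ good fraction while the \sgoal holds, so the new committee again has a good majority with failure probability $n_0^{-\Omega(1)}$, and a union bound over the $O(n_0^\gamma)$ events preserves this. The \sgoal is then maintained by the two mechanisms jointly: the entrance puzzle caps the adversarial bad-injection rate at $\le T/k_i$, while purges (triggered as the monitored bad fraction nears the threshold, and once per epoch to absorb the $(3/4)$-turnover of good IDs per Definition~\ref{def:epoch}) push the bad fraction back down; $\alpha \le 1/18$ supplies the slack covering the adversary's $\alpha$-fraction puzzle advantage, the $O(1)$-round consensus/election latency (hence the small $\epsilon_a,\epsilon_d$), and the additive error of $\tilde J$ in each of these estimates.

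Before the cost bound I would establish the estimation lemma: $\tilde J$ (from \algEst) is within constant factors of the true current good join rate at all times \whp. This is precisely where A1 and A2 enter --- A1 bounds $\epochRate_{i-1}$ and $\epochRate_i$ within a constant factor, so an estimate formed from epoch $i-1$ stays accurate during epoch $i$, while A2 rules out ``bunching'', so any long-enough sampling window inside an epoch sees $\Theta(\epochRate_i)$ good join events, and a Chernoff bound plus a union bound over the $O(n_0^\gamma)$ epochs finishes it. With $\tilde J$ in hand, the cost analysis proceeds: good IDs spend only on (a) entrance puzzles, at rate $\Theta(J k_i)$; (b) purges; and (c) committee maintenance, which is $O(\poly\, n_0)$ per round and folds into lower-order terms. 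The crux is (b): once-per-epoch purges contribute only $O(J)$ (an epoch has length $\Theta(|\setGood_i|/J)$ and a purge costs $\Theta(|\setGood_i|)$), while each attack-triggered purge is \emph{charged to the adversary} --- to drive the bad fraction to the threshold since the previous purge the adversary must have injected $\Omega(|\setGood_i|)$ bad IDs, i.e.\ spent $\Omega(|\setGood_i|\,k_i)$, whereas that purge costs the good IDs only $O(|\setGood_i|)$, so the attack-purge spending rate is $O(T/k_i)$. Hence with $k_i \equiv k$ the algorithmic rate is $O(Jk + T/k + J)$, and \algGM adaptively tunes $k_i$ --- raising it when recent purges are frequent, lowering it otherwise --- so that, amortized, $k$ tracks the offline optimum $k^\star = \Theta(\max(1,\sqrt{T/J}))$, yielding $O(\sqrt{T(J+1)} + J)$.

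I expect the adaptive-difficulty step to be the main obstacle. The algorithm never observes $T$; it infers attack intensity only from the frequency of triggered purges, and the adversary may spend in bursts --- saturating entrance, then going quiet --- so no per-epoch bound on $k_i$ can be both tight and always valid. The resolution I would pursue is a potential-function argument bounding the \emph{cumulative} good spending directly against $\sqrt{(\text{total adversarial spend})\cdot(\text{total good joins})} + (\text{total good joins})$, showing the running cost never overshoots the target by more than a constant factor regardless of how the adversary phases its spending, and dividing by the lifetime only at the end; a secondary subtlety is checking that the purge-trigger threshold, the sampling error in estimating the bad fraction, and the $\alpha$-slack compose without the constants exceeding $1/18$.
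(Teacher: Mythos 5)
The main gap is in your cost mechanism. You propose a per-epoch flat entrance difficulty $k_i$ and argue the adversary's injection rate is capped at $T/k_i$; this is true, but a flat $k_i$ gives only a \emph{linear} penalty on the adversary --- injecting $b$ bad IDs costs exactly $bk_i$ regardless of how they are timed --- so nothing forces the adversary to pay more for concentrating its joins, and the $\sqrt{TJ}$ balancing you sketch then requires $k_i$ to track $\sqrt{T/J}$, i.e.\ to estimate $T$ on the fly. You correctly flag this as the obstacle and propose an amortized potential-function workaround, but the paper's algorithm never faces it, because it uses a different cost function. In \algGM the entrance cost of each joining ID is the \emph{number of IDs that joined in the last $1/\JoinEst_i$ seconds} --- a sliding-window count, not a committee-set parameter. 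That makes the adversary's spend \emph{quadratic} in its per-window burst size: if $j_x$ bad IDs join in window $x$, they collectively pay $\Omega(j_x^2)$, so by Cauchy--Schwarz (Lemma~\ref{l:joinBad}) $\joinRateBad_i = O\bigl(\sqrt{T_i(\iJRate_i+1)}\bigr)$ falls out directly, with no adaptive tuning and no potential function. Meanwhile good IDs arriving in the same window number only $O(1)$ (by A1/A2 plus the accuracy of $\JoinEst_i$ from Theorem~\ref{t:JoinEst}), so each good ID pays $O(1)$ and good entrance cost is $O(\iJRate_i)$ (Lemma~\ref{l:algCost}). The purge cost is then $O(|S_{i-1}|)$, which is $O(\ell_i)$ times the total churn rate because a purge fires after $\Theta(|S_{i-1}|)$ churn events; summing over iterations with one more Cauchy--Schwarz (Lemma~\ref{l:cost}) gives the theorem.

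A secondary issue is your purge trigger --- ``whenever the committee's sampled estimate of the bad fraction approaches a threshold, and unconditionally once per epoch'' --- which the committee cannot implement, since it has no way to tell which IDs are bad. \algGM instead fires a purge when the observable quantity $n_i^a + n_i^d$ reaches $|\setIDs_{i-1}|/11$, and the bound on the bad fraction then follows deterministically from the algebra of Lemma~\ref{lem:modifiedpop} (specialized in Lemma~\ref{lem:pop-gmcom}); there is no once-per-epoch purge. Your outlines for committee correctness (Chernoff plus union bound) and for the join-rate estimate (A1/A2 plus a sampling-window argument) are in the right spirit and match the shape of Lemmas~\ref{lem:maj_comm-GM} and~\ref{t:JoinEst}.
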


This result is complemented by the following lower bound.  Define a \defn{purge-based} algorithm to be any algorithm where
 (1) IDs pay a cost of $\Omega(1)$ to join; and (2) after a constant fraction of the population changes, all IDs must pay $\Omega(1)$ to remain in the system; otherwise, they are purged.  

The execution of our algorithms,  \AlgB and \algGM,~consists of contiguous sequences of rounds called \defn{iterations}. An iteration that starts with round $s$ ends when the number of join and departure events since round $s$ exceeds a constant fraction of the number of IDs present at the start of round $s$. 

\begin{theorem}\label{thm:main-lower}
For any purge-based algorithm, there is an adversarial strategy ensuring the following for any iteration.  The algorithmic spending rate is $\Omega(\sqrt{\advAveCost\,\joinRate} + \joinRate)$, where $\joinRate$ is the good ID join rate and $\advAveCost$ is the algorithmic spending rate, over the iteration. 
\end{theorem}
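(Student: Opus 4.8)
\medskip
\noindent\textbf{Proof proposal.} The plan is to show that, over any iteration of length $\ell$, the algorithmic spending is $\Omega\!\big(\ell\sqrt{TJ} + \ell J\big)$, where I write $T$ and $J$ for the adversarial spending rate and good-ID join rate of the iteration and let $g$ denote the number of good IDs (which changes by at most a constant factor within one iteration). The spending of the good IDs over the iteration splits into their \emph{join cost} and their \emph{purge cost}, and I will lower bound each and then combine them by AM--GM. Three ingredients drive the argument: (i) a purge must occur during every iteration, costing the good IDs $\Omega(g)$; (ii) the $J\ell$ good joins in the iteration cost $\Omega(J\ell\, d)$ in total, where $d=\Omega(1)$ is the entrance-puzzle difficulty; and (iii) the adversary's total spending during the iteration is $O(gd)$, because the iteration lasts for only $\Theta(g)$ join/departure events, at most $\Theta(g)$ of which can be bad-ID entrances, each costing the adversary $O(d)$ work. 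Granting (i)--(iii), the algorithmic spending is at least $\Omega(g + J\ell d) \ge \Omega\!\big(\sqrt{g\cdot J\ell d}\big)=\Omega\!\big(\sqrt{gd\cdot J\ell}\big)$ by AM--GM; since $gd = \Omega(T\ell)$ by (iii), this is $\Omega\!\big(\sqrt{T\ell\cdot J\ell}\big)=\Omega(\ell\sqrt{TJ})$, and it is also $\ge \Omega(J\ell\, d)\ge \Omega(J\ell)$, so dividing by $\ell$ gives the claimed rate $\Omega(\sqrt{TJ}+J)$.

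For ingredient (i): by the definition of an iteration, a constant fraction of the IDs present at its start join or depart during it, so a constant fraction of the population changes; by property~(2) of a purge-based algorithm this triggers a purge in which every surviving ID pays $\Omega(1)$. Since the system cannot collapse below its $n_0$ good IDs (nor the committee and population invariants fail), $\Omega(g)$ good IDs survive and hence pay, giving purge cost $\Omega(g)$; alternatively a good ID that lets itself be purged must later pay $\Omega(1)$ to re-join, so the cost is charged either way. Ingredient (ii) is immediate from property~(1): each of the $J\ell$ good-ID joins in the iteration solves an entrance puzzle of difficulty $d$. For ingredient (iii), the adversary's role in the strategy is simply to spend its budget on bad-ID entrances (the worst case for the algorithm); the cap follows since a bad entrance is a join event and the iteration terminates once the join/departure count reaches a constant fraction of the starting population $\Theta(g)$ -- hence at most $O(g)$ adversarial entrances fall inside the iteration, each at cost $O(d)$, so $T\ell = O(gd)$. (If the adversary is passive, $T=0$ and the $\Omega(J)$ term already gives the bound.)

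The main obstacle is making (ii) and (iii) robust to an algorithm that varies the entrance difficulty \emph{within} an iteration, possibly reacting to the adversary's flood. I would handle this by partitioning the iteration into sub-intervals on which the difficulty is within a constant factor of some value $d_p$, and on each charging either the good-join cost it contributes (using that good joins are spread through the iteration -- for $\algGM$ this is exactly assumption A2) or the adversarial-entrance cost it admits (using both the event cap and the hashrate cap: the adversary does at most $O(\alpha g\,\ell_p/d_p)$ entrances in a sub-interval of length $\ell_p$), then re-summing. Two further points need checking: that ``$g$'' can be used interchangeably for the number of good IDs, the starting population, and the population at the time of a purge (all within constant factors, since good churn is slow and a purge-based algorithm keeps the bad fraction bounded -- else it is already broken); and that the hashrate cap $T=O(\alpha g)$ does not clash with large $J$, which it cannot, since $J=O(\alpha g)$ fails only when $T<J$, a regime already covered. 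Verifying that these caps compose correctly so that $T\ell = O(gd)$ survives in the time-varying case is, I expect, the crux of the formal proof.
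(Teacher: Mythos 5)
Your proposal is correct and follows essentially the same route as the paper's proof. The paper's two-case split on whether $f(\advRate,\joinRate) \le \advRate/\joinRate$ is exactly your AM--GM applied to the purge term and the good-entrance term, and your cap $T\ell = O(gd)$ is a repackaging of the paper's identity $T = \advRate\, f$ together with the B2 fact that an iteration admits only $\Theta(g)$ join/departure events. The one place you do extra work that the paper does not is the time-varying-difficulty worry: attribute B1 already fixes a single cost function $f(\advRate,\joinRate)$ of the two join rates, so with the adversary joining at a uniform rate (the strategy the paper chooses) $d=f(\advRate,\joinRate)$ is a single number for the whole iteration, and the sub-interval partition you sketch is not needed under the stated hypotheses.
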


\subsubsection{A Corollary for Theorem~\ref{thm:main-upper}}
The following Corollary of Theorem~\ref{thm:main-upper} shows that the spending rate for the algorithm remains small, even when focusing on just a subset of iterations.  To understand why this is important, consider a long-lived system which suffers a single, significant attack for a small number of iterations, after which there are no more attacks. The cost of any defense may be small when amortized over the lifetime of the system, but this does not give a useful guarantee on performance during the time of attack.

Let $\Iters$ be any subset of iterations that for integers $x$ and $y$, $1 \leq x \leq y$, contains every iteration with index between $x$ and $y$ inclusive, and let $L$ be the total length of time of those iterations.  Let $\delta(\Iters)$ be $|S_{x} - S_y|$; and let $\Delta(\Iters)$ be  $\delta(\Iters)$ divided by the length of $\Iters$.

The \defn{adversarial spending rate}, {\boldmath{$\advAveCost_{\mathcal{I}}$}},  is the cost to the adversary for solving puzzles whose solutions are used in any iteration of $\mathcal{I}$ divided by $L$. The \defn{good ID join rate}, {\boldmath{$\joinRate_{\mathcal{I}}$}}, is the number of good IDs that join over the iterations in $\mathcal{I}$ divided by $L$.  Next, the \defn{algorithmic spending rate}  is the total cost to the good IDs for solving puzzles whose solutions are used in any iteration of $\mathcal{I}$ divided by $L$. Then, we state our result formally as:

\begin{corollary}\label{cor:main-upper}
For $\alpha \leq 1/18$, with error probability polynomially small in $n_0$ over the system lifetime, and for any subset of contiguous iterations $\mathcal{I}$, \algGM has an algorithmic spending rate of: 
   $$O\left(\sqrt{\advAveCost_{\mathcal{I}}\,(\joinRate_{\Iters}+1)} + \Delta(\Iters)  + \iJRate_{\Iters}\right)$$. 
\end{corollary}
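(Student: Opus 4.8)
The plan is to re-run the per-iteration cost accounting from the proof of Theorem~\ref{thm:main-upper} (Sections~\ref{sec:estimating} and~\ref{sec:cost-analysis-gm}), restricted to the iterations of $\Iters$, while carefully tracking the extra ``boundary'' contribution that arises because $\Iters$ need not begin at system creation. Recall that in that analysis the cost charged to good IDs during an iteration $i$ splits into (i) an \emph{entrance} term, proportional to the number of good IDs that join during iteration $i$, and (ii) a \emph{purge} term, proportional to $n_i \cdot k_i$, where $n_i$ is the number of good IDs present at the start of iteration $i$ and $k_i$ is the puzzle difficulty \algGM uses in iteration $i$ (itself a function of the running estimate of adversarial spending). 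Summing over the iterations $x \le i \le y$ and dividing by $L$ gives the algorithmic spending rate over $\Iters$, and the two sums are bounded separately.

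The entrance term is immediate: the total number of good joins over $\Iters$ is $\joinRate_{\Iters} L$ by definition, so it contributes $O(\joinRate_{\Iters})$ to the rate, matching the last term of the claimed bound. For the purge term $\sum_{x \le i \le y} n_i k_i$, I would combine two ingredients. First, a counting/telescoping bound: since \algGM closes an iteration as soon as a constant fraction of the IDs present at its start have joined or departed, $n_i$ is within a constant factor of the number of join and departure events inside iteration $i$; summing, $\sum_{x \le i \le y} n_i$ is within a constant factor of the total churn over $\Iters$, which is at most twice the number of joins over $\Iters$ plus the net membership imbalance $\delta(\Iters) = |S_x - S_y|$ (over $\Iters$, departures can exceed joins by at most $\delta(\Iters)$). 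Separating good from bad joins and charging each bad join against the adversarial entrance cost it required yields $\sum_{x \le i \le y} n_i = O\!\big((\joinRate_{\Iters} + \advAveCost_{\Iters} + \Delta(\Iters))\,L\big)$ --- this is precisely where the $\Delta(\Iters)$ term of the Corollary is born. Second, as in the proof of Theorem~\ref{thm:main-upper}, the difficulties $k_i$ are tuned so that $n_i k_i$ is, up to constants, dominated by the adversarial spending in iteration $i$ together with that iteration's good-join cost; feeding this together with the counting bound into the same Cauchy--Schwarz estimate used there yields the $O(\sqrt{\advAveCost_{\Iters}(\joinRate_{\Iters}+1)})$ term (the ``$+1$'' absorbing the degenerate regime where $\joinRate_{\Iters}$ is tiny).

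The main obstacle is exactly the boundary effect. In the proof of Theorem~\ref{thm:main-upper} the recursive difficulty estimates of Section~\ref{sec:estimating} are seeded at iteration~$1$, where the population is $\Theta(n_0)$ and every ID present has already paid its entrance cost; here the population $|S_x|$ at the start of iteration $x$ is, from the point of view of $\Iters$, ``free'' mass that is not charged against $\joinRate_{\Iters}$. The task is to show that this unpaid mass affects the cost over $\Iters$ only through how much of it subsequently turns over --- which is captured by $\delta(\Iters)$ --- and not through its raw size $|S_x|$. Concretely, I would re-derive the estimate recurrences with initial condition at iteration $x$ instead of iteration $1$, invoke assumptions~A1 and~A2 (smoothness and non-bunching of the good join rate) to argue that the estimates re-converge to the correct range within $O(1)$ iterations, and check that the resulting transient contributes only a lower-order additive term. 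An alternative but messier route --- applying Theorem~\ref{thm:main-upper} as a black box to the prefixes $1,\dots,y$ and $1,\dots,x{-}1$ and subtracting --- fails, because the square-root cost bound is not sub-additive and because the cost ``leaking'' from the earlier prefix into $\Iters$ via a large starting population is exactly what must be controlled; hence the direct re-analysis above is the cleaner path.
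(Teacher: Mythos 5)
Your skeleton is right --- sum per-iteration costs over $\Iters$, use the purge trigger to relate $|S_{i-1}|$ to churn in iteration $i$, account for the ``free mass'' at the boundary via $\delta(\Iters)$, and close with Cauchy--Schwarz --- and your rejection of the ``apply Theorem~\ref{thm:main-upper} to two prefixes and subtract'' route is correct. But the proposal misidentifies where the technical work actually lies, in two ways.

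First, there is no transient in the join-rate estimate to control. You describe $\JoinEst_i$ as a recurrence ``seeded at iteration~$1$'' whose quality might drift when the accounting is restarted at iteration~$x$; in fact $\JoinEst_i$ is not a recurrence. The committee continuously tracks $\estSet$ and $\estDur$ against the actual membership history, and Theorem~\ref{t:JoinEst} establishes, as a \emph{per-iteration} property, that $\JoinEst_i$ lies within fixed constant factors of $\iJRate_i$ for every $i\geq 1$, independently of which window one later chooses to analyze. So Lemmas~\ref{l:joinBad} and~\ref{l:algCost} apply verbatim to each $i\in\Iters$, nothing is re-derived, and Corollary~\ref{cor:main-upper} is literally the statement of Lemma~\ref{l:cost} --- the \emph{same} lemma from which Theorem~\ref{thm:main-upper} is read off by noting $\Delta(\Iters)=0$ when $\Iters$ is everything. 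The only new accounting in Lemma~\ref{l:cost} is in bounding $\sum_{i\in\Iters}\depRateTot_i\ell_i$: some of that turnover is of IDs already present at the start of $\Iters$, and it is absorbed into $\delta(\Iters)$. You arrive at the same moral (unpaid mass matters only through its turnover), but the reconvergence argument you propose solves a problem the algorithm does not have.

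Second, you have the roles of the two puzzle types inverted. In \algGM the purge puzzle always has difficulty $1$ (Step~2(b)), so the per-iteration purge contribution is simply $\Theta(|S_{i-1}|)$, not $n_i\cdot k_i$ for some varying $k_i$; it is the \emph{entrance} puzzle whose difficulty varies with recent join volume measured against $\JoinEst_i$. This inversion matters: Lemma~\ref{l:algCost} needs $\JoinEst_i\geq\smallconstant\,\iJRate_i$ to cap the good IDs' \emph{entrance} spend at $O(|S_{i-1}|)$, and Lemma~\ref{l:joinBad}'s $\sqrt{T_i(J^G_i+1)}$ bound on $\joinRateBad_i$ comes from the rising \emph{entrance} cost the adversary must pay for rapid joins, not from a scaled purge. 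Correct those two points and your outline is the paper's proof of Lemma~\ref{l:cost}, from which the Corollary follows directly.
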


\subsection{Empirical Results}  In Section~\ref{sec:experiments}, we empirically evaluate our algorithms and compare their performance to the state of the art. For a variety of networks, our algorithms significantly reduce computational cost. Additionally, we describe heuristics that further improve the performance of our basic algorithms.


\section{Related Work} \label{sec:related-work}

\noindent\textbf{\AlgB and \algGM.} Preliminary results on \AlgB~\cite{pow-without} and \algGM~\cite{Gupta_Saia_Young_2019} appeared previously. In this full version of these results, we make several additional contributions. First, improvements to the design of both algorithms have allowed us to provide a more streamlined presentation, and reduce the complexity of our arguments; the accompanying proofs of correctness and performance are now included in full.   Second, the specification of \algGM contains a significant modification for estimating the join rate of good IDs that corrects an error in our preliminary version~\cite{Gupta_Saia_Young_2019}; the full correctness proofs for our new estimate are now provided  in Section~\ref{sec:estimating}.  Third, for \algGM, we are able to remove an assumption on the behavior of good IDs that was previously required in our preliminary results~\cite{Gupta_Saia_Young_2019}. Finally,  in Section~\ref{sec:experiments}, we present additional experimental results that align with our theoretical guarantees. 


\subsection{Sybil Attack and Defense}

\noindent\textbf{The Sybil attack.} Our result defends against the Sybil attack~\cite{douceur02sybil}.  There is large body of literature on mitigating the Sybil attack (for example, see surveys~\cite{newsome:sybil,mohaisen:sybil,john:soft,Dinger:2006}), and additional work documenting real-world Sybil attacks~\cite{bitcoin-sybil,6503215,Yang:2011:USN:2068816.2068841,Tran:2009:SOC:1558977.1558979}.  {\it Critically, none of these prior results ensure that  computational costs for good IDs grow slowly with the cost incurred by the adversary.}  We note that our algorithms can be characterized as \defn{resource-competitive}~\cite{gilbert:making,gilbert:near,king:conflict,bender:how,ICALP15,daniICJournal17,aggarwal2016secure,gilbert:resource,Bender:2015:RA:2818936.2818949,zamani2017torbricks}, in the sense that the algorithmic cost grows slowly with that of an attacker. However, for simplicity, we omit a discussion of resource-competitive algorithms since it is not critical to understanding our results.  We refer interested readers to the survey~\cite{Bender:2015:RA:2818936.2818949}.

We note that obtaining large numbers of machines is expensive: computing power costs money, whether obtained via Amazon AWS~\cite{amazon-aws} or a botnet rental~\cite{anderson2013measuring}.  This motivates our desire to consider the attacker's cost when designing our algorithm's defense.

\medskip

\noindent\textit{\bf \genID and System Initialization.} In the \genID problem, there is a set of good IDs, and an adversary that wants to impersonate many bad IDs. The adversary has an $\alpha-$fraction of computational power. All good IDs must agree on a set of IDs that contains (1) all good IDs, and (2) at most an $\alpha-$fraction of bad IDs.  We make use of a solution to this problem in Section \ref{sec:genID}. Our algorithms call a solution to \genID only once, at system initialization. A number of algorithms have been found to solve \genID~\cite{andrychowicz2015pow,hou2017randomized,katz2014pseudonymous}.

\medskip
\noindent{\textit {\bf Radio-Resource Testing.}} In a wireless setting with multiple communication channels, Sybil attacks can be mitigated via \emph{radio-resource testing} which relies on the inability of the adversary to listen to many channels simultaneously~\cite{monica:radio,newsome:sybil,gilbert:sybilcast,gilbert:who}.  However, this approach may fail if the adversary can monitor most or all of the channels. Furthermore, radio-resource testing suffers from the same issue discussed above, even in the absence of attack, the system must constantly perform tests. 

\smallskip
\noindent\textit{\bf Social Network Properties.} There are several results that leverage social networks to yield Sybil resistance \cite{yu:survey,lesniewski-laas:whanau,yu:sybilguard,mohaisen:improving,wei:sybildefender,Yu:2010:Sybillimit}. However, social-network information may not be available in many settings. Another idea is to use network measurements to verify the uniqueness of IDs~\cite{bazzi:distinct,sherr:veracity,1550961,liu:mason,Gil-RSS-15,demirbas:rssi}, but these techniques rely on accurate measurements of latency, signal strength, or round-trip times, for example, and this may not always be possible.  Containment strategies are examined in overlays~\cite{danezis:sybil,scheideler:shell}, but the extent to which good participants are protected from the malicious actions of Sybil IDs is limited.

\smallskip
\noindent\textit{\bf Proof of Work and Alternatives.} As a choice for~\POW, computational puzzles provide certain advantages. First, verifying a solution is much easier than solving the puzzle itself. This places the burden of proof on devices who wish to participate in a protocol rather than on a verifier. 

In contrast, bandwidth-oriented schemes, such as~\cite{walfish2010ddos}, require verification that a sufficient number of packets have been received before any service is provided to an ID; this requires effort by the verifier that is proportional to the number of packets. 

A recent alternative to \POW is \defn{proof-of-stake (PoS)} where security relies on the adversary holding a minority stake in an abstract finite resource~\cite{abraham:blockchain}. When making a group decision, PoS weights each participant's vote using its share of a limited resource; for example, the amount of cryptocurrency held by the participant.  A well-known example is ALGORAND~\cite{GiladHMVZ17}, which employs PoS to form a committee.   A hybrid approach using both \POW and PoS has been proposed in the Ethereum system~\cite{ethereum-pos}.  We note that PoS can only be used in systems where the ``stake" of each participant is globally known.  Thus, PoS is typically used only in cryptocurrency applications.


\subsection{Motivation for our \sgoal}


Our algorithms guarantee that the fraction of bad IDs is always less than any fixed constant.  This is in contrast to a popular model where the fraction of bad IDs is {\it assumed} to be bounded below some fixed constant at all times. This latter model features prominently in many areas including Byzantine-resilient replication~\cite{castro1998practical,castro:byzantine,cowling:hq,kotla2007zyzzyva,cachin:secure}, secure peer-to-peer networks~\cite{rodrigues:rosebud,fiat:making,awerbuch:towards,young:towards,guerraoui:highly}, reliable broadcast~\cite{koo,bhandari,koo2,king:sleeping_journal}, and Byzantine consensus~\cite{Kapron:2008:FAB:1347082.1347196,Katz:2009:ECP:1486275.1486420,lamport1982byzantine}.  Thus, our \sgoal can be viewed as a tool enabling these past results to run correctly even under Sybil attack.

      
\section{Commensurate COMputation (\AlgB)}\label{sec:server1} 
In this section, we describe and analyze our first algorithm \AlgB. 

\subsection{GenID}\label{sec:genID}
To initialize our system, we make use of an algorithm created by Andrychowicz and Dziembowski~\cite{andrychowicz2015pow}, which we call \defn{\genID}.  This algorithm creates an initial set of IDs such that no more than an $\alpha$-fraction are bad. \genID  also selects a committee of logarithmic size that has a majority of good IDs.  Finally, \genID has significant, but polynomial, computational cost; thus we use it only once during the lifetime of our system.

\subsection{Constructing Computational Puzzles}\label{subsec:puzzle-construction}

We assume that each good ID can perform $\mu$ hash-function evaluations per round for $\mu>0$. Additionally, we assume that $\mu$ is of some size polynomial in $n_0$ so that $\log \mu = \Theta(\log n_0)$.  It is reasonable to assume large $\mu$ since, in practice, the number of evaluations that can be performed per second is on the order of millions to low billions~\cite{hashcat,mining,non-spec}.

Our technical puzzle generation process follows that of~\cite{andrychowicz2015pow}.   For any integer $h \geq 1$, we define a {\boldmath{$h$}}\defn{-hard puzzle} to consist of finding $C\log \mu$ solutions using a threshold of $h(1-\delta)\mu / (C \log \mu)$, where $\delta>0$ is a small constant and $C$ is a sufficiently large constant depending on $\delta$ and $\mu$.

Let $X$ be a random variable (r.v.) giving the expected number of hash evaluations needed to compute $C \log \mu$ solutions.  Note that $X$ is a negative binomial r.v. and so, the following concentration bound holds (see, for example, Lemma 2.2 in~\cite{awerbuch:towards}) for every $0 < \epsilon \leq 1$:
$$Pr(|X - E(X)| \geq \epsilon E(X)) \leq 2e^{- \epsilon^2 (C \log \mu) / (2(1+\epsilon))}$$

Given the above, one can show that every good ID will solve a $h$-hard puzzle with at most $h\mu$ hash function evaluations, and that the adversary must compute at least $(1-2\delta)h\mu$ hash evaluations to solve every $h$-hard puzzle. This follows from a union bound over $O(n_0^{\gamma})$ join and departure events by IDs, where $\gamma$ is any fixed positive constant, and for $C$ being a sufficiently large constant depending on $\delta, \mu$ and $\gamma$. Note that for small $\delta$, the difference in computational cost is negligible, and that $\mu$ is also unnecessary in comparing costs.  Thus, for ease of exposition, we assume that each {\boldmath{$h$}}-hard puzzle requires computational cost {\boldmath{$h$}} to solve. 

\subsection{How Puzzles Are Used}\label{sec:how-used}  
Although, each puzzle is constructed in the same manner, they are used in two distinct ways by our algorithm. First, when a new ID wishes to join the system, it must provide a solution for a $1$-round puzzle; this is referred to as  \defn{\epuzzle}. Here, the input to the puzzle is {\boldmath{$K_v || s || \tstamp$}}, where {\boldmath{$\tstamp$}} is the timestamp of when the puzzle solution was generated. In order to be verified, the value $\tstamp$ in the solution to an entrance puzzle must be within some small margin of the current time.  In practice, this margin would primarily depend on network latency. 

We note that, in the case of a bad ID, this solution may have been precomputed by the adversary by using a future timestamp. This is not a problem since the purpose of this puzzle is only to force the adversary to incur a computational cost at some point, and to deter the adversary from reusing puzzle solutions. Importantly, the \epuzzle is not used to preserve guarantees on the fraction of good IDs in the system.

The second way in which puzzles are used is to limit the fraction of bad IDs in the system; this is referred to as a \defn{\ppuzzle}.  An announcement is periodically made that \emph{all} IDs {\it already in the system} should solve a 1-round puzzle. When this occurs, a \defn{random string} {\boldmath{$r$}} of $\Theta(\log n_0^{\gamma})$ bits is generated and included as part of the announcement. The value $r$ must also be appended to the inputs for all requested solutions in this round; that is, the input is {\boldmath{$K_v || s || r$}}.  These random bits ensure that the adversary cannot engage in a pre-computation attack~\textemdash~where it solves puzzles and stores the solutions far in advance of launching an attack~\textemdash~by keeping the puzzles unpredictable. For ease of exposition, we omit further discussion of this issue and consider the use of these random bits as implicit whenever a \ppuzzle is issued. 

While the same $r$ is used in the puzzle construction for all IDs, we emphasize that a {\it different} puzzle is assigned to each ID since the public key used in the construction is unique. Again, this is only of importance to the second way in which puzzles are used. Using the public key in the puzzle construction also prevents puzzle solutions from being stolen. That is, ID $K_v$ cannot lay claim to a solution found by ID $K_w$ since the solution is tied to the public key $K_w$. 

Can a message $m_v$ from ID $K_v$ be spoofed? This is prevented in the following manner.  ID $K_v$ signs $m_v$ with its private key to get signature $\texttt{sign}_v$, and then sends $(m_v || \texttt{sign}_v || K_v )$ via \Diffuse. Any other ID can use $K_v$ to check that the message was signed by the ID $K_v$ and thus be assured that ID $K_v$ is the sender.


\begin{figure}[h!]
\begin{tcolorbox}[standard jigsaw, opacityback=0]

\begin{minipage}[h]{0.94\linewidth}
\noindent{\bf \large Algorithm \AlgB}\smallskip

\vspace{5pt}\noindent{\bf Key Variables:}\\
\noindent{}{$i$ : iteration number}\\
\hspace{-7pt}\noindent{}{$n_i^a$: number of IDs joining since beginning of iteration $i$.}\\
\hspace{-7pt}\noindent{}{$n_i^d$: number of IDs departing since beginning of iteration $i$.}\\
\hspace{-7pt}\noindent{}{$\Old$: set of IDs at the end of iteration $i$.}\\
\hspace{-7pt}\noindent{}{$\curIDs$: current set of IDs}.\\
\hspace{-7pt}\noindent{}{$r$: current random seed}.\medskip

\noindent{\bf Initialization:}\\
\noindent $\setIDs_0 \leftarrow$ set of IDs returned by initial call to \genID. The initial committee is also selected by \genID.\\ 
\noindent $i \leftarrow 1$.\\

\vspace{-5pt}
\hspace{0pt}\noindent The committee maintains all variables above and makes all decisions using Byzantine Consensus:
\medskip
\begin{enumerate}[leftmargin=15pt]

	\item Each joining ID, $v$, generates a public key; solves an \epuzzle using that key and the current timestamp $\tstamp$ and broadcasts the solution via \Diffuse. Upon verifying the solution, the committee adds $v$ to $\curIDs$.\medskip

\item If $n_i^a + n_i^d  \geq |\mathcal{S}_{i-1}|/3$, then the current committee does the following:\label{s:testStep}\smallskip
		
	\textbf{Perform Purge:}
	\begin{itemize}[leftmargin=17pt]
		\item[(a)] Generates a random string $r$ and broadcast via \Diffuse.
		\item[(b)] Sets $\Old$ and $\curIDs$ to the set of IDs that return valid solutions.
		\item[(c)] Selects a new committee of size $\Theta(\log n_0)$ from $\Old$ and sends out this information via \Diffuse.\smallskip
	\end{itemize}

\item $i \leftarrow i + 1$\smallskip
		
\end{enumerate}
\end{minipage}

\end{tcolorbox}
\caption{Pseudocode for \textsc{Commensurate COMputation (\AlgB)}.}
\label{alg:ccom}
\end{figure}


\subsection{Overview of \AlgB}\label{sec:overview}

We begin by describing  \emph{{\underline{C}}ommensurate} \emph{{\underline{Com}}put- ation} (\AlgB). The execution of \AlgB~is conceptually broken into sets of consecutive rounds called \defn{iterations}. Each iteration consists of  Steps 1 and 2, and each iteration ends with the execution of Step 2(d). 

Notation is introduced as needed.  Throughout, we let $\log$ denote the natural logarithm.

\subsubsection{Updating and Maintaining System Membership}\label{sec:sys-mem}

A key component of our system is a subset of IDs called a \qcomm for which the \cgoal must hold; recall Section~\ref{sec:our-problems}.  The \qcomm tracks the current membership in the system denoted by the set {\boldmath{$\curIDs$}}. Updates to this set are performed in the following manner. Each ID that wishes to join the system must solve an \epuzzle. Each good ID in the \qcomm will check the solution to the \epuzzle and, upon verification, the joining ID is allowed into the system. 

Verification of a puzzle solution requires checking that (1) all $C\log \mu$ inputs to $h$ submitted generate an output that is at most $(C\log \mu)/((1-\delta)\mu)$; and (2) each of these inputs contains the string $r$ and also the individual public key of the ID. Upon verification, the good IDs in the committee solve Byzantine Consensus in order to agree on the contents of $\curIDs$.  Those IDs that fail to submit valid puzzle solutions are denied entrance into the system.

Similarly,  $\curIDs$ is also updated and agreed upon when a good ID informs the server that it is departing. Of course, bad IDs may not provide such a notification and, therefore, $\curIDs$ is not necessarily accurate at all times. Finally, if we wish to achieve only the Committee Goal, then the last portion of Line 1 is omitted: there is no need for the committee to verify solutions or maintain $\curIDs$.

At the beginning of the system, the \qcomm knows the existing membership denoted by $S_0$; assume {\boldmath{$|S_0|=n_0$}} initially.  At some point, $n_i^a + n_i^d \geq |S_0|/3$, where {\boldmath{$n_i^a$}} and {\boldmath{$n_i^d$}} denote the number of join events and departure events over iteration $i$; note that if the same ID joins (and departs) more than once, each such event increases $n_i^a$ and  $n_i^d$. 

At this point, Step 2 executes, whereby all IDs are issued a  purge puzzle and each ID must respond with a valid solution within $1$ round. The issuing of these \ppuzzle{}s is performed by the \qcomm via the diffusing of $r$ to all IDs; this random string is created via solving Byzantine Consensus (for example, the protocol in~\cite{Katz:2009:ECP:1486275.1486420}) in order to overcome malicious behavior by bad IDs that are committee members.\footnote{Agreeing on each random bit of $r$ will suffice. Alternatively, a secure multiparty protocol for generating random values can be used; for example, the result in~\cite{srinathan_pandu_rangan:efficient}.} 

Recall from Section~\ref{sec:model-main} that a round is of sufficient length that the computation time to solve a $1$-round puzzle dominates the round trip communication time between the client and server (we remove this assumption later in Section~\ref{sec:bounded-latency}). Thus, the good IDs can receive solutions to purge puzzles in a round, and then update and agree on $\curIDs$ at the end of the iteration.

\begin{figure*}[t!]
\centering
\includegraphics[width = \textwidth]{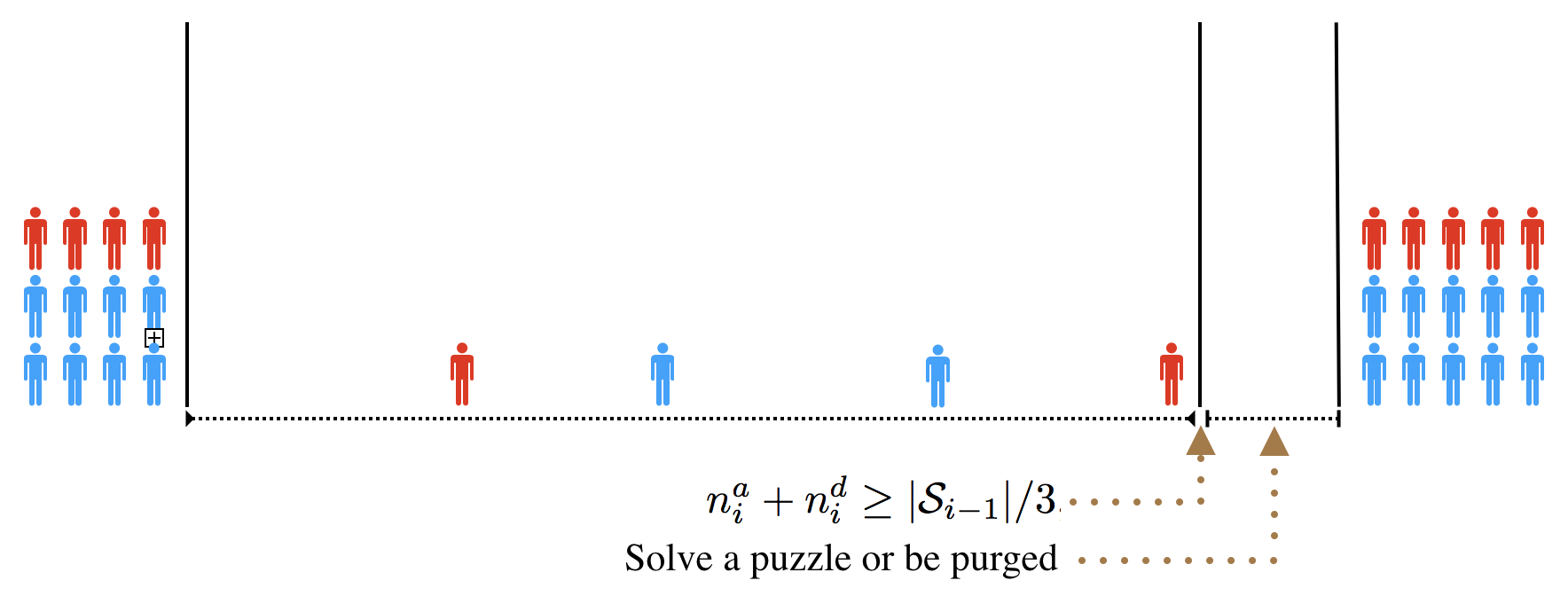}
\vspace{-0.5cm}\caption{An overview of our approach. The system starts with $x=12$ IDs of which $4$ are bad and $8$ are good. When the $(x/3)^{\mbox{\tiny th}} = 4^{\mbox{\tiny th}}$ ID joins, a purge is immediately executed and all IDs must solve a \ppuzzle or be ejected from the system.}
  \label{f:overview}  
  \vspace{-20pt}
\end{figure*}

Figure~\ref{f:overview} illustrates this purging process;  good IDs are blue and bad IDs are red.   Let $x$ be the number of IDs after the most recent purge; at most $x/3$ of these IDs are bad, since $\alpha=1/3$. According to Step 2 in \AlgB, the next purge occurs immediately when the number of new IDs entering would become greater than or equal to $x/3$.  At the point when a purge is triggered, the number of new IDs that have been allowed to enter is less than $x/3$.  Thus, the total number of bad IDs is less than $\frac{2}{3}x$, and the total number of good IDs is at least $\frac{2}{3}x$. 

How is $r$ sent such that each good ID trusts it came from the committee? Recall that a  participant $v$ that joins the system generates a public key, $K_v$, and the corresponding private key, $k_v$; the public key is used as its ID. If ID $K_v$ is a committee member, it will sign the random string $r$ using its private key $k_v$ to obtain $\texttt{sign}_v$. Then, $(r || \texttt{sign}_v || K_v)$ is sent using \Diffuse.  Any good ID can verify $\texttt{sign}_v$ via $K_v$ to ensure it returns $r$. This implicitly occurs in Step 2 of Figure~\ref{alg:ccom}, but we omit the details in the pseudocode for simplicity.

\subsubsection{Updating and Maintaining Committee Membership}\label{sec:com-mem}

Recall from Section \ref{sec:model-main} that good IDs always inform all other IDs of their departure. Thus, a good ID $K_v$ that is a committee member will inform other committee members of its departure. Since bad IDs may not inform of their departure, our estimate of the number of IDs that join and leave will not be accurate but this will not jeopardize the committee goal.

The committee maintenance operations occur in Step 2 of \AlgB.  In Step 2(a), the committee generates and diffuses a random string $r$; recall the details discussed in Section~\ref{sec:how-used}.

In Step 2(b), each ID must respond with a valid solution within $1$ round, or else be removed from the system. The committee removes unresponsive or late IDs from its whitelist, which is maintained via Byzantine consensus amongst the committee members. Over the lifetime of the system, the \qcomm~should always satisfy the \cgoal. As joins and departures occur, this invariant may become endangered. To address this issue, \qcomm{}s are disbanded and rebuilt over time. In particular,  a new \qcomm is completed and ready to be used by the start of each successive iteration, and the currently-used \qcomm is disbanded. The construction of a new \qcomm occurs in Step 2(c).

In Step 2(c), the current committee then selects $\Theta(\log n_0)$ IDs uniformly at random from  $\setIDs_{i}$. This selection is done via a committee election protocol, for example~\cite{Katz:2009:ECP:1486275.1486420, Kapron:2008:FAB:1347082.1347196}. The committee uses \Diffuse to inform $\setIDs_{i}$ that the selected IDs are the new committee for iteration $i+1$.  All messages from committee members are verified via public key digital signatures.

Given that the adversary has a minority of the computational power, we expect the newly-formed committee to have a good majority.  Byzantine consensus allows for agreement on the members of the new committee, and this is communicated to all good IDs in the network via \Diffuse.  These two properties satisfy the other two criteria of the \cgoal; this is argued formally in Section~\ref{sec:committee-goodness}.

Over the round during which this construction is taking place, the existing committee still satisfies the \cgoal; this is proved in Lemma~\ref{lem:maj_comm}. Once the new committee members are elected, they are known to all good IDs in the system. Those committee members that were present prior to this construction are no longer recognized by the good IDs. Finally, any messages that are received after this single round are considered late and discarded.


\setlength\extrarowheight{2pt}
\newcolumntype{P}[1]{>{\centering\arraybackslash}p{#1}}
\begin{table}[t!]
  \begin{center}
    \vspace{-5pt}
    \label{tab:notation1}
    {\def\arraystretch{1.1}
    {
\begin{tabular}{ |P{1.5cm}|p{10.5cm}|  }
   \hline
 \multicolumn{2}{|c|}{{\bf Used in Model and Algorithm}}\\
     \hline
      \textbf{Symbol} &   \textbf{Description}\\
    \hline
      $\alpha$ & Fraction of total computational power that the adversary controls. \\
            \hline
       $n_0$ & Lower bound on number of good IDs in the system at any time.\\ 
      \hline
      $\curIDs$ & Set of all IDs in the system at current time.\\
            \hline
            $\setIDs_i$ & Set of all IDs in the system at the end of iteration $i$. Equivalently, the set at the beginning of iteration $i+1$. \\ 
            \hline
                  $\ell_i$ & Length of iteration $i$.\\
      \hline
    \end{tabular}
    }} 
    \vspace{10pt}
    \caption{A summary of notation for \AlgB}
  \end{center}
\end{table}


\subsection{Analysis of \AlgB}

To simplify our presentation, our claims are proved to hold with probability at least $1-\tilde{O}(1/n_0^{\gamma + 2})$, where $\tilde{O}$ hides a poly($\log n_0$) factor.  Of course, we wish the claims of Theorem~\ref{thm:main1} to hold with probability at least $1- 1/n_0^{\gamma+1}$ such that a union bound over $n_0^{\gamma}$ joins and departures yields a w.h.p. guarantee.  By providing this ``slack'' of an $\tilde{\Omega}(1/n_0)$-factor in each of the guarantees of this section, we demonstrate this is feasible while avoiding an analysis cluttered with specific settings for the constants used in our arguments.

For any iteration $i$, let {\boldmath{$G_i$}} and {\boldmath{$B_i$}} denote the set of good IDs and bad IDs, respectively at the end of iteration $i$.

\subsubsection{Analysis of the \cgoal}\label{sec:committee-goodness}

In this section, we prove that \whp~\AlgB~preserves the~\cgoal over a polynomial number of join and leave events.

\begin{lemma}\label{lem:maj_comm}
With probability $1 - O(1/n_0)$, over a polynomial number of join and departure events,  there is always an honest majority in the committee. \end{lemma}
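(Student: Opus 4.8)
The plan is to argue by induction over iterations that each committee that is constructed has an honest majority, and then apply a union bound over the polynomially-many iterations that occur during the system lifetime. The base case is the initial committee: by the guarantee of \genID (Section~\ref{sec:genID}), the starting committee has a majority of good IDs. For the inductive step, assume the committee used during iteration $i$ has an honest majority; I want to show that the committee elected in Step 2(c) for iteration $i+1$ also has an honest majority, with failure probability small enough to survive the union bound.

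The key steps are as follows. First, I would establish a population-level bound: at the moment a new committee is elected from $\setIDs_i$, the fraction of bad IDs in $\setIDs_i$ is strictly less than $1/2$ — in fact bounded away from $1/2$ by a constant. This is exactly the counting argument sketched around Figure~\ref{f:overview}: immediately after the previous purge at most an $\alpha$-fraction of the (then) population was bad, and a purge is triggered as soon as the number of join/departure events reaches a $1/3$ fraction of that population, so the bad IDs number less than $\tfrac{2}{3}x$ while good IDs number at least $\tfrac{2}{3}x$ out of a current total less than $\tfrac{4}{3}x$; hence the bad fraction is less than $1/2$ (with room to spare, using $\alpha<1/6$). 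Crucially this uses that purge puzzles force every ID — good or bad — to pay, so bad IDs cannot survive a purge for free, together with the fact that the adversary controls only an $\alpha$-fraction of hashing power and the random string $r$ prevents precomputation. Second, since the new committee is a set of $\Theta(\log n_0)$ IDs chosen uniformly at random from $\setIDs_i$, a Chernoff bound gives that the number of bad IDs among the $c\log n_0$ sampled is, with probability $1-n_0^{-\Omega(c)}$, below $1/2$ of the committee; choosing the hidden constant $c$ large enough makes this failure probability at most $\tilde O(1/n_0^{\gamma+2})$ per iteration. Third, I must handle the "handover" round: during the single round in which the new committee is being elected and agreed upon via Byzantine consensus, the \emph{old} committee is still the operative one, and by the inductive hypothesis it has an honest majority, so the consensus instance used to agree on the membership of the new committee (and on the random string $r$) succeeds and all good IDs learn the same correct new committee. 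Finally, I take a union bound over the $O(n_0^{\gamma})$ iterations, which yields the claimed $1-O(1/n_0)$ success probability (the extra $\tilde\Omega(1/n_0)$ slack absorbs the polylog and constant factors).

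I expect the main obstacle to be the second step done carefully: one has to be sure that the set $\setIDs_i$ from which the committee is sampled genuinely has a bounded-away-from-$1/2$ bad fraction \emph{at the time of sampling}, and that this holds not just in expectation but deterministically given that the previous purge succeeded. This in turn relies on a clean accounting of joins and departures within an iteration (the $n_i^a, n_i^d$ bookkeeping) and on the puzzle-soundness facts from Section~\ref{subsec:puzzle-construction} — namely that a good ID always solves its $1$-round puzzle in time while the adversary, with only an $\alpha$-fraction of hash power and facing a fresh random $r$, cannot solve more than roughly an $\alpha$-fraction of the per-ID purge puzzles within the round. A secondary subtlety is circularity: the population bound implicitly assumes earlier purges were conducted correctly, which is why the argument must be phrased as a single joint induction (committee-goodness at iteration $i$ $\Rightarrow$ correct purge and correct election $\Rightarrow$ committee-goodness at iteration $i+1$) rather than proving the two invariants independently.
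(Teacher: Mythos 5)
Your proposal correctly handles the base case (\genID), the Chernoff sampling argument, the handover round via the inductive hypothesis on the old committee, and the union bound. But there is a genuine gap: you only establish that the committee has an honest majority \emph{at the moment of its formation}. The lemma claims there is \emph{always} an honest majority, and in the model good IDs continue to depart throughout the iteration. Since departing good IDs are drawn uniformly at random from the good population, and an iteration lasts until a $1/3$-fraction of $|S_{i-1}|$ IDs have joined or departed, the committee can lose on the order of $\tfrac{c}{3}\log|S_i|$ good members before the next purge (while the adversary can choose to keep every bad committee member alive). The paper's proof introduces a second random variable $Y_g$ counting good committee members that depart, bounds it by Chernoff, and subtracts it from the initial good count $X_G$; this is precisely why the bound $\alpha < 1/6$ appears (you need $1 - \alpha - 1/3 - o(1) > 1/2$), whereas your argument would only need $\alpha$ bounded away from $1/2$. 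Your proof as written would wrongly certify the lemma for, say, $\alpha = 0.4$, where attrition of good members over an iteration can in fact destroy the honest majority.

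A secondary, smaller issue: your population-level premise for the sampling step is also misaimed. You invoke the mid-iteration bound (bad fraction less than $1/2$, cf.\ Figure~2 and Lemma~\ref{lem:badlesshalf-full}), but the new committee is selected from $\setIDs_i$ \emph{immediately after the purge} (Step 2(c) follows Step 2(b)), where the bad fraction is at most $\alpha$ (Lemma~\ref{lem:bound_b}). The paper uses exactly this: $E[X_G] = (1-\alpha)c\log|S_i|$. Using the weaker $< 1/2$ bound would leave you with essentially no slack to absorb the departure losses once you add them in; you need the much stronger $\alpha$-fraction bound at formation for the subsequent $1/3$ attrition to still leave a majority.

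Related but less central: you also omit the $\epsilon_d$-fraction of good departures that can occur during the one-round purge itself, which the paper folds into the final constant budget. With the departure accounting added, the constant budget is exactly where $\alpha < 1/6$ (rather than any weaker requirement) is consumed.
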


\begin{proof}
For iteration  $i = 0$, the committee invariant holds by the correctness of \genID
(recall Section~\ref{sec:genID}).
	
Fix an iteration $i > 0$. Recall that a new committee is elected by the existing committee by selecting $c\log |S_i|$ IDs independently and uniformly at random from the set $S_i$, for a sufficiently large constant $c>0$ which we define concretely later on in this proof. Let $X_G$ be a random variable which denotes the number of good IDs elected to the new committee in iteration $i$. Then:
	\begin{eqnarray}\label{eq:expgc}
		E[X_G] & = & \frac{|G_i|}{|S_i|}c\log{|S_i|} \text{ } = \text{ } (1-\alpha)c\log|S_i| 
	\end{eqnarray}
	 where the last inequality follows from the fact that the fraction of computational power with the adversary is at most $\alpha$. Next, we bound the number of good IDs in the committee using Chernoff Bounds~\cite{mitzenmacher2017probability}:
	 \begin{eqnarray*}
	 	Pr \left( X_G < (1-\delta)(1-\alpha)c\log|S_i| \right) \leq  \exp{\left\{ -\frac{\delta^2 (1-\alpha) c\log|S_i|}{2} \right\}} =  O\left({n_0^{-(\gamma + 1)}}\right)
	 \end{eqnarray*}
 where the first step holds for any constant $0 < \delta < 1$, the second step follows from Equation \ref{eq:expgc}, and the last step holds for all $c \geq \frac{28}{13}\frac{(\gamma + 1)}{\delta^2}$.  For any fixed, constant $\epsilon_0>0$, there exists a $\delta$, such that the number of good IDs in the committee is at least $(1- \alpha - \epsilon_0) c\log{|S_i|}$ with probability  $1-O(n_0^{-(\gamma + 1)})$.
 	 	
Next we bound departures of good IDs from the committee over an iteration. Let $Y_g$ be a random variable which denotes the number of good IDs that depart from the committee during the iteration, i.e. up to the point when the number of departures of IDs from the system is $|S_{i-1}|/3$.  Departing good IDs are assumed to be selected uniformly at random from the population of good IDs (Section~\ref{sec:join}). Thus, we obtain:
		\begin{eqnarray}\label{eq:dgoodc}
			E[Y_g] & \leq & \frac{|S_i|}{3}\left(\frac{c\log|S_i|}{|S_i|}\right) \text{ } = \text{ } \frac{c}{3} \log|S_i|
		\end{eqnarray}  
		Next, we  upper bound  the number of departures of good IDs from the committee using Chernoff Bounds~\cite{mitzenmacher2017probability}:
		\begin{eqnarray*}
			Pr \left( Y_g > {(1+\delta')} \frac{c \log|S_i|}{3} \right) &\leq&  \exp\left\{ -\frac{\delta'^2c}{9}\log|S_i| \right\} = O\left( {n_0^{-(\gamma+1)}} \right)
		\end{eqnarray*}
		where the first step holds for any constant $0< \delta' < 1$ and the last step holds for all $c \geq \frac{9(\gamma + 1)}{\delta'^2}$. 

Thus, for any fixed $\epsilon_1>0$, the minimum number of good IDs in the committee at any point during the iteration is greater than $(1 - \alpha - 1/3 - \epsilon_0 - \epsilon_1) c\log{|S_i|}$. Good IDs can also depart while the committee is running the single-round purge at the end of iteration.  But, by assumption, at most an $\epsilon_d$-fraction of good IDs depart per round, where $\epsilon_d$ is sufficiently small. Therefore,  the number of good IDs exceeds $(1 - \alpha - 1/3 - \epsilon_0 - \epsilon_1 - \epsilon_d) c\log{|S_i|}$. Choosing $\alpha < 1/6$ ensures that we can choose $\epsilon_0$,  $\epsilon_1$, and $\epsilon_d$ such that the fraction of good IDs is always strictly greater than $1/2$.
Finally, by a union bound over $n_0^\gamma$ iterations, the committee invariant is maintained over $n_0^\gamma$ iterations with probability $1 - O(1/n_0)$, which implies the claim.
\end{proof}


\subsubsection{Analysis of the \sgoal}\label{sec:full-dynamism}

\begin{lemma}\label{lem:bound_b}
For all iterations $i \geq 0$,  $|B_i| \leq \alpha |S_{{\tiny i}}|$.
\end{lemma}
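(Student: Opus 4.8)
The plan is to exploit the fact that, for every $i\ge 1$, the set $S_i$ is by definition exactly the set of IDs that submit a valid solution to the purge puzzle issued in Step~2 at the end of iteration $i$, and then to bound the number of \emph{bad} survivors by the adversary's computational power over the single round allotted to solve that puzzle. For the base case $i=0$ there is no purge, and $|B_0|\le\alpha|S_0|$ holds directly by the correctness of \genID (Section~\ref{sec:genID}), which guarantees at most an $\alpha$-fraction of $S_0$ is bad.

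For $i\ge 1$, I would first condition on the event of Lemma~\ref{lem:maj_comm} that the committee running during iteration $i$ has an honest majority (this holds over all polynomially many iterations with probability $1-O(1/n_0)$). Under this event, the committee correctly uses Byzantine consensus to generate a fresh random string $r$ of $\Theta(\log n_0^{\gamma})$ bits and to diffuse it signed, so every good ID accepts it (Section~\ref{sec:how-used}), and it correctly records which IDs return a valid purge-puzzle solution within one round. Every good ID present receives $r$ via \Diffuse and, following the protocol, computes a valid solution to its $1$-round purge puzzle within that round using at most $\mu$ hash evaluations; this holds w.h.p.\ by the concentration bound of Section~\ref{subsec:puzzle-construction}, and a union bound over all IDs and iterations absorbs the loss. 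Hence $G_i\subseteq S_i$, and the good IDs collectively produce $|G_i|$ distinct valid purge-puzzle solutions in that round.

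Next I would bound $|B_i|$. Because $r$ is revealed only at the start of the purge and each ID's puzzle input is prefixed with its own public key $K_v$, the random-oracle assumption implies the adversary can neither pre-compute a purge solution nor reuse a good ID's solution; it must, within the single allotted round, find a fresh solution for each bad ID it wishes to retain in the system. Since the adversary controls only an $\alpha$-fraction of the total computational power, of all puzzle solutions produced in that round it produces at most an $\alpha$-fraction; writing $b=|B_i|$ and recalling the good IDs produce $|G_i|$, this gives $b\le\alpha(|G_i|+b)$, i.e.\ $|B_i|\le\tfrac{\alpha}{1-\alpha}|G_i|$. Substituting $|G_i|=|S_i|-|B_i|$ and rearranging yields $(1-\alpha)|B_i|\le\alpha(|S_i|-|B_i|)$, hence $|B_i|\le\alpha|S_i|$. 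A union bound over the polynomially many iterations preserves the overall error probability.

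The step I expect to be the main obstacle is this last counting argument: one must be careful that the ``good computational power'' during the purge round is correctly identified with $|G_i|$ — good IDs present at the start of the round but departing during it are handled by the $\epsilon_d$-per-round departure bound of Section~\ref{sec:join}, whose slack is folded into the constants — and that no bad ID can appear in $S_i$ without a freshly computed, public-key-bound solution. Everything else follows directly from \genID, Lemma~\ref{lem:maj_comm}, and the puzzle construction.
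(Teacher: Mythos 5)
Your proof is correct and follows the same approach as the paper, which compresses the argument to a single line (``after Step 2 that ends iteration $i$ we have $|B_i|\le\alpha|S_i|$''); you have simply filled in the mechanics (committee correctness, unforgeability of purge solutions via the fresh random string and public-key binding, and the $\alpha$-fraction computational-power counting) that the paper leaves implicit. One small stylistic point: the detour through $|B_i|\le\frac{\alpha}{1-\alpha}|G_i|$ is unnecessary, since $b\le\alpha(|G_i|+b)=\alpha|S_i|$ gives the conclusion directly.
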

\begin{proof}
This is true by the properties of \genID for $i=0$.  Then after Step 2 that ends iteration $i$ for $i>0$, we have $|B_i| \leq \alpha |S_i|$.
\end{proof}

Let {\boldmath{$g_i^a$}} and {\boldmath{$b_i^a$}} denote the good and bad IDs that join over iteration $i$, respectively; and let {\boldmath{$n_i^a$}} $= g_i^a + b_i^a$. Similarly, let  {\boldmath{$g_i^d$}} and {\boldmath{$b_i^d$}} denote the good and bad IDs that depart over iteration $i$, respectively;  and let {\boldmath{$n_i^d$}} $=  g_i^d + b_i^d$. Note that the \qcomm will have an accurate value for all of these variables except for possibly $b_i^d$ and, consequently,  $n_i^d$, since bad IDs may not give notification when they depart. Therefore, for $b_i^d$ and  $n_i^d$, the \qcomm may hold values which are underestimates of the true values. However, we highlight that, in our argument below, this is not a problem for maintaining the Population Invariant.

\begin{lemma}\label{lem:badlesshalf-full}
The fraction of bad IDs is always less than $3/8$. 
\end{lemma}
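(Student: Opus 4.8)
The plan is to track the fraction of bad IDs over the course of a single iteration, using Lemma~\ref{lem:bound_b} as the base case at the start of the iteration and bounding how much the ratio can degrade before the next purge is triggered. Let $i > 0$ be an iteration, so that at its start the membership is $S_{i-1}$ with $|B_{i-1}| \le \alpha |S_{i-1}|$ by Lemma~\ref{lem:bound_b}. By the rule in Step~2 of \AlgB, the iteration continues only while $n_i^a + n_i^d < |S_{i-1}|/3$; in particular the number of joins satisfies $n_i^a < |S_{i-1}|/3$ at every point strictly before the purge that ends the iteration. Since bad IDs entering the system must solve an \epuzzle and be admitted by the committee exactly as good IDs are, every new bad ID counts toward $n_i^a$, so the number of bad IDs present at any time during the iteration is at most $|B_{i-1}| + n_i^a < \alpha|S_{i-1}| + |S_{i-1}|/3$.

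First I would lower bound the total number of IDs present at any such time. Good IDs may depart, but departures also count toward $n_i^a + n_i^d$, so the number of good departures during the iteration is also less than $|S_{i-1}|/3$; hence the number of good IDs present is at least $|G_{i-1}| - |S_{i-1}|/3 \ge (1-\alpha)|S_{i-1}| - |S_{i-1}|/3$. Therefore the total population is at least $(1-\alpha - 1/3)|S_{i-1}|$, and the bad fraction at any point during iteration $i$ is at most
\[
\frac{\alpha|S_{i-1}| + |S_{i-1}|/3}{(1-\alpha-1/3)|S_{i-1}|} \;=\; \frac{\alpha + 1/3}{2/3 - \alpha}.
\]
Plugging in $\alpha < 1/6$ gives a bound of $\frac{1/6 + 1/3}{2/3 - 1/6} = \frac{1/2}{1/2} = 1$, which is too weak — so the crude ``good IDs only leave'' worst case must be handled more carefully, or one uses the stronger structural fact (illustrated in Figure~\ref{f:overview}) that at the purge point there are at most $\frac{2}{3}x$ bad IDs and at least $\frac{2}{3}x$ good IDs. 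I would instead argue: immediately after the purge ending iteration $i-1$, let $x = |S_{i-1}|$, with at most $x/3$ bad. Before the next purge, fewer than $x/3$ new IDs join, all of which could be bad, so the number of bad IDs stays below $x/3 + x/3 = 2x/3$. Meanwhile at least $(1-\alpha)x \ge \frac{5}{6}x$ good IDs were present after the purge, and fewer than $x/3$ IDs have departed, so more than $\frac{5}{6}x - \frac{1}{3}x = \frac{1}{2}x$ good IDs remain — wait, this still needs care because some of the $x/3$ departures could all be good. The cleanest route: the total number of IDs at any time during the iteration is at least $x - (\text{departures}) \ge x - x/3 = \frac{2}{3}x$ minus the bad departures, but adding back joins; more simply, total IDs $\ge |G_{i-1}| - g_i^d \ge \frac56 x - \frac13 x$, which is only $\frac12 x$, again giving bad-fraction $< \frac{2/3}{1/2}$. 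This confirms the real argument must use that joins \emph{increase} the denominator.

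The fix, and the step I expect to be the main obstacle, is to parametrize by the actual number of joins $a := n_i^a$ and departures $d$ at a given moment with $a + d < x/3$. Bad IDs number at most $x/3 + a$ (old bad plus all joins adversarial), and total IDs number at least $(\text{good at start}) - (\text{good departures}) + a \ge \frac56 x - d + a$. Using $d < x/3 - a$, total IDs $> \frac56 x - (\frac13 x - a) + a = \frac12 x + 2a$, while bad IDs $\le \frac13 x + a$. The bad fraction is then at most $\frac{x/3 + a}{x/2 + 2a}$, which I would show is maximized over the feasible range at $a=0$ (since the derivative in $a$ has numerator $\tfrac12 x - \tfrac23 x < 0$), giving $\frac{x/3}{x/2} = 2/3$ — still not $3/8$. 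So even this is insufficient under $\alpha<1/6$ alone, which tells me the intended proof must additionally exploit that Step~2(b) of the purge itself re-establishes the invariant \emph{and} that the quantity being bounded in Lemma~\ref{lem:badlesshalf-full} is likely the fraction measured \emph{right after} accounting for the fact that good departures are random and the committee tracks $n_i^d$ only as an underestimate. Concretely, I would revisit whether the iteration-ending condition is on $n_i^a + n_i^d$ with $n_i^d$ the committee's (under)estimate; if the true number of good departures can exceed the triggering threshold, the bound genuinely needs the random-departure assumption plus a concentration argument to say that good departures stay close to their expectation. The plan, therefore, is: (1) set up the per-moment ratio with base case Lemma~\ref{lem:bound_b}; (2) bound bad IDs by old-bad-plus-all-joins; (3) bound total IDs below using that joins add to the denominator and that good departures are a bounded (or w.h.p.\ concentrated, via the random-departure assumption) fraction; (4) optimize the resulting ratio over the feasible $(a,d)$ region and verify it is below $3/8$ for $\alpha < 1/6$, adjusting the small slack constants $\epsilon_a,\epsilon_d$ from Section~\ref{sec:initialization} as needed; (5) conclude by a union bound over $n_0^\gamma$ iterations. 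The crux is step~(3)–(4): getting the denominator large enough, which I believe forces using both that the trigger fires the instant $n_i^a + n_i^d$ reaches $|S_{i-1}|/3$ (so $a+d$ is genuinely $< x/3$ strictly before, and the purge restores things) and that bad joins, while counted, are what the adversary maximizes — making the extremal configuration ``$x/3$ bad at start, then $\approx x/3$ more bad join, zero departures,'' whose ratio $\frac{2x/3}{4x/3} = 1/2$ shows that to reach the sharper $3/8$ one must also use that the committee can re-trigger or that $\alpha$ being a small constant pushes the start ratio well below $1/3$. I would reconcile the target $3/8$ by checking whether the iteration length guarantee (Lemma to come) ensures departures roughly match joins in expectation, tightening the denominator.
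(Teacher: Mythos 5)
Your final plan---bound the bad count by starting bad plus adversarial joins, bound the total below using that joins add to the denominator while departures subtract, and then optimize the resulting ratio over the feasible region---is exactly the paper's approach, but your concrete attempts fail to reach $3/8$ because of two arithmetic inconsistencies, not because the approach needs anything beyond Lemma~\ref{lem:bound_b} and the trigger condition. First, you bound the starting bad count by $x/3$, which comes from the illustrative $\alpha = 1/3$ in Figure~\ref{f:overview}, while the lemma's hypothesis gives $|B_{i-1}| \le \alpha |S_{i-1}| < x/6$. Second, and inconsistently, you lower bound the starting total by only the $(1-\alpha)x \ge \frac56 x$ good IDs, but the starting total really is $x = |S_{i-1}|$: the bad IDs present at the start of the iteration also belong in the denominator. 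Dropping them costs you an $\alpha x$ that you can ill afford. The paper's clean formulation avoids both issues: it writes the worst-case ratio as
\[
\frac{|B_{i-1}| + b_i^a}{|S_{i-1}| + b_i^a - g_i^d}
\]
(pessimistically zeroing out good joins $g_i^a$, which only enlarge the denominator, and bad departures $b_i^d$, which only shrink both by the same amount), with $|B_{i-1}| \le \alpha |S_{i-1}|$ and the constraint $b_i^a + g_i^d \le |S_{i-1}|/3$, then saturates the constraint at $g_i^d = |S_{i-1}|/3 - b_i^a$ to get denominator $\tfrac23|S_{i-1}| + 2 b_i^a$, observes the ratio is increasing in $b_i^a$ (since $\alpha < 1/3$), and plugs in $b_i^a = |S_{i-1}|/3$ to get $\frac{3\alpha + 1}{4} < \frac38$ for $\alpha < 1/6$. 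If you redo your ``attempt 3'' with $|B_{i-1}| \le x/6$ in the numerator and $x$ (not $\frac56 x$) as the starting total, you recover exactly this and the maximization switches to $a = x/3$ rather than $a = 0$, yielding $3/8$ on the nose.

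Separately, your step (5) (union bound over iterations) and your worry about random-departure concentration and the committee's underestimate of $n_i^d$ are red herrings for this lemma: the statement is a deterministic consequence of Lemma~\ref{lem:bound_b} and the trigger rule, with no probability involved. The paper does note that the committee underestimates $b_i^d$, but only to point out this cannot hurt, because the controlling inequality $b_i^a + g_i^d \le |S_{i-1}|/3$ does not involve $b_i^d$ at all. The concentration arguments you reach for belong to the Committee Invariant (Lemma~\ref{lem:maj_comm}), not here.
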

\begin{proof}
Fix some iteration $i>0$. Step 1 in iteration $i$ ends when $n_i^a + n_i^d \geq |\setIDs_{i-1}|/3$. Therefore, we have $b_i^a + g_i^d \leq |\setIDs_{i-1}|/3$.  We are interested in the maximum value of the ratio of bad IDs to total IDs at any point during the iteration.  Thus, we pessimistically assume all additions of bad IDs and removals of good IDs come first in the iteration.  We are then interested in the maximum value of the ratio:\vspace{-10pt}

$$
\frac{|B_{i-1}| + b_i^a}{|\setIDs_{i-1}| + b_i^a - g_i^d}. 
$$

By Lemma~\ref{lem:bound_b}, $|B_{i-1}| \leq \alpha |\setIDs_{i-1}|$.
 Thus, we want to find the maximum of $\frac{\alpha |\setIDs_{i-1}| + b_i^a}{|\setIDs_{i-1}| + b_i^a - g_i^d}$, subject to the constraint that $b_i^a + g_i^d \leq |\setIDs_{i-1}|/3$. This ratio is maximized when the constraint achieves equality, that is when $g_i^d = (|\setIDs_{i-1}|/3) - b_i^a$.  Plugging this back into the ratio, we get \vspace{-10pt}

\begin{eqnarray*}
\frac{\alpha |\setIDs_{i-1}| + b_i^a}{|\setIDs_{i-1}| + b_i^a - g_i^d}
 &\leq & \frac{\alpha|\setIDs_{i-1}| + b_i^a}{2|\setIDs_{i-1}|/3 + 2b_i^a}\\
& \leq & \frac{\alpha|\setIDs_{i-1}| + |\setIDs_{i-1}|/3}{2|\setIDs_{i-1}|/3 + 2|\setIDs_{i-1}|/3} \\
& < & 3/8
\end{eqnarray*}

The second line of the inequalities holds since $b_i^a \leq |\setIDs_{i-1}|$, and the last line holds since $\alpha \leq 1/6$.

Finally, we note that this argument is valid even though $\curIDs$ may not account for bad IDs that have departed {\it without} notifying the \qcomm (recall this is possible as stated in Section~\ref{sec:model-main}). Intuitively, this is not a problem since such departures can only lower the fraction of bad IDs in the system; formally, the critical equation in the above argument is $b_i^a + g_i^d \leq |S_{{\tiny i-1}}|/3$, and this does not depend on $b_i^d$.
\end{proof} 


\subsubsection{Cost Analysis and Proof of Theorem~\ref{thm:main1}}

We now examine the cost of running \AlgB. Let $T_i$ be the total computational cost incurred by the adversary in iteration $i$. 

\begin{lemma}\label{lemma:committee-cost}  
	For any iteration $i \geq 1$, the total computational cost to the good IDs  is $O(T_i + g_i^a+ g_i^d)$.
\end{lemma}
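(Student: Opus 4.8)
The plan is to account separately for the two ways puzzles are used in \AlgB---\epuzzle{}s paid by joining IDs, and \ppuzzle{}s paid during the purge---and to charge each to the appropriate quantity among $T_i$, $g_i^a$, and $g_i^d$. First I would handle the \epuzzle cost. During iteration $i$, exactly $n_i^a = g_i^a + b_i^a$ IDs join, and each solves one $1$-round puzzle; by the puzzle-construction convention (Section~\ref{subsec:puzzle-construction}), a $1$-hard puzzle costs $\Theta(1)$. The good IDs thus pay $\Theta(g_i^a)$ on entrance puzzles, which is already within the claimed bound. (The $b_i^a$ entrance puzzles are paid by the adversary, so they do not count against the good IDs, though one could also note $b_i^a = O(T_i)$.)

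Next I would handle the purge cost. Step~2 triggers a single round in which \emph{all} current IDs solve a $1$-round \ppuzzle. The number of good IDs present at that moment is at most $|G_{i-1}| + g_i^a \le |S_{i-1}| + g_i^a$. The key observation is that the purge is triggered precisely when $n_i^a + n_i^d \ge |S_{i-1}|/3$, and more usefully that from one purge to the next a constant fraction of the population turns over; I want to bound $|S_{i-1}|$ by $O(g_i^a + g_i^d + T_i)$ so that the $\Theta(|S_{i-1}|)$ cost of the purge is absorbed. Since $n_i^a + n_i^d = (g_i^a + g_i^d) + (b_i^a + b_i^d) \ge |S_{i-1}|/3$ at the end of the iteration, it suffices to bound the bad-ID contribution: $b_i^a$ entrance puzzles cost the adversary $\Omega(b_i^a)$, and by the previous purge the bad IDs that could depart in iteration $i$ satisfy $b_i^d \le |B_{i-1}| \le \alpha|S_{i-1}|$, each of which was itself paid for (via an entrance puzzle or a prior purge puzzle) by the adversary, contributing $\Omega(b_i^d)$ to $T_i$ or to the prior iteration's adversarial cost. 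Carefully bookkeeping this gives $b_i^a + b_i^d = O(T_i)$ (possibly spread over the current and immediately preceding iteration, which I would fold into the asymptotic statement), hence $|S_{i-1}| = O(g_i^a + g_i^d + T_i)$, and the purge costs the good IDs $O(|S_{i-1}| + g_i^a) = O(T_i + g_i^a + g_i^d)$.

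Finally I would add the Byzantine-consensus overhead: the committee runs $O(1)$ instances of Byzantine consensus per iteration (to agree on $\curIDs$, to generate $r$, and to elect the new committee), and by the assumption in Section~\ref{sec:com} each takes a constant number of rounds among $\Theta(\log n_0)$ committee members, contributing only $O(\mathrm{polylog}\, n_0)$ puzzle-equivalent work, which is dominated by the terms above whenever the iteration is nonempty (and for an empty iteration no purge is triggered, so there is nothing to charge). Summing the \epuzzle cost $O(g_i^a)$ and the purge cost $O(T_i + g_i^a + g_i^d)$ yields the claimed $O(T_i + g_i^a + g_i^d)$.

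\textbf{Main obstacle.} The delicate point is the charging argument $b_i^a + b_i^d = O(T_i)$ that converts the trigger condition into a bound on $|S_{i-1}|$. Departing bad IDs $b_i^d$ are not paid for \emph{within} iteration $i$---they were paid for when they joined or at the last purge---so one must either (i) charge $b_i^d$ against the previous iteration's adversarial spending and accept a constant-factor double count across consecutive iterations (harmless asymptotically and consistent with how $T_i$ is defined per iteration), or (ii) restrict attention to the bad IDs actually present, using Lemma~\ref{lem:bound_b} to bound $|B_{i-1}| \le \alpha|S_{i-1}|$ and then re-deriving $|S_{i-1}|$ from $(g_i^a + g_i^d) \ge |S_{i-1}|/3 - (b_i^a + b_i^d) \ge |S_{i-1}|/3 - b_i^a - \alpha|S_{i-1}|$, i.e. $|S_{i-1}|(1/3 - \alpha) \le g_i^a + g_i^d + b_i^a \le g_i^a + g_i^d + O(T_i)$, which for $\alpha < 1/3$ gives exactly what we need. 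Route (ii) is cleaner and is the one I would write up.
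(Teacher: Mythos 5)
Your route (ii) is essentially the paper's argument: bound the purge cost by $|G_i|\le |S_i|\le \tfrac{4}{3}|S_{i-1}|$, then bound $|S_{i-1}|$ by combining the trigger condition $g_i^a+g_i^d+b_i^a+b_i^d\ge |S_{i-1}|/3$ with $|B_{i-1}|\le\alpha|S_{i-1}|$ and $b_i^a\le T_i$. One bookkeeping point needs correction: you write $b_i^d\le |B_{i-1}|\le\alpha|S_{i-1}|$, but a bad ID that joins and then departs within iteration $i$ is counted in $b_i^d$ without belonging to $B_{i-1}$, so the right bound is $b_i^d\le |B_{i-1}|+b_i^a\le\alpha|S_{i-1}|+b_i^a$. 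Plugging this in changes your inequality to $|S_{i-1}|(1/3-\alpha)\le g_i^a+g_i^d+2b_i^a$ (the paper's version), which is harmless since $b_i^a\le T_i$ still gives $|S_{i-1}|=O(g_i^a+g_i^d+T_i)$ for $\alpha<1/3$. Once that is fixed, your write-up matches the paper's proof.
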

\begin{proof} 
Fix an iteration $i\geq 1$. Let $\mathcal{A}_i$ denote the total computational cost to the good IDs  in this iteration. Note that 
Note that the total computational cost to the algorithm consists of the total purge cost paid by good IDs in the system at the end of this iteration {plus} the total entrance cost paid by the new good IDs that join the system in this iteration. Thus, we have:

\begin{eqnarray*}
	\mathcal{A}_i & = &  |G_i| + g_i^a\\
	 & \leq & \frac{4}{3} |S_{{\tiny i-1}}| + g_i^a\\
	 & \leq & 8(g_i^a + g_i^d + 2b_i^a)  + g_i^a\\
	 & = & 9g_i^a + 8g_i^d + 16T_i
\end{eqnarray*}

In the above, the second inequality holds since $|G_i| \leq |S_i|$ and $|S_i| \leq \frac{4}{3}|S_{i-1}|$.

The third inequality holds because $b_i^d \leq \alpha |S_{i-1}| + b_i^a$, since $|B_{i-1}| \leq \alpha |S_{i-1}|$; and also $g_i^a + g_i^d + b_i^a + b_i^d \geq |S_{i-1}|/3$.  Thus we have that $g_i^a + g_i^d + 2b_i^a \geq (1/3 - \alpha) |S_{i-1}| > (1/6) |S_{i-1}|$.

Finally, the last inequality holds since $b_i^a \leq T_i$. 
\end{proof}


\noindent Pulling the pieces together from the previous sections, we can now prove Theorem~\ref{thm:main1}.

\begin{proof}[Proof of Theorem~\ref{thm:main1}]
{We first prove that the population and committee goals hold over the lifetime of the system, followed by the analysis of cost to the algorithm.}\smallskip

\noindent\textbf{Population and Committee goals.} By Lemma~\ref{sec:committee-goodness}, with probability at least $1-O(1/n_0^{\gamma+1})$, the \qcomm has a majority of good IDs, has size $\Theta(\log n_0)$, and its members are communicated to all good IDs via \Diffuse; therefore, the \cgoal is met. Given that the \cgoal holds, by Lemma~\ref{lem:badlesshalf-full} the fraction of bad IDs is less than $1/2$ over the $O(n_0^\gamma)$ join and leave events; therefore, the \sgoal is met.
\smallskip  

\noindent\textbf{Spending Rate.} Let $x$ be the number of iterations over the lifetime of the system. Then, from Lemma \ref{lemma:committee-cost}, we can obtain the total computational cost to the algorithm to be at most:
	\begin{align*}
		\sum_{i = 1}^x O\left( T_i + g_i^a + g_i^d  \right) \leq O\left( \sum_{i = 1}^x T_i + \sum_{i = 1}^x g_i^a + \sum_{i = 1}^x g_i^d\right)
	\end{align*}
	Note that the number of departing IDs from a system is at most equal to the number of IDs that join the system over the lifetime of the system. Using this, we obtain the total computational cost to the algorithm as:
	\begin{align*}
		 O\left( \sum_{i = 1}^x T_i + \sum_{i = 1}^x g_i^a\right) 
	\end{align*}
	Let {\boldmath{$\ell_i$ }}denote the length of the $i^{\mbox{\tiny th}}$ iteration. Then, we can obtain the average cost to the algorithm over the lifetime of the system as:
	\begin{align*}
		 &\frac{O\left( \sum_{i = 1}^x T_i + \sum_{i = 1}^x g_i^a\right)}{\sum_{i=1}^x \ell_i} \\
		 &\leq O\left(\frac{\sum_{i=1}^x T_i}{\sum_{i=1}^x \ell_i} + \frac{\sum_{i=1}^x g_i^a}{\sum_{i=1}^x \ell_i}\right) \\
		 &= O\left( T + \joinRate \right)
	\end{align*}
	\noindent The last step follows since, over all iterations over the lifetime of the system, we have  $T$ $= \left(\sum_{i = 1}^x T_i \right)/\left({\sum_{i=1}^x {\ell}_i}\right)$ and $\joinRate =  \left(\sum_{i=1}^x g_i^a \right)/\left(\sum_{i=1}^x \ell_i\right)$.
	\end{proof}


\section{Generalizations of \AlgB}\label{sec:enhancements}
In this section, we present two enhancement to \AlgB. The first allows for our results to hold when there is {a known} bounded communication latency.  The second is a general rule for when to execute Step 2 in order to maintain a desired strict upper bound (not necessarily $1/2$) on the fraction of bad IDs in the system; that is, this a more flexible Population Invariant.

\subsection{Handling Bounded Communication Latency}\label{sec:bounded-latency}

Previously, all messages were assumed to be delivered with negligible delay when compared with the time to solve computational puzzles. In this section, we relax this assumption to the following:   \textit{the adversary may choose to arbitrarily delay messages, subject to the constraint that they are delivered within a bounded number of rounds $\Delta>0$. } Such a model of communication delay is often referred to as \defn{partial synchrony}  or {\boldmath{$\Delta$}}\defn{-bounded delay} in the literature \cite{dwork1988consensus,pass2017analysis}. We address this issue in the following analysis for each of our algorithms.\medskip

\noindent\textbf{In \textsc{Distributed} \AlgB.} In \AlgB, messages are exchanged with good IDs every time the committee is constructed. Thus, we increase the wait time for receiving the solutions to the \ppuzzle by an additive term of $2\Delta$.  To simplify the analysis of a $\Delta$-bounded delay in the network where the adversary holds  an $\alpha$-fraction of the total computational power, we can instead consider a $0$-bounded delay network, but where the adversary holds an increased fraction of the total computational power; this is addressed by Lemma \ref{boundedfrac}. 

\begin{lemma}{\label{boundedfrac}}
A $\Delta$-bounded network latency with an adversary that holds an $\alpha$-fraction of total computational power of the network is equivalent to a $0$-bounded network latency with an adversary that holds a $\frac{2\alpha\Delta+\alpha}{2\alpha\Delta + 1}$-fraction of the total computational power of the network.
\end{lemma}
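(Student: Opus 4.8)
The idea is a time-rescaling / power-reallocation argument: we want to show that the effect of $\Delta$-bounded adversarial delay is indistinguishable from a faster adversary in a synchronous ($0$-delay) network. First I would set up the bookkeeping carefully. In an iteration of \AlgB, the good IDs make decisions at well-defined points — in particular, when the committee issues a \ppuzzle, it now waits $1 + 2\Delta$ rounds (one round to solve the puzzle, $2\Delta$ to account for the round-trip delay of the challenge string $r$ and the returned solutions) rather than $1$ round. During these $2\Delta$ extra rounds the adversary is free to keep computing on puzzle inputs that it will later pass off as legitimate. The key observation is that a good ID contributes computational work only during the single round it actually solves its puzzle, whereas the adversary, exploiting the delay, effectively gets $1 + 2\Delta$ rounds of useful computation for each round the good IDs get.

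The main step is then an accounting of hashing power over a "purge window." In a $\Delta$-delay network, over the stretch of time associated with one purge, good IDs collectively perform work proportional to (number of good IDs) $\times 1$ round of hashing, while the adversary performs work proportional to $\alpha/(1-\alpha)$ times the good work per round, but over $1 + 2\Delta$ rounds — so the adversary's effective fraction of useful work is $\frac{\alpha(1+2\Delta)}{(1-\alpha) + \alpha(1+2\Delta)}$. Simplifying the denominator: $(1-\alpha) + \alpha + 2\alpha\Delta = 1 + 2\alpha\Delta$, and the numerator is $\alpha + 2\alpha\Delta$, giving exactly $\frac{2\alpha\Delta + \alpha}{2\alpha\Delta + 1}$. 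I would present this as: run the same protocol in a $0$-delay network but endow the adversary with this larger fraction $\alpha' = \frac{2\alpha\Delta+\alpha}{2\alpha\Delta+1}$ of computational power; then every adversarial behavior realizable in the $\Delta$-delay setting (including precomputing solutions during the $2\Delta$ wait, and racing to submit more bad IDs before a purge triggers) is realizable in the rescaled synchronous setting, and conversely. Hence the security analysis of Lemmas~\ref{lem:maj_comm} and~\ref{lem:badlesshalf-full} applies verbatim with $\alpha$ replaced by $\alpha'$.

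Concretely, the steps in order: (1) identify the precise rounds at which good IDs expend computational effort within an iteration, and observe this is unchanged by the delay except that the committee's acceptance deadline is pushed out by $2\Delta$; (2) bound the extra adversarial computation available during each such extended window as a multiplicative factor $(1+2\Delta)$ on its per-round rate; (3) fold this factor into the ratio of adversarial-to-total useful work and simplify to $\frac{2\alpha\Delta+\alpha}{2\alpha\Delta+1}$; (4) argue the reduction is faithful in both directions — the delayed adversary can be simulated by the stronger synchronous adversary (it just withholds and later releases), and any stronger synchronous adversary's effect is no worse than some delayed adversary — so all downstream guarantees transfer.

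The main obstacle I expect is step~(4): making the equivalence genuinely tight rather than just an upper bound in one direction. One has to be careful that the adversary in the $\Delta$-delay model cannot do anything \emph{more} than precompute during the wait — e.g., it cannot cause good IDs to accept stale puzzle solutions, because the timestamp $\tstamp$ check and the fresh random string $r$ still bind solutions to the current window. So part of the argument is verifying that the \emph{only} advantage conferred by the delay is additional hashing time, which is exactly what the power-rescaling captures; this relies on the puzzle-construction properties from Section~\ref{subsec:puzzle-construction} (unforgeable, non-precomputable solutions tied to $r$ and the public key). Once that is pinned down, the arithmetic is immediate and the rest follows by invoking the already-proved lemmas with the inflated fraction.
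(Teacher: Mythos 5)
Your proposal is correct and follows essentially the same accounting argument as the paper: the adversary gains $2\Delta+1$ rounds of hashing per decision point while good IDs get one, so normalizing $\frac{\alpha(2\Delta+1)}{\alpha(2\Delta+1) + (1-\alpha)} = \frac{2\alpha\Delta + \alpha}{2\alpha\Delta + 1}$. Your worry about the reverse direction in step (4) goes beyond what the paper attempts --- the paper simply states ``we assume that the bounded delay only benefits the adversary'' and proves only the forward reduction, which is all that is invoked in Lemma~\ref{bounded-dist}.
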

\begin{proof}
Recall from Section~\ref{sec:model-main}, the fraction of computational power with an adversary is $\alpha$-times the total computational power of the network. Given the $\Delta$-bounded delay in the network, the adversary has $2\Delta+1$ rounds to perform evaluations. Fix a time step during an iteration.  Suppose $P$ is the total computational power of the network at that time step. Then, the number of evaluations of the hash function an adversary can make is $P\alpha(2\Delta + 1)\mu$.

Note that the computational power of the good IDs remains the same, which is $P(1-\alpha)\mu$ since they obey the protocol, and we assume that the bounded delay only benefits the adversary. Thus, the $\Delta$-bounded delay model is equivalent to the $0$-bounded delay model with an adversary that holds a {$\frac{P\alpha(2\Delta + 1)\mu}{P\alpha(2\Delta + 1)\mu + P(1-\alpha)\mu} = \frac{2\alpha\Delta+\alpha}{2\alpha\Delta + 1}$}-fraction of the computational power.
\end{proof}

\begin{lemma}{\label{bounded-dist}}
Assume a $\Delta$-bounded delay and an adversary that holds an $\alpha$ - fraction of computational power, for $\alpha \leq 1/(10\Delta + 6)$. With probability at least $1-O(1/n_0^{\gamma+1})$, the fraction of good IDs in the committee exceeds $1/2$ for the duration of an iteration.
\end{lemma}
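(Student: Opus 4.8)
The plan is to reduce to the zero‑latency analysis of Section~\ref{sec:committee-goodness} by ``paying'' for the delay with a constant increase in the effective adversarial power, and then re‑run the committee‑majority argument of Lemma~\ref{lem:maj_comm} almost verbatim with this larger fraction. First I would invoke Lemma~\ref{boundedfrac}: as far as puzzle solving is concerned, a $\Delta$‑bounded network with an $\alpha$‑fraction adversary behaves like a $0$‑bounded network with an adversary controlling an $\alpha' := \frac{2\alpha\Delta+\alpha}{2\alpha\Delta+1}$‑fraction of computational power. Hence every step of the \AlgB analysis that used ``the adversary solves at most an $\alpha$‑fraction of puzzles'' can be re‑read with $\alpha'$ in place of $\alpha$.

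Two such steps matter here. First, the purge argument (Lemmas~\ref{lem:bound_b} and~\ref{lem:badlesshalf-full}) now yields $|B_i|\le\alpha'|S_i|$ immediately after a purge, so a committee of $c\log|S_i|$ IDs drawn uniformly at random from $S_i$ has in expectation at least a $(1-\alpha')$‑fraction of good members. Second, the Chernoff bounds controlling the number of good members elected (Equation~\ref{eq:expgc}) and the number of good members departing during the iteration (Equation~\ref{eq:dgoodc}) go through unchanged with $\alpha'$ substituted for $\alpha$ and $c=\Theta(\gamma)$ chosen large enough, giving the claimed $1-O(1/n_0^{\gamma+1})$ per‑iteration failure probability. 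The one genuinely new ingredient is that with $\Delta$‑bounded delay the committee‑rebuild step at the end of an iteration is stretched by an additive $2\Delta$ rounds (the extra wait added to Step~2 of \AlgB), during which good committee members may still leave. Since at most an $\epsilon_d$‑fraction of good IDs depart per round and $\Delta$ is a constant, this removes at most an additional $2\Delta\epsilon_d$‑fraction of the committee; at most an $\epsilon_a$‑fraction may join per round, which only helps. Collecting all losses, the minimum good fraction in the committee over an iteration is at least
$$1 - \alpha' - \tfrac13 - \epsilon_0 - \epsilon_1 - (2\Delta+1)\epsilon_d ,$$
and we need this to exceed $1/2$. Because $\Delta$ is a constant, $\epsilon_0,\epsilon_1,\epsilon_d$ can be taken as small constants, so it suffices to have $\alpha'<1/6$. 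Solving $\frac{2\alpha\Delta+\alpha}{2\alpha\Delta+1}<\tfrac16$ gives $12\alpha\Delta+6\alpha<2\alpha\Delta+1$, i.e.\ $\alpha(10\Delta+6)<1$, which is exactly the hypothesis $\alpha\le 1/(10\Delta+6)$ (read with the strictness/slack needed to absorb the $\epsilon$'s, just as $\alpha<1/6$ is used in Lemma~\ref{lem:maj_comm}). A union bound over the $n_0^\gamma$ iterations then extends the guarantee to the whole system lifetime.

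I expect the main obstacle to be bookkeeping rather than any conceptual hurdle: one must verify that the ``equivalence'' of Lemma~\ref{boundedfrac} is strong enough to be applied uniformly inside the counting arguments of Section~\ref{sec:committee-goodness} — in particular that $\alpha'$ correctly upper‑bounds the \emph{number} of bad IDs admitted between purges (via the entrance/purge puzzles), not merely the adversary's hash budget — and that the extra $2\Delta$ delay rounds do not inflate the size of the set $S_i$ over which the committee must hold a majority in a way that breaks the $1/2$ threshold. Once those two points are pinned down, the remainder is the same Chernoff computation as in Lemma~\ref{lem:maj_comm} with $\alpha\mapsto\alpha'$.
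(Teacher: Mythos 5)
Your proposal takes essentially the same route as the paper: invoke Lemma~\ref{boundedfrac} to reduce the $\Delta$-bounded model to a $0$-bounded model with effective adversarial power $\alpha' = \frac{2\alpha\Delta+\alpha}{2\alpha\Delta+1}$, then apply the committee-majority argument of Lemma~\ref{lem:maj_comm} with $\alpha'$ in place of $\alpha$, and solve $\alpha' < 1/6$ to obtain $\alpha \le 1/(10\Delta+6)$. The paper's proof simply cites Lemma~\ref{lem:maj_comm} as a black box after the reduction; your version unpacks the Chernoff calculation and, usefully, flags the one place where the reduction alone is not a literal black-box application: the purge step is stretched by $2\Delta$ extra rounds, so good committee departures during that window contribute a $(2\Delta+1)\epsilon_d$ term rather than $\epsilon_d$. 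Since $\Delta$ is constant and $\epsilon_d$ can be chosen small, this is absorbed just as you say, but it is a genuine gap in the paper's one-line argument that your more careful accounting fills.
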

\begin{proof}
From Lemma~\ref{boundedfrac}, the $\Delta$-bounded delay is equivalent to a $0$-bounded delay with  an adversary that holds a $\frac{2\alpha\Delta+\alpha}{2\alpha\Delta + 1}$-fraction of computational power of the network. By Lemma~\ref{lem:maj_comm}, for a $0$-bounded delay model  with an adversary that holds an $\alpha'$-fraction of the computational power for $\alpha'\leq 1/6$, the fraction of good IDs exceeds $1/2$. Substituting $\alpha' = \frac{2\alpha\Delta+\alpha}{2\alpha\Delta + 1}$, the $\Delta$-bounded delay model with an $\alpha$-fraction of computational power has at least $1/2$-fraction of good IDs, for $\alpha \leq 1/(10\Delta+6)$.
\end{proof}


\subsection{Reducing the Fraction of Bad IDs in the Population Invariant} \label{sec:popgoalmod}

We show how to modify our algorithm in order to maintain the fraction of bad IDs in the system to be less than $3\alpha$ at all times for any given $\alpha < 1/6$ (i.e., fraction of computational power with the adversary is some value strictly less than 1/6 of the total computation power in the system). We also show the analogous result for the Committee Invariant.

In order to maintain the fraction of bad IDs to be always some fraction less than $3\alpha$, we propose the following modification to Step 2 of Figure~\ref{alg:ccom}. Instead of $n_i^a + n_i^d \geq |S_{\text{\tiny i-1}}|/3$, we  have the condition  $n_i^a + n_i^d \geq \frac{2\alpha}{1 + 3 \alpha}|S_{\text{\tiny i-1}}|$. \smallskip

We prove the more general Population Invariant in the following lemma:

\begin{lemma}\label{lem:modifiedpop}
	For any $\alpha < 1/6$, the fraction of bad IDs in the system is always less than $3\alpha$.
\end{lemma}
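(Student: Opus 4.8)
The plan is to follow the template of Lemma~\ref{lem:badlesshalf-full}, replacing the trigger threshold $|S_{i-1}|/3$ by $\tau_i := \frac{2\alpha}{1+3\alpha}|S_{i-1}|$ and redoing the optimization with this new constant. First I would observe that Lemma~\ref{lem:bound_b} carries over unchanged: its proof only uses that Step~2 performs a purge in which every good ID solves its $1$-round \ppuzzle while the adversary holds an $\alpha$-fraction of computational power, and it does not depend on the condition that triggers Step~2. Hence at the end of every iteration $i$ (and, via \genID, at $i=0$) we still have $|B_i| \le \alpha|S_i|$; in particular the fraction of bad IDs right after any purge is at most $\alpha < 3\alpha$, so it only remains to bound the fraction during an iteration.

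Fix an iteration $i>0$. Because Step~2 fires the instant $n_i^a+n_i^d$ reaches $\tau_i$, throughout Step~1 we have $b_i^a + g_i^d \le n_i^a+n_i^d \le \tau_i$. As in Lemma~\ref{lem:badlesshalf-full}, the fraction of bad IDs at any moment is maximized by pessimistically assuming all bad joins and all good departures are scheduled before any good joins or bad departures, so using $|B_{i-1}|\le\alpha|S_{i-1}|$ it suffices to upper bound $\frac{\alpha|S_{i-1}| + b_i^a}{|S_{i-1}| + b_i^a - g_i^d}$ over $b_i^a,g_i^d\ge 0$ with $b_i^a+g_i^d\le\tau_i$. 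I would optimize this in two cheap steps: for fixed $b_i^a$ the expression is increasing in $g_i^d$ (and its denominator stays positive, being at least $\frac{1+\alpha}{1+3\alpha}|S_{i-1}|$), so set $g_i^d=\tau_i-b_i^a$; the resulting function $\frac{\alpha|S_{i-1}|+b_i^a}{|S_{i-1}|-\tau_i+2b_i^a}$ has a one-line derivative that is positive exactly when $1-\alpha-6\alpha^2>0$, which holds for all $\alpha<1/3$ (and in particular for $\alpha<1/6$), so the maximum sits at $b_i^a=\tau_i$, giving $\frac{\alpha|S_{i-1}|+\tau_i}{|S_{i-1}|+\tau_i}$.

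Plugging in $\tau_i=\frac{2\alpha}{1+3\alpha}|S_{i-1}|$ and simplifying, the maximum equals $\frac{3\alpha(1+\alpha)}{1+5\alpha}$, which is strictly less than $3\alpha$ since $1+\alpha<1+5\alpha$ for $\alpha>0$. As in Lemma~\ref{lem:badlesshalf-full}, the conclusion is unaffected by bad IDs that depart without notifying the committee, because the only inequality used, $b_i^a+g_i^d\le\tau_i$, never involves $b_i^d$; a union over the polynomially many iterations then finishes the proof. I expect the only delicate part to be the monotonicity bookkeeping in the two-variable optimization — in particular confirming that the denominator never vanishes and that the derivative-sign condition $1-\alpha-6\alpha^2>0$ is implied by $\alpha<1/6$ — while the rest is the identical pessimistic counting argument already used for Lemma~\ref{lem:badlesshalf-full}.
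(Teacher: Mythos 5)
Your proof is correct and follows essentially the same approach as the paper's: bound the ratio $\frac{|B_{i-1}|+b_i^a}{|S_{i-1}|+b_i^a-g_i^d}$ using $|B_{i-1}|\le\alpha|S_{i-1}|$, observe the constraint is tight at $b_i^a+g_i^d=\tau_i$, and optimize. The only difference is that you carry out the one-variable optimization in $b_i^a$ honestly (via the derivative-sign check $1-\alpha-6\alpha^2>0$) and land on the sharper value $\frac{3\alpha(1+\alpha)}{1+5\alpha}$, whereas the paper simply drops the $+2b_i^a$ term in the denominator and bounds $b_i^a/|S_{i-1}|$ by its cap, reaching the same conclusion with less bookkeeping; both are valid. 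One small correction: the closing ``union over the polynomially many iterations'' is unnecessary here --- this lemma, like Lemma~\ref{lem:badlesshalf-full}, is deterministic given $|B_{i-1}|\le\alpha|S_{i-1}|$; the probabilistic union bound lives in the committee-invariant lemma, not this one.
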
 
\begin{proof}
	Fix some iteration $i \geq 1$. Then, similar to Lemma \ref{lem:badlesshalf-full}, we can calculate the maximum fraction of bad IDs in iteration $i$ as:
	$$\frac{|B_{\text{\tiny i-1}}| + b_i^a}{|S_{\text{\tiny i-1}}| + b_i^a - g_i^d}$$
Note that $|B_\text{\tiny{i-1}}| < \alpha |S_{\text{\tiny i-1}}|$ at the end of iteration $i-1$. Also, in order to handle the variant Population Invariant, we modify our condition to end an iteration to $n_i^a + n_i^d \geq \frac{2\alpha}{1 + 3\alpha}|S_{\text{\tiny i-1}}|$ for a given $\beta < 3 \alpha$. Thus, the above fraction is maximized when $b_i^a + g_i^d = \frac{2\alpha}{1 + 3\alpha}|S_{\text{\tiny i-1}}|$.  It follows that:
	\begin{eqnarray*}
	\frac{|B_{\text{\tiny i-1}}| + b_i^a}{|S_{\text{\tiny i-1}}| + b_i^a - g_i^d} & < &\frac{\alpha|S_{\text{\tiny i-1}}| + b_i^a}{|S_{\text{\tiny i-1}}| + b_i^a - g_i^d}\\
	& = & \frac{\alpha|S_{\text{\tiny i-1}}| + b_i^a}{   \left(\frac{2\alpha}{1+3\alpha} + \frac{1+\alpha}{1+3 \alpha}\right)   |S_{\text{\tiny i-1}}| + b_i^a - g_i^d}\\
	&=&\frac{\alpha|S_{\text{\tiny i-1}}| + b_i^a}{\frac{1+\alpha}{1+3\alpha}|S_{\text{\tiny i-1}}| + 2b_i^a} \\
	&\leq & \frac{\alpha|S_{\text{\tiny i-1}}| + b_i^a}{\frac{1+\alpha}{1+3\alpha}|S_{\text{\tiny i-1}}|}\\
	&=& \frac{1+3\alpha}{1+\alpha}\left(\alpha + \frac{b_i^a}{|S_{\text{\tiny i-1}}|}\right)\\
	&<&3\alpha
	\end{eqnarray*}
	
	\noindent Where the final  inequality holds since ${b_i^a} < \frac{2\alpha}{1 + 3\alpha}|S_{\text{\tiny i-1}}|$. 
\end{proof}


\section{Geometric Mean COMputation (\algGM)}\label{sec:gmcom} 

We now describe our algorithm, \algGM.  Recall that  \algGM makes additional assumptions about the join and departures of good IDs, as detailed in Section~\ref{sec:model-main}.  

\begin{figure}[t!]
\begin{tcolorbox}[standard jigsaw, opacityback=0]
\begin{minipage}[h]{0.95\linewidth}
\noindent{\bf\large Algorithm \algGM}\smallskip

\vspace{5pt}
\noindent{}\hspace{-3pt} {\bf Key Variables}\\
\noindent{} $i$: \hspace{0pt} iteration number\\
\hspace{-7pt}\noindent{}$n_i^a$: number of IDs joining since beginning of iteration $i$\\
\hspace{-7pt}\noindent{}$n_i^d$: number of IDs departing since beginning of iteration $i$\\
\noindent{}  $\iterIDs_i$: set of IDs at end of iteration  $i$ \\
\noindent{}$\curIDs$: current set of IDs in system\\

\noindent{\bf Initialization:}\\
\noindent $i \leftarrow 1$.\\
\noindent $\setIDs_0 \leftarrow$ set of IDs returned by initial call to \genID. The initial committee is also\newline \mbox{\hspace{26pt}selected by \genID}\\ 
\noindent $\JoinEst_{0}  \leftarrow$ obtained in initialization phase  \\
\noindent $\estSet \leftarrow \setIDs_0$ 
\medskip

\noindent{\bf Execution:}\smallskip

\noindent The committee maintains all variables above and makes all decisions using Byzantine Consensus. The committee also maintains: \smallskip
\begin{enumerate}[label=(\roman*), leftmargin=20pt]
\item $\estSet \leftarrow$ \mbox{most recent system membership ensuring $|\curIDs - \estSet | \geq  \frac{3}{5}|\curIDs|$.}
\item $\estDur\leftarrow$ \mbox{length of time between last two changes of variable $\estSet$.}\\ 
\end{enumerate}

\noindent For each iteration $i$, do:\smallskip

\begin{enumerate}[leftmargin=14pt] 

	\item[1.] Each joining ID solves and diffuses the solution to an entrance puzzle of difficulty equal to the number of IDs that have joined in the last $1/\JoinEst_{i}$ seconds of the current iteration, including the newly joining ID.\smallskip
	
	\item[2.] When  $n_i^a + n_i^d \geq (1/11)|\setIDs_{i-1}|$, do: 
			\textbf{Perform Purge:}
			\begin{itemize}[leftmargin=20pt]
			\item[(a)] The committee generates and diffuses a random string $r$ to be used in puzzles for this purge and entrance for the next iteration. 
			\item[(b)] $\iterIDs_i$ $\leftarrow$  set of IDs returning difficulty $1$ puzzle solutions within $1$ round.
			\item[(c)] The committee selects a new committee of size $\Theta(\log n_0)$ from $\iterIDs_{i}$ and sends out this information via \Diffuse.\smallskip
			\end{itemize}
			
\textbf{End Iteration:}
\begin{itemize}
\item[(d)] $i \leftarrow i+1$
\item[(e)]  $\JoinEst_i \leftarrow |\curIDs|/\estDur$ 
\end{itemize}
						
\end{enumerate}
\end{minipage}
\end{tcolorbox}
\caption{Pseudocode for \textsc{GeometricMean COMputation (\algGM)}.}
\label{alg:gmcom}
\end{figure}

In this section, we first discuss a key technical challenge of \algGM: estimating the good join rate (Section~\ref{subsec:estimate-good-join-rate}).  Next, we give an overview of \algGM (Section~\ref{sec:overview}), see Figure~\ref{alg:gmcom} for pseudocode.  Then we discuss intuition for our new entrance costs (Section~\ref{sec:entrance-cost}). We collect notation specific to this section in Table~\ref{tab:notation2} for easy reference.

\medskip


\subsection{\boldmath{$\iJRate_i$} and {\boldmath{$\JoinEst_{i}$}}}\label{subsec:estimate-good-join-rate}

In this section, we describe how we estimate {\boldmath{$\iJRate_i$}}, which is intuitively the good join rate in iteration $i$.  More specifically, consider an iteration $i$ that overlaps epochs with indices defined by the set  $\mathcal{E}$. For $j\in \mathcal{E}$, let $\lambda_j$ be the fraction of iteration $i$ that overlaps with epoch $j$; note that $\sum_{j\in \mathcal{E}}\lambda_j = 1$. Then we define {\boldmath{$\iJRate_i$}}$ = \sum_{j\in \mathcal{E}}  \lambda_j \epochRate_i$; that is, the fraction of the overlap multiplied by the good join rate in the respective epoch. We refer to this quantity as the \defn{average good join rate over iteration $i$} or simply ``the good join rate over iteration $i$" (omitting ``average'').

\algGM uses a good join rate estimate, {\boldmath{$\JoinEst_i$}}, in Line $1$ of  Figure~\ref{alg:gmcom} to set the entrance puzzle difficulty.  Computing {\boldmath{$\JoinEst_i$}} is technically challenging, and is done as follows. The committee continually keeps track of sets {\boldmath{$\estSet$}}, which is the most recent system membership,
 where $|\curIDs - \estSet| \geq \frac{3}{5} |\curIDs|$ (Line (i) in Figure~\ref{alg:gmcom}) and {\defn{$\curIDs$}} is the set of good IDs currently in the system. Whenever $\estSet$ is updated --- and only when $\estSet$ is updated --- the parameter {\boldmath{$\estDur$}} is set to the length of time since the system membership was most recently $\estSet$ (Line (ii) in \algGM), and we refer to this length of time as an \defn{interval}. Then, $\JoinEst_i= |\curIDs|/\estDur$ is used as an estimate for the good join rate over iteration $i$.  The setting of this variable is performed in Line 2(d) of \algGM.

We demonstrate in Section~\ref{sec:estimating}  that $\JoinEst_i$ is always within a constant-factor of $\iJRate_i$.  Our approach for estimating the join rate of good IDs may be of independent interest for settings where Assumptions A1 and A2 hold, and it is worthwhile considering why these assumptions are necessary, as discussed below.

Since good and bad IDs cannot be distinguished with certainty, \algGM cannot detect when an epoch begins and ends.  Instead, \algGM finds a lower bound on the number of new good IDs that have joined the system by pessimistically subtracting out the fraction of IDs that could be bad.  When this estimate of new good IDs is sufficiently large --- that is, when the current interval ends ---  \algGM ``guesses'' that at least one  epoch has occurred.  However, multiple (but still a constant number of) epochs may have actually occurred during this time. Thus, for this approach to yield a constant-factor estimate of $\iJRate_i$, one requirement is that the true good join rate cannot have changed ``too much" between these epochs, and Assumption A1 bounds the amount of such change.

Note that Assumption A1 addresses the good join rate measured over entire, consecutive epochs. However, an interval may end {\it within} some epoch $j$. If the good join rate over the portion of epoch $j$ overlapped by the interval deviates by more than a constant factor from $\epochRate_j$, then the estimate will be inaccurate. Assumption A2 ensures that this cannot happen.
	

\setlength\extrarowheight{2pt} 
\newcolumntype{P}[1]{>{\centering\arraybackslash}p{#1}}
\begin{table}[t]
  \begin{center}
    \caption{{A summary of additional notation used for \algGM.} }\vspace{-5pt}
    \label{tab:notation2}
    {\def\arraystretch{1.1}
    {
\begin{tabular}{ |P{1.5cm}|p{10.5cm}|  }
   \hline
 \multicolumn{2}{|c|}{{\bf Used in Model and Algorithm}}\\
     \hline
      \textbf{Symbol} &   \textbf{Description}\\
    \hline
     $\elen_i$ & Length of epoch $i$.\\
               \hline 
      $\epochRate_i$ & The good join rate in epoch $i$; that is, the number of good IDs joining in epoch $i$ divided by $\elen_i$.\\
      \hline
      $\JoinEst_i$ & Estimate of  the good join rate in iteration $i$.\\
       \hline 
      $\iJRate_i$ & Average good join rate over iteration $i$; that is, $\sum_{j\in \mathcal{E}}  \lambda_j \epochRate_i$, where $\mathcal{E}$ is the set of indices of epochs overlapped by the iteration, and $\lambda_j$ is the fraction of the iteration which overlaps epoch $j\in \mathcal{E}$.\\ 
      \hline    
     $\rateCur$ & Current number of join events in present iteration divided by\\
      & current time elapsed in this iteration.\\
      \hline
      $\estSet$ & Most recent system membership where $|\curIDs - \estSet| = (3/5) |\estSet|$ \\ 
      \hline
      $\estDur$ & Amount of time since system membership was most recently $\estSet$.\\
             \hline
             $\advAveCost_{\mathcal{I}}$ &  Cost to the adversary for solving puzzles whose solutions are used in any iteration of $\mathcal{I}$ divided by total length of those iterations.\\
             \hline
             $\joinRate_{\mathcal{I}}$ & Number of good IDs that join over the iterations in $\mathcal{I}$ divided by total length of those iterations. \\
             \hline
     \end{tabular}
    }} \vspace{-15pt}
  \end{center}
\end{table}


\subsection{Overview of \algGM}\label{subsec:overview} 

We now describe the execution of \algGM in iteration $i$ while referring to our pseudocode in Figure~\ref{alg:gmcom}.  

As stated earlier in Section~\ref{subsec:estimate-good-join-rate}, estimating the good-ID join rate involves tracking $\estSet$
and $\estDur$ in Lines (i) and (ii), respectively. The remainder of the iteration executes in two steps.
 
In Step 1, each joining ID must solve a puzzle of difficulty $1$ plus the number of IDs that join within the last $1/\JoinEst_i$ seconds, where $\JoinEst_i$ is our estimate of the join rate for good IDs.   We provide intuition for this cost  in Section~\ref{sec:entrance-cost}.

Step 1 lasts until the earliest point in time when the number of IDs that join in iteration $i$, {\boldmath{$n_i^a$}},  plus the number of IDs that depart in iteration $i$,  {\boldmath{$n_i^d$}}, is at least  $(1/11)|\setIDs_{i-1}|$. The quantities $n_i^a$ and $n_i^d$ are tracked by the committee.

When Step 1 ends, a purge is performed by the committee by issuing a  $1$-hard puzzle via \Diffuse via Step 2(a). In Step 2(b), each ID must respond with a valid solution within $1$ round, or else be removed from the system. The committee removes unresponsive or late IDs from its whitelist, which is maintained via Byzantine consensus amongst the committee members.

The current committee then selects $\Theta(\log n_0)$ IDs uniformly at random from  $\setIDs_{i}$ in Step 2(c).  The committee uses \Diffuse to inform $\setIDs_{i}$ that the selected IDs are the new committee for iteration $i+1$.  All messages from committee members are verified via public key digital signatures.

In Step 2 the estimate of the good-ID join rate over this iteration is calculated and iteration $i$ ends.


\subsection{Intuition for Entrance-Cost Function}\label{sec:entrance-cost}

We now give intuition for our entrance-cost function. Consider iteration $i$.  In the absence of an attack, the entrance cost should be proportional to the good join rate. This is indeed the case since the puzzle difficulty is $O(1)$ corresponding to the number of (good) IDs that join within the last $1/\JoinEst_i  \approx 1/\iJRate_i$ seconds.

In contrast, if there is a large attack, then our entrance-cost function imposes a significant cost on the adversary. Consider the case where a batch of many bad IDs is rapidly injected into the system. This drives up the entrance cost since the number of IDs joining within  $1/\JoinEst_i$ seconds increases.

More precisely, assume the adversary's spending rate is $T = \xi  \joinRateAll_i$, where {\boldmath{$\xi$}} is the entrance cost, and {\boldmath{$\joinRateAll_i$}} is the join rate for all IDs. For the good IDs, the spending rate due to the entrance cost is $\xi \joinRate_i$, and the spending rate due to the purge cost is $\joinRateAll_i$. Setting these to be equal, and solving for $\xi$, we get $\xi  = \joinRateAll_i/ \joinRate_i$;  in other words, the number of IDs that have joined over the last $1/ \joinRate_i$ seconds.  This is the entrance cost function that best balances entrance and purge costs.

Note that spending rate of good IDs due to the entrance costs and purge costs is:
\begin{eqnarray*}
\xi \joinRate_i + \joinRateAll_i & \leq & 2\joinRateAll_i \mbox{\hspace{62pt}By our setting of $\xi$.}\\
& = & 2\sqrt{\left(\joinRateAll_i\right)^2}\nonumber\\
&=& 2 \sqrt{\joinRateAll_i  \xi \joinRate_i} \mbox{\hspace{38pt}Since $\joinRateAll_i = \xi \joinRate_i$.}\\
& = & 2\sqrt{\joinRate_iT} \mbox{\hspace{47pt}Since $T = \xi  \joinRateAll_i$.}
\end{eqnarray*}

This informal analysis provides intuition for the asymmetric cost guaranteed by \algGM.  Proving that $\JoinEst_{i}$ is a good estimate for $\iJRate_i$, showing that the above analysis holds even only with good estimates of join rates, and incorporating multiple epochs and purge costs, is the subject of our analysis in the next section.


\section{Analysis of \algGM} \label{s:anal-gmcomm}
In this section we prove correctness and analyze computational costs for \algGM.

\subsection{Maintaining the Population and Committee Invariants}\label{sec:pop-invariant-gmcom}
\algGM ensures a modified Population Invariant which guarantees that at most a $1/6$-fraction of IDs are bad.

\begin{lemma}\label{lem:pop-gmcom}
For $\alpha\leq 1/18$, \algGM guarantees that the fraction of bad IDs in the system is less than $1/6$.

\end{lemma}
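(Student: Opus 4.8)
The plan is to mirror the argument used for \AlgB in Lemma~\ref{lem:badlesshalf-full}, but with the new purge threshold $(1/11)|\setIDs_{i-1}|$ and the tighter bound $\alpha \le 1/18$. The proof has two ingredients. First, an analog of Lemma~\ref{lem:bound_b}: conditioned on the Committee Invariant (so that the committee correctly runs the purge and maintains $\curIDs$), at the end of every iteration $i$ we have $|B_i| \le \alpha|\setIDs_i|$. For $i=0$ this is the guarantee of \genID. For $i>0$, Step~2 issues a fresh random string $r$ together with a difficulty-$1$ puzzle; the freshness of $r$ rules out precomputation, so within the single purge round the adversary can produce at most an $\alpha/(1-\alpha)$ multiple of the number of valid solutions from good IDs. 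Since $\setIDs_i$ is set to exactly the IDs returning valid solutions within one round, this gives $|B_i| \le \frac{\alpha}{1-\alpha}|G_i| \le \alpha|\setIDs_i|$, even after discounting the $\epsilon_d$-fraction of good IDs that may leave during that round.

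Second, I would bound the maximum bad fraction at any instant during an iteration $i>0$. Step~1 ends as soon as $n_i^a + n_i^d \ge (1/11)|\setIDs_{i-1}|$, so up to lower-order terms (a single triggering event, plus the $O(\epsilon_a + \epsilon_d)$-fraction of good IDs that may move in the triggering round, which I would absorb into the constants) we have $b_i^a + g_i^d \le (1/11)|\setIDs_{i-1}|$, where $b_i^a$ is the number of bad joins and $g_i^d$ the number of good departures so far in iteration $i$. As in Lemma~\ref{lem:badlesshalf-full}, the worst case is when all bad joins and all good departures come first, giving a bad fraction at most $\tfrac{|B_{i-1}| + b_i^a}{|\setIDs_{i-1}| + b_i^a - g_i^d} \le \tfrac{\alpha|\setIDs_{i-1}| + b_i^a}{|\setIDs_{i-1}| + b_i^a - g_i^d}$ by the first ingredient. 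This expression is increasing in $g_i^d$, so I would set $g_i^d = (1/11)|\setIDs_{i-1}| - b_i^a$, obtaining $\big(\alpha|\setIDs_{i-1}| + b_i^a\big)\big/\big((10/11)|\setIDs_{i-1}| + 2 b_i^a\big)$. Since $10/11 - 2\alpha > 0$ for $\alpha \le 1/18$, this is increasing in $b_i^a$, hence maximized at $b_i^a = (1/11)|\setIDs_{i-1}|$, which yields $\frac{\alpha + 1/11}{12/11} = \frac{11\alpha + 1}{12}$. For $\alpha \le 1/18$ this equals $\frac{29}{216} < \frac16$, as claimed.

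Finally, I would note — exactly as in Lemma~\ref{lem:badlesshalf-full} — that the argument is unaffected by bad IDs that depart without notifying the committee: such departures only decrease the bad fraction, and the controlling inequality $b_i^a + g_i^d \le (1/11)|\setIDs_{i-1}|$ does not involve the bad-departure count. The routine part is the one-variable optimization; the only place requiring care is the first ingredient together with the bookkeeping for the $\epsilon_a,\epsilon_d$-fraction of good IDs that can move during the purge round, but the gap between $29/216$ and $1/6$ is comfortably large enough to absorb those constants. I do not expect a substantive obstacle beyond getting this constant accounting right.
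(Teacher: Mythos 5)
Your proposal is correct and follows essentially the same route as the paper: the paper's proof simply invokes the already-established Lemma~\ref{lem:modifiedpop} (which carries out exactly the optimization you redo) and observes that the \algGM threshold $1/11$ is strictly smaller than $2\alpha/(1+3\alpha) = 2/21$ at $\alpha = 1/18$, so the $3\alpha = 1/6$ bound applies a fortiori. Your direct calculation yields the slightly sharper constant $29/216$, but the underlying argument — mirroring Lemma~\ref{lem:badlesshalf-full} with the new threshold and the inductive $|B_i| \le \alpha|\setIDs_i|$ invariant — is the same.
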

\begin{proof}
By Lemma~\ref{lem:modifiedpop}, if $\alpha = 1/18$ and a threshold for triggering a purge is  $\frac{2\alpha}{1 + 3 \alpha}|\setIDs_{i-1}| = (2/18)/(7/6)|\setIDs_{i-1}| > (1/11)|\setIDs_{i-1}|$, then the fraction of bad IDs in the system is always less than $3\alpha = 1/6$.
\end{proof}

\noindent The threshold for the number of join and departure events for triggering a purge also changes for \algGM, and so we prove that the Committee Invariant still holds.

\begin{lemma}\label{lem:maj_comm-GM} 
With probability $1 - O(1/n_0)$, over a polynomial number of join and departure events, the fraction of good IDs in the committee is always at least $10/11 - \alpha - 2 \epsilon_d$. \end{lemma}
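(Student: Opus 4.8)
The plan is to mirror the proof of Lemma~\ref{lem:maj_comm}, adjusting the constants to reflect the new purge threshold of $(1/11)|\setIDs_{i-1}|$ used by \algGM instead of $(1/3)|\setIDs_{i-1}|$. As in that lemma, I would argue by induction on iterations. For the base case $i=0$, the Committee Invariant holds by correctness of \genID. For the inductive step, fix an iteration $i>0$ and recall that the new committee is formed by sampling $c\log|S_i|$ IDs independently and uniformly at random from $S_i$, for a suitably large constant $c$. Since the adversary controls at most an $\alpha$-fraction of computational power, and hence (by the Population analysis, Lemma~\ref{lem:pop-gmcom}) the elected IDs are good with probability at least $1-\alpha$, a Chernoff bound gives that the number of good IDs initially in the committee is at least $(1-\alpha-\epsilon_0)c\log|S_i|$ with probability $1-O(n_0^{-(\gamma+1)})$, for any constant $\epsilon_0>0$ and $c$ large enough in terms of $\gamma$ and $\epsilon_0$.

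Next I would bound the good committee members lost to departures during the iteration. The key difference from Lemma~\ref{lem:maj_comm} is that an iteration now ends when $n_i^a+n_i^d \geq (1/11)|S_{i-1}|$, so at most $(1/11)|S_{i-1}| \leq (1/11)|S_i|$ (up to the $\Theta(1)$ discrepancy between $|S_{i-1}|$ and $|S_i|$, which only affects constants) good IDs depart over the iteration. Since departing good IDs are chosen uniformly at random from the good population (Section~\ref{sec:join}), the expected number of committee members among them is at most $(1/11)c\log|S_i|$, and a Chernoff bound yields that at most $(1/11+\epsilon_1)c\log|S_i|$ good committee members depart during the iteration, with probability $1-O(n_0^{-(\gamma+1)})$. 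Adding the $\epsilon_d$-fraction of good IDs that may depart during the single-round purge at the end of the iteration costs at most another $\epsilon_d\, c\log|S_i|$, and — since bad committee members may also churn, but we pessimistically ignore bad departures — a second $\epsilon_d$ term accounts for good departures relative to a possibly shrinking committee denominator; hence the fraction of good IDs in the committee throughout is at least $1 - \alpha - 1/11 - \epsilon_0 - \epsilon_1 - 2\epsilon_d = 10/11 - \alpha - \epsilon_0 - \epsilon_1 - 2\epsilon_d$. Absorbing the arbitrarily small slack constants $\epsilon_0,\epsilon_1$ into the statement (or taking them to $0$ in the limiting bound) gives the claimed $10/11 - \alpha - 2\epsilon_d$. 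Finally, a union bound over the $n_0^\gamma$ iterations gives the overall failure probability $O(1/n_0)$.

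The main obstacle is bookkeeping the constants carefully: one must track the ratio $|S_i|/|S_{i-1}|$ (now at most $12/11$ rather than $4/3$) consistently so that the Chernoff-bound exponents still beat $n_0^{-(\gamma+1)}$ for the {\it same} constant $c$ used in the committee size, and one must be explicit that the $2\epsilon_d$ term in the statement bundles together the within-iteration purge departures and the denominator-shrinkage effect. Everything else is a routine re-run of the Chernoff arguments from Lemma~\ref{lem:maj_comm}; no new probabilistic idea is needed. One should also note that, combined with Lemma~\ref{lem:pop-gmcom}'s requirement $\alpha\leq 1/18$, the bound $10/11 - \alpha - 2\epsilon_d$ is comfortably above $1/2$ for sufficiently small $\epsilon_d$, so the committee can run Byzantine consensus as required by the \cgoal.
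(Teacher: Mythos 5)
Your approach is the same as the paper's: rerun the Chernoff arguments of Lemma~\ref{lem:maj_comm} with the purge threshold changed from $1/3$ to $1/11$, add an $\epsilon_d$ term for good departures during the single-round purge, and union-bound over $n_0^\gamma$ iterations. The main probabilistic content matches.

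However, there is a bookkeeping discrepancy in how you arrive at the exact constant $10/11 - \alpha - 2\epsilon_d$. You obtain $10/11 - \alpha - \epsilon_0 - \epsilon_1 - 2\epsilon_d$ after inventing a second $\epsilon_d$ term attributed to a ``denominator-shrinkage effect,'' and then try to dispose of $\epsilon_0, \epsilon_1$ by ``taking them to $0$ in the limiting bound.'' That last step is not rigorous: $\epsilon_0, \epsilon_1$ are the Chernoff slack parameters and cannot be sent to $0$ for a fixed committee-size constant $c$. The paper does not need your extra denominator-shrinkage $\epsilon_d$ at all; instead it reaches the stated constant cleanly by \emph{choosing} $\epsilon_0 = \epsilon_1 = (1/2)\epsilon_d$, so that $\epsilon_0 + \epsilon_1 + \epsilon_d = 2\epsilon_d$ exactly, and picking $c$ large enough (in terms of $\gamma$ and these fixed $\delta, \delta'$) for the tail bounds. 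Dropping your spurious second $\epsilon_d$ and instead fixing $\epsilon_0 = \epsilon_1 = \epsilon_d/2$ closes this gap and recovers the paper's bound verbatim.
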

\begin{proof}
The argument is almost identical to the proof of Lemma~\ref{lem:maj_comm}, but we provide the details for completeness.  For iteration  $i = 0$, the committee invariant holds by the use of \cite{hou2017randomized} to initialize the system (recall Section~\ref{sec:overview}; for details, see Lemma 6 of \cite{andrychowicz2015pow}).
	
Fix an iteration $i > 0$. Recall that a new committee is elected by the existing committee, by selecting $c\log |S_i|$ IDs independently and uniformly at random from the set $S_i$ of IDs left after the purge, where $c>0$ is a sufficiently large constant set later in this proof. Let $X_G$ be a random variable giving the number of good IDs in the committee at formation. Then:
	\begin{eqnarray}\label{eq:expgc-gm}
		E[X_G] & = & \frac{|G_i|}{|S_i|}c\log{|S_i|} \text{ } = \text{ } (1-\alpha)c\log|S_i| 
	\end{eqnarray}
	 where the last inequality follows from the fact that the fraction of computational power with the adversary is at most $\alpha$.  By Chernoff bounds~\cite{mitzenmacher2017probability}, for any $\delta$, $0 < \delta < 1$:
	 \begin{eqnarray*}
	 	Pr \left( X_G < (1-\delta)(1-\alpha)c\log|S_i| \right) \leq  \exp{\left\{ -\frac{\delta^2 (1-\alpha) c\log|S_i|}{2} \right\}} 
 \end{eqnarray*}
 
For any constant $\epsilon_0 > 0$, setting $\delta = \frac{\epsilon_0}{1-\alpha}$, we have that:
 $$ Pr(X_g < (1- \alpha - \epsilon_0) c\log{|S_i|}) = O(n_0^{-(\gamma + 1)}).$$
 
Where the last equality holds for $c \geq \frac{28}{13}\frac{(\gamma + 1)}{\delta^2}$.
 	 	
Next, let $Y_g$ be a random variable denoting the number of good IDs that depart from the committee during iteration $i$, but before the purge, i.e. up to the point when the number of departures of IDs from the system is $|S_{i-1}|/11$.  Departing good IDs are selected uniformly at random from the population of good IDs (See Section~\ref{sec:join}). Thus, we obtain:
		\begin{eqnarray}\label{eq:dgoodc-gm}
			E[Y_g] & \leq & \frac{|S_i|}{11}\left(\frac{c\log|S_i|}{|S_i|}\right) \text{ } = \text{ } \frac{c}{11} \log|S_i|
		\end{eqnarray}  
		By Chernoff bounds~\cite{mitzenmacher2017probability}:
		\begin{eqnarray*}
			Pr \left( Y_g > {(1+\delta')} \frac{c \log|S_i|}{11} \right) &\leq&  \exp\left\{ -\frac{\delta'^2c}{33}\log|S_i| \right\} = O\left( {n_0^{-(\gamma+1)}} \right)
		\end{eqnarray*}
		where the first inequality holds for any $\delta$, $0< \delta' < 1$ and the equality holds for all $c \geq \frac{33(\gamma + 1)}{\delta'^2}$. 

Thus, for any fixed $\epsilon_1>0$, the minimum number of good IDs in the committee at any point during the iteration is greater than $(1 - \alpha - 1/11 - \epsilon_0 - \epsilon_1) c\log{|S_i|}$. Good IDs can also depart while the committee is running the single-round purge at the end of iteration.  But, by assumption, at most an $\epsilon_d$-fraction of good IDs depart per round, where $\epsilon_d$ is sufficiently small. Setting $\epsilon_0 = \epsilon_1 = (1/2) \epsilon_d$, we get that the number of good IDs exceeds $(10/11 - \alpha - 2 \epsilon_d) c\log{|S_i|}$, with probability of error $O\left( {n_0^{-(\gamma+1)}} \right)$. 

Finally, the claim follows by taking a union bound over all $n_0^\gamma$ iterations.
\end{proof}


\subsection{Estimating the Good ID Join Rate}\label{sec:estimating}
 
To calculate the entrance cost in iteration $i$, \algGM requires an estimate of the good ID join rate, $\iJRate_{i}$.  However,  good IDs cannot be discerned from bad IDs upon entering the system,  and so the adversary may inject bad IDs in an attempt to obscure the true join rate of good IDs. 

In this section, we prove that our estimate of the good-ID join rate, $\JoinEst_i$,  is within a constant factor of $\iJRate_i$. For any time {\boldmath{$t$}}, let {\boldmath{$\setIDs_t$}} and {\boldmath{$\setGood_t$}} denote the set of all IDs and set of good IDs, respectively, in the system at time $t$. 

We say that interval $\estDur$ \defn{touches} an epoch if there is a point in time belonging to both the interval $\estDur$ and the interval of time corresponding to the epoch; it does not necessarily mean that $\ell$ completely contains the epoch, or vice versa.

\begin{lemma}\label{lem:iteration-epochs}
An iteration touches at most $2$ epochs.
\end{lemma}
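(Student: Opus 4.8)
The plan is first to reduce the claim to showing that \emph{no iteration entirely contains an epoch}. By Definition~\ref{def:epoch} the epochs partition the timeline into consecutive disjoint intervals, so if the time interval of some iteration met three or more epochs $E_p,E_{p+1},\dots,E_q$ with $q\ge p+2$, then each of $E_{p+1},\dots,E_{q-1}$ would have both of its endpoints inside that iteration and hence lie entirely within it. Thus it suffices to rule out an iteration $i$ that contains a whole epoch $j$, and I would do this by double-counting the good join events during epoch $j$.

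For the lower bound, Definition~\ref{def:epoch} guarantees that $|\setGood_j - \setGood_{j-1}| \ge \tfrac34 |\setGood_j|$ distinct good IDs are present at the end of epoch $j$ but absent at its start, so at least $\tfrac34 |\setGood_j|$ good join events occur during epoch $j$; moreover $|\setGood_j|$ is a constant fraction of $|\setIDs_{i-1}|$, since the purge ending iteration $i-1$ leaves at most an $\alpha$-fraction bad (so $|\setGood_{i-1}| \ge (1-\alpha)|\setIDs_{i-1}|$) and the good population can only shrink during iteration $i$ by the number of good departures there, which the committee tracks (good IDs announce departures) and which is at most $n_i^d < \tfrac1{11}|\setIDs_{i-1}|$ plus the events of the single round that triggers the purge. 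For the upper bound, since epoch $j$ lies inside iteration $i$, the number of good join events in epoch $j$ is at most $n_i^a$, and the stopping rule forces $n_i^a < \tfrac1{11}|\setIDs_{i-1}|$ plus, again, the events of that triggering round (this holds even though the committee may undercount bad departures, since once $n_i^a$ alone reaches the threshold the iteration ends regardless). Writing $X$ for a bound on the join/departure events in the triggering round (or the $O(\Delta)$ rounds, under bounded latency), the two estimates combine to $\tfrac34\bigl((1-\alpha-\tfrac1{11})|\setIDs_{i-1}| - X\bigr) \le \tfrac1{11}|\setIDs_{i-1}| + X$, which for $\alpha \le \tfrac1{18}$ forces $X$ to exceed an explicit constant (roughly $0.3$) times $|\setIDs_{i-1}|$ --- a contradiction once $X$ is shown to be a small enough fraction of $|\setIDs_{i-1}|$.

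The only real work, and the step I expect to be the main obstacle, is bounding $X$ (and, relatedly, controlling how far the good population drifts over an iteration). In one round: good joins and good departures each number at most $(\epsilon_a+\epsilon_d)$ times the current good population (Section~\ref{sec:join}); unannounced bad departures number at most the current count of bad IDs, which is below $\tfrac15$ of the good population by Lemma~\ref{lem:pop-gmcom}; and bad joins number at most the adversary's per-round hashing budget divided by the minimum entrance-puzzle difficulty $1$, which is an $O(\alpha)$ fraction of the population (Section~\ref{sec:adv} and the puzzle construction of Section~\ref{subsec:puzzle-construction}). Since the good population stays below, say, $2|\setIDs_{i-1}|$ throughout the iteration, these combine to $X = O(\alpha + \epsilon_a + \epsilon_d)\,|\setIDs_{i-1}|$, comfortably below the threshold above when $\alpha \le \tfrac1{18}$ and $\epsilon_a,\epsilon_d$ are the small model constants. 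Because the gap between $\tfrac34(1-\alpha-\tfrac1{11})$ and $\tfrac1{11}$ is large, the estimate is robust and the constants are not delicate; the care required is only in bookkeeping the stopping-rule overshoot and the committee's possible undercount of bad departures.
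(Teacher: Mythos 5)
Your argument is fundamentally the same as the paper's: assume an iteration contains a whole epoch, lower-bound the good join events over that epoch by $\tfrac34$ of the good population via Definition~\ref{def:epoch} and the Population Invariant, upper-bound good joins over the iteration by the purge-trigger threshold plus the triggering round's overshoot, and derive a contradiction. The differences are cosmetic: the paper invokes Lemma~\ref{lem:pop-gmcom} at time $t_2$ to write $|\setGood_{t_2}|\ge\tfrac56|\setIDs_{t_2}|$ and then lower-bounds $|\setIDs_{t_2}|$ by tracking only departures, whereas you instead start from $|\setGood_{i-1}|\ge(1-\alpha)|\setIDs_{i-1}|$ post-purge and subtract tracked good departures; both land on a lower bound that is a large constant fraction of $|\setIDs_{i-1}|$, and the arithmetic closes the same way.

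One place where you do substantially more work than is needed, and where a reader might stumble: in bounding the triggering-round overshoot $X$, you include bad joins and unannounced bad departures. Neither enters the inequality you set up. The upper bound needs only \emph{good} joins in that round, and the lower bound needs only \emph{good} departures; both are directly controlled by the model constants $\epsilon_a,\epsilon_d$ from Section~\ref{sec:join}, which is exactly the single term ($\epsilon_a\cdot\tfrac{12}{11}|\setGood_{t_0}|$) the paper uses. Your $O(\alpha)$ estimate on per-round bad joins is in fact not safe as stated --- Section~\ref{sec:how-used} explicitly allows the adversary to precompute entrance-puzzle solutions with future timestamps, so the number of bad IDs that can be injected in a single round is not cleanly bounded by the adversary's per-round hashing budget. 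Luckily, because that quantity is irrelevant to the double count, the proof survives; but you should drop that clause rather than leave a load-bearing-looking step that does not actually hold.
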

\begin{proof}
Assume that some iteration $i$ starts at time $t_0$ and touches at least three epochs; we will derive a contradiction. This assumption implies that there is at least one epoch entirely contained within iteration $i$.  Let this epoch start at time $t_1\geq t_0$ and end at time $t_2 > t_1$ within iteration $i$. 

By definition, the number of good IDs that join over this epoch is at least $(3/4)|\setGood_{t_2}| $.   Observe that:
 \begin{eqnarray*} 
 (3/4)|\setGood_{t_2}|  & \geq & (3/4) (5/6)|\setIDs_{t_2}| \mbox{~~~by the Population Invariant (Lemma~\ref{lem:pop-gmcom})}\\
& = & (5/8)|\setIDs_{t_2}|\\
& > & (5/8)((10/11)|\setIDs_{t_0}|)\\
& = & (25/44)|\setIDs_{t_0}| 
\end{eqnarray*}

\noindent where the second to last line follows from noting that $|\setIDs_{t_2}| > |\setIDs_{t_0}| - (1/11)|\setIDs_{t_0}| \geq  (10/11)|\setIDs_{t_0}| $, since at most $(1/11)|\setIDs_{t_0}|$ departures can happen before a purge occurs.

Finally, observe two bounds.  First,  during any iteration at most $(1/11)|\setGood_{t_0}|$  good IDs join before the purge. Second,  
at most $\epsilon_a(1 + 1/11) |\setGood_{t_0}| = \epsilon_a(12/11) |\setGood_{t_0}|$ join during the single round  of the purge. Therefore, the total number of good IDs that can join over an iteration is $(1/11)|\setGood_{t_0}| + \epsilon_a(12/11) |\setGood_{t_0}|$.
The contradiction follows from noting that:
 $$(25/44)|\setIDs_{t_0}| > (1/11)|\setGood_{t_0}| + \epsilon_a(12/11) |\setGood_{t_0}|.$$
\noindent for $\epsilon_a < 7/16$. 
\end{proof}

\begin{lemma}\label{lem:interval-epochs}
An interval touches at most two epochs and cannot completely overlap any single epoch.
\end{lemma}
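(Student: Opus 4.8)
The plan is to mimic the counting argument of Lemma~\ref{lem:iteration-epochs}, but with the threshold $(3/5)$ in place of $(1/11)$, since an interval ends when $|\curIDs - \estSet| \geq (3/5)|\curIDs|$ rather than when $n_i^a + n_i^d \geq (1/11)|\setIDs_{i-1}|$. First I would suppose, for contradiction, that some interval $\estDur$ starting at time $t_0$ touches at least three epochs; then at least one whole epoch, starting at time $t_1 \geq t_0$ and ending at time $t_2 > t_1$, lies strictly inside $\estDur$. By Definition~\ref{def:epoch}, the number of good IDs joining over that epoch is at least $(3/4)|\setGood_{t_2}|$, and by the Population Invariant (Lemma~\ref{lem:pop-gmcom}) we have $|\setGood_{t_2}| \geq (5/6)|\setIDs_{t_2}|$.

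Next I would lower-bound $|\setIDs_{t_2}|$ in terms of $|\setIDs_{t_0}|$: over the interval the symmetric-difference condition has not yet fired, so fewer than $(3/5)|\curIDs|$ IDs have left relative to $\estSet$; more carefully, since departures up to time $t_2$ are bounded by the condition that triggers a new interval, one gets $|\setIDs_{t_2}|$ is at least a constant fraction of $|\setIDs_{t_0}|$ (roughly $(2/5)|\setIDs_{t_0}|$ in the worst case, or better — I would track the exact constant). Combining the three estimates yields a lower bound of the form $c_1 |\setIDs_{t_0}|$ on the number of good joins forced during the epoch, for an explicit constant $c_1$. On the other hand, I would upper-bound the number of good IDs that can join over a \emph{single interval}: this is governed by the $(3/5)$ symmetric-difference threshold together with the per-round join cap $\epsilon_a$ during the terminal purge round, giving an upper bound $c_2 |\setGood_{t_0}| \leq c_2 |\setIDs_{t_0}|$. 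The contradiction follows once I check $c_1 > c_2$, which will hold for $\epsilon_a$ sufficiently small (the paper's standing assumption). For the second clause — that an interval cannot completely overlap (i.e., strictly contain) any single epoch — the same inequality does the work: if a whole epoch were contained in the interval, the epoch alone would force at least $(3/4)|\setGood_{t_2}|$ good joins, which already exceeds the total good joins permissible in one interval; alternatively this clause is an immediate byproduct of the contradiction derived in the first part, since containing an epoch is exactly what let us touch a third epoch.

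The main obstacle I anticipate is bookkeeping the constants correctly: the interval threshold $(3/5)$ is stated against $|\curIDs|$ (the \emph{current} good-ID count, which itself fluctuates), not against a fixed snapshot, so I must be careful about whether $|\curIDs|$ is measured at the start or end of the interval, and I need the Population Invariant to convert between ``IDs'' and ``good IDs'' consistently at both endpoints $t_0$ and $t_2$. I would also need to confirm that the epoch's $(3/4)$-of-$|\setGood_{t_2}|$ join requirement composes cleanly with the interval's departure-side slack, since an interval can end due to departures as well as joins — but because we are lower-bounding \emph{joins} forced by a contained epoch and upper-bounding \emph{joins} allowed in an interval, the departure contributions only help the argument. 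Once the constants are pinned down, the inequality $c_1 > c_2$ should be comfortably satisfied, analogously to the $\epsilon_a < 7/16$ bound in Lemma~\ref{lem:iteration-epochs}.
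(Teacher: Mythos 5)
Your high-level strategy --- suppose an interval touches three epochs, locate a contained epoch $(t_1,t_2]$, use Definition~\ref{def:epoch} plus the Population Invariant to force a large change in good IDs, and contradict the interval's $(3/5)$ threshold --- matches the paper's argument. But you have imported two pieces of machinery from Lemma~\ref{lem:iteration-epochs} that do not apply to intervals, and they would derail the bookkeeping you are worried about.

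First, intervals are not iterations: an interval ends purely because the condition $|\curIDs - \estSet| \geq \tfrac{3}{5}|\curIDs|$ fires, with no terminal purge round. So the per-round join cap $\epsilon_a$ plays no role here, and the paper's proof never invokes it. Second, you claim that ``departures up to time $t_2$ are bounded by the condition that triggers a new interval,'' but $\curIDs - \estSet$ counts IDs \emph{present now that were not in $\estSet$} --- i.e.\ new joiners --- not departures. Departures shrink $\curIDs$ without ever entering $\curIDs - \estSet$, so the interval condition gives you no control over $|\setIDs_{t_2}|$ relative to $|\setIDs_{t_0}|$, and the step where you try to translate $|\setIDs_{t_2}|$ into a ``constant fraction of $|\setIDs_{t_0}|$'' does not go through the way you describe.

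The fix is to drop both complications and compare everything at time $t_2$, which is exactly what the paper does: since $\setGood_{t_2} - \setGood_{t_1} \subseteq \setIDs_{t_2} - \setIDs_{t_1} \subseteq \setIDs_{t_2} - \setIDs_{t_0}$, the epoch condition and the Population Invariant give $|\setIDs_{t_2} - \setIDs_{t_0}| \geq \tfrac{3}{4}\cdot\tfrac{5}{6}|\setIDs_{t_2}| = \tfrac{5}{8}|\setIDs_{t_2}| > \tfrac{3}{5}|\setIDs_{t_2}|$, so the interval would have ended by $t_2$ --- contradiction, with $c_1 = 5/8$ and $c_2 = 3/5$ and no $\epsilon_a$ anywhere. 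Your observation that the second clause (an interval cannot contain an epoch) is an immediate byproduct of the same inequality is correct.
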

\begin{proof}
Assume that some interval starts at time $t_0$ and touches at least three epochs; we will derive a contradiction. This assumption implies that there is at least one epoch entirely contained within the interval.  Consider the first such epoch, and let it start at time $t_1\geq t_0$ and end at time $t_2> t_1$.

Observe that:
 \begin{eqnarray*} 
 |\setIDs_{t_2}  -\setIDs_{t_1} |  &\geq &| \setGood_{t_2} - \setGood_{t_1}| \\
 & \geq & (3/4)| \setGood_{t_2}| \mbox{~~~by the definition of an epoch}\\
&\geq & (3/4) (5/6)|\setIDs_{t_2}| \mbox{~~~by the Population Invariant (Lemma~\ref{lem:pop-gmcom})}\\
& = & (5/8)|\setIDs_{t_2}|\\
& > & (3/5) |\setIDs_{t_2}| 
\end{eqnarray*}
But this is a contradiction since it implies that the interval must end before time $t_2$.
\end{proof}

\begin{figure*}[h!]
\centering
\includegraphics[width = 0.9\textwidth]{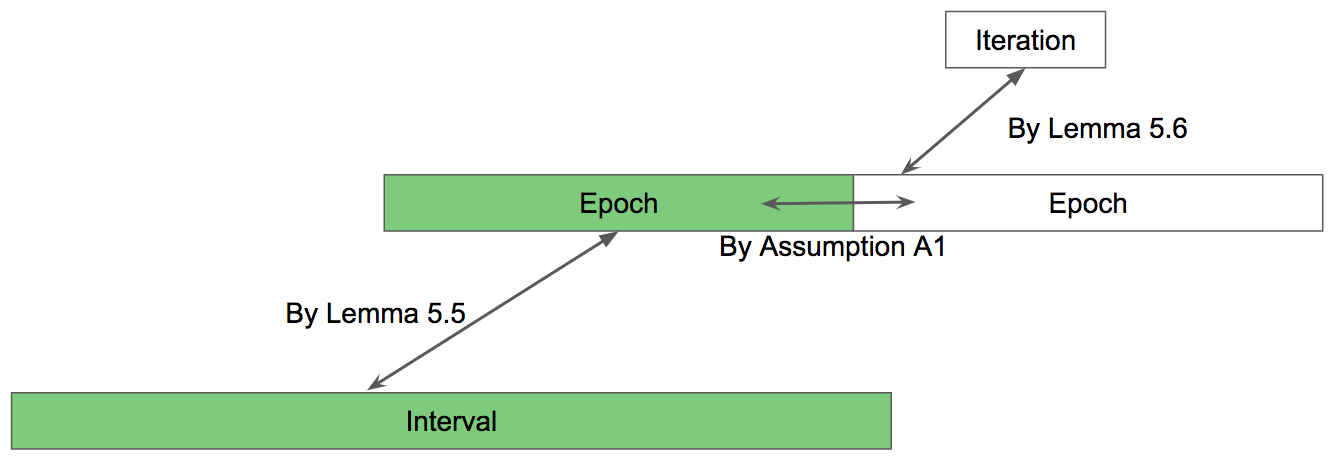}
\caption{A depiction of the relationship between the good join rate in iterations, epochs, and durations. Arrows represent that good join rates of the endpoints are within constant factors of each other. Green indicates the most recently-finished epoch and duration being tracked by the committee to obtain $\JoinEst_i$.}
  \label{fig:estimating-JG}  
  \vspace{0pt}
\end{figure*}

\begin{lemma}\label{lem:bounded}
Consider any interval of length $\estDur$, where there are $|\curIDs|$ IDs in the system at the end of the interval, and let epoch $i\geq 1$ be the most recent epoch that the interval touches. Then:
\end{lemma}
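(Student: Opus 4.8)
The conclusion to establish is that $\JoinEst_i=|\curIDs|/\estDur$ lies within constant factors — depending only on the constants in A1 and A2 and on $\alpha$ — of $\epochRate_i$, the good join rate in the most recent epoch the interval touches; equivalently, it suffices to show $|\curIDs|=\Theta(\epochRate_i\,\estDur)$ and then divide by $\estDur$. First I would pin down the geometry. By Lemma~\ref{lem:interval-epochs} the interval touches epoch $i$ and at most one earlier epoch — necessarily epoch $i-1$ — and completely contains neither; since epoch $i$ is the \emph{most recent} epoch it touches, the interval is a (possibly empty) suffix of epoch $i-1$ followed by a nonempty prefix of epoch $i$ that ends exactly when the interval ends. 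Write $\estDur=\estDur_{i-1}+\estDur_i$ for the lengths of these two pieces.

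The next step is to sandwich the number $g$ of good IDs that join during the interval between constant multiples of $\epochRate_i\,\estDur$. Each piece lies inside a single epoch, so Assumption A2 applies to it: up to an $O(1)$ correction for a piece containing fewer than two good-join events, the number of good joins in the epoch-$j$ piece is between $\ATwoL\,\epochRate_j$ and $\ATwoH\,\epochRate_j$ times that piece's length (for the lower bound one applies A2 to the prefixes of each epoch and subtracts). Assumption A1, $\AOneL\,\epochRate_{i-1}\le\epochRate_i\le\AOneH\,\epochRate_{i-1}$, then lets me replace $\epochRate_{i-1}$ by $\epochRate_i$ at the cost of one more constant factor, yielding $c_1\,\epochRate_i\,\estDur\le g\le c_2\,\epochRate_i\,\estDur$.

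It then remains to show $g=\Theta(|\curIDs|)$. The lower bound is immediate from the defining property of the interval: $|\curIDs\setminus\estSet|\ge(3/5)|\curIDs|$, every ID in $\curIDs\setminus\estSet$ joined during the interval, and by the Population Invariant (Lemma~\ref{lem:pop-gmcom}) fewer than $(1/6)|\curIDs|$ of them are bad, so $g\ge(3/5-1/6)|\curIDs|=(13/30)|\curIDs|$. The upper bound $g=O(|\curIDs|)$ is the delicate point, because $g$ counts \emph{all} good joins in the interval, not merely the net growth of the good set. Here I would use that, by Lemma~\ref{lem:interval-epochs}, the interval swallows no complete epoch, so $\estDur_j<\elen_j$; combined with A1 this gives $g\le c_2\,\epochRate_i\,\estDur<c_2(\AOneH\,\epochRate_{i-1}\elen_{i-1}+\epochRate_i\elen_i)$, and each term is the number of good joins over a \emph{full} epoch. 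Such a full-epoch count is bounded by $O(|\curIDs|)$ using the minimality (``shortest time'') clause of Definition~\ref{def:epoch} — an epoch ends the first time the net new good IDs reach $3/4$ of the current good population — together with the per-round churn limits $\epsilon_a,\epsilon_d$ that cap good departures during the epoch, and the fact, argued as in Lemma~\ref{lem:iteration-epochs} via the $(1/11)$ purge threshold, that the good populations at the ends of consecutive epochs are within constant factors of $|\curIDs|$. Combining the two directions gives $g=\Theta(|\curIDs|)$, hence $|\curIDs|=\Theta(\epochRate_i\,\estDur)$, which is the claim.

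The A1/A2 rate-to-count bookkeeping and the Population Invariant application are routine. The one genuinely subtle step is the upper bound $g=O(|\curIDs|)$: a priori, adversary-scheduled churn could make the number of good IDs joining during the interval far exceed the final population, and ruling this out is precisely where Lemma~\ref{lem:interval-epochs} (no complete epoch lies inside the interval) together with the minimality clause in the definition of an epoch carry the argument.
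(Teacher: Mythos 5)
Your proposal diverges from the paper's proof in a way that opens a genuine gap. The paper never introduces the raw good-join count $g$; it works directly with the set difference $|\setGood_{t_2} - \setGood_{t_1}|$, i.e., the good IDs present at the end of the interval that were absent at its start. By the very specification of when the interval ends, this quantity is sandwiched between $(3/5-1/6)|\setIDs_{t_2}|$ and $(3/5)|\setIDs_{t_2}|$ --- both directions are immediate, and in particular the upper bound comes for free. The paper then bounds $|\setGood_{t_2}-\setGood_{t_1}|/(t_2-t_1)$ by constant multiples of $\epochRate_i$ using A1/A2, splitting into three cases (single epoch; two epochs with $\ge 2$ joins in each; two epochs with $<2$ joins in the first piece) and combining. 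No bound on the raw join count over a full epoch is ever needed.

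Your route instead passes through $g$, which can strictly exceed $|\setGood_{t_2}-\setGood_{t_1}|$ whenever a good ID joins and then departs inside the interval, and you correctly flag $g = O(|\curIDs|)$ as the delicate step --- but the argument you offer for it does not close the gap. The minimality clause of Definition~\ref{def:epoch} controls the \emph{set difference} $|\setGood_i - \setGood_{i-1}|$, not the raw join count, so it says nothing about how many good IDs can join and subsequently depart within the epoch; and the per-round caps $\epsilon_a,\epsilon_d$ only bound single-round events, not the cumulative total over an epoch of a priori unbounded length. Concretely, a long run of join-then-depart churn inflates $g$ without advancing the set difference, so neither the epoch nor the interval ends, and the chain ``$g \le c_2\epochRate_i\estDur$, $\estDur < \elen_{i-1}+\elen_i$, full-epoch counts are $O(|\curIDs|)$'' does not follow from the ingredients you cite; it would require an additional lemma (e.g., exploiting the uniformly-random-departure assumption) that you have not supplied. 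Switching the intermediary to the net set difference, as the paper does, sidesteps this entirely and is the shorter path.
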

$$\left(\frac{5}{6}\right) \AOneL \ATwoL \epochRate_i  \leq \frac{|\curIDs|}{\estDur} \leq 5 \AOneH \ATwoH \epochRate_i.$$
\begin{proof}
We break the argument into three cases.\smallskip

\noindent{\bf Case 1:} The interval touches only a single epoch where the epoch begins at $t_0$, the interval begins at $t_1$, and the interval ends at $t_2$. By Assumption A2, we have:

$$\ATwoL \epochRate_i \leq \frac{|\setGood_{t_2} - \setGood_{t_1}|}{t_2 - t_1} \leq   \ATwoH \epochRate_i$$

\noindent and by the Population Invariant and the specification of an interval: 
$$\left(\frac{2}{5}\right)|\setIDs_{t_2}| \leq \left(\frac{3}{5} - \frac{1}{6}\right)|\setIDs_{t_2} | \leq |\setGood_{t_2} - \setGood_{t_1}| \leq \left(\frac{3}{5}\right)|\setIDs_{t_2} |. $$

\noindent Therefore, we have:

$$ \left(\frac{5}{3}\right) \ATwoL \epochRate_i \leq \frac{ |\setIDs_{t_2} |}{t_2 - t_1} \leq \left(\frac{5}{2}\right) \ATwoH \epochRate_i $$

\smallskip

\noindent{\bf Case 2:}  The interval touches epochs $i-1$ and $i$ and there are at least $2$ good ID join events in each epoch. Let epoch $i-1$ start at time $t_0$ and end at $t_2$ (and so epoch $i$ starts at $t_2$), and let the interval start at time $t_1 \geq t_0$ and end at $t_3 \geq t_2$. By Assumptions A1 and A2, we have:

$$ \AOneL \ATwoL \epochRate_i \leq \frac{|\setGood_{t_3} - \setGood_{t_1}|}{t_3 - t_1} \leq  \AOneH \ATwoH \epochRate_i $$

\noindent and by the Population Invariant and the specification of an interval:

$$\left(\frac{2}{5}\right)|\setIDs_{t_3}| \leq \left(\frac{3}{5} - \frac{1}{6}\right)|\setIDs_{t_3} | \leq |\setGood_{t_3} - \setGood_{t_1}| \leq \left(\frac{3}{5}\right)|\setIDs_{t_3} |. $$

\noindent Therefore, we have:

\begin{eqnarray}
\left(\frac{5}{3}\right) \AOneL \ATwoL \epochRate_i  \leq \frac{ |\setIDs_{t_3} |}{t_3 - t_1} \leq  \left(\frac{5}{2} \right) \AOneH \ATwoH \epochRate_i     \label{eq:case2} 
\end{eqnarray}

\noindent{\bf Case 3:} The interval touches epochs $i-1$ and $i$, and w.l.o.g. we have a single good join event in the portion of epoch $i-1$ that overlaps the interval; denote the length of this overlap by $\lambda'>0$.  As with Case 2, let epoch $i-1$ start at time $t_0$ and end at $t_2$ (and so epoch $i$ starts at $t_2$), and let the interval start at time $t_1 \geq t_0$ and end at $t_3 \geq t_2$. 

Observe that the single good join event in epoch $i-1$ increases the numerator of the bounded quantity in Equation~\ref{eq:case2} by $1$, and so twice the upper bound in Case 2 suffices here.  The denominator increases by $\lambda'$, where $\lambda' \leq t_3-t_1$, so half the lower bound in Case 2 suffices. This implies:
\begin{eqnarray*}
\left(\frac{5}{6}\right) \AOneL \ATwoL \epochRate_i  \leq \frac{ |\setIDs_{t_3} |}{t_3 - t_1} \leq  5 \AOneH \ATwoH \epochRate_i    
\end{eqnarray*}
\end{proof}

\begin{lemma}\label{lem:four}
For any epoch $i\geq 1$, and any iteration $i' \geq 1$ that epoch $i$ touches: 
\end{lemma}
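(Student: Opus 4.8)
The plan is to exhibit $\iJRate_{i'}$ as a convex combination of the good join rates of a small set of epochs, each of which lies within a constant factor of $\epochRate_i$ by Assumption~A1, so that the combination does too. Recall from Section~\ref{subsec:estimate-good-join-rate} that $\iJRate_{i'}=\sum_{j\in\mathcal{E}}\lambda_j\,\epochRate_j$, where $\mathcal{E}$ indexes the epochs overlapped by iteration $i'$, each $\lambda_j\ge 0$, and $\sum_{j\in\mathcal{E}}\lambda_j=1$.

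First I would pin down $\mathcal{E}$. Epochs are, by definition, contiguous intervals that partition time, so the epochs touched by iteration $i'$ carry consecutive indices; by Lemma~\ref{lem:iteration-epochs} there are at most two of them. Since epoch $i$ touches iteration $i'$ by hypothesis, this forces $\mathcal{E}\in\bigl\{\{i\},\,\{i-1,i\},\,\{i,i+1\}\bigr\}$. Hence it suffices to bound $\epochRate_j$ against $\epochRate_i$ for $j\in\{i-1,i,i+1\}$.

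Next I would invoke Assumption~A1, which states $\AOneL\,\epochRate_{j-1}\le\epochRate_j\le\AOneH\,\epochRate_{j-1}$ for all $j>1$; rearranging also gives $\epochRate_j/\AOneH\le\epochRate_{j-1}\le\epochRate_j/\AOneL$. Applying this with $j=i$ (to control the neighbour $i-1$) and with $j=i+1$ (to control the neighbour $i+1$), and using $\AOneL\le 1\le\AOneH$ to also cover the trivial case $j=i$, every $j\in\mathcal{E}$ satisfies $\epochRate_i/\kappa\le\epochRate_j\le\kappa\,\epochRate_i$ with $\kappa:=\max\{\AOneH,\,1/\AOneL\}\ge 1$. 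Since a convex combination of numbers lying in $[\epochRate_i/\kappa,\;\kappa\,\epochRate_i]$ lies in the same interval, $\epochRate_i/\kappa\le\iJRate_{i'}\le\kappa\,\epochRate_i$, which is the asserted bound (with the stated constants). See Figure~\ref{fig:estimating-JG} for the picture this formalizes.

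I do not expect a genuine obstacle here; the only subtlety is orienting the A1 inequality correctly depending on whether epoch $i$ is the earlier or the later of the (at most) two epochs that iteration $i'$ touches --- handled by using A1 together with its rearranged form --- and the small bookkeeping that $\AOneL\le 1\le\AOneH$ makes the constant uniform over the three possibilities for $\mathcal{E}$. If the lemma is stated with $\AOneL$ and $\AOneH$ kept separate rather than collapsed into a single $\kappa$, the identical two-case split (epoch $i$ earliest vs.\ latest in $\mathcal{E}$) produces $\min\{\AOneL,1/\AOneH\}\,\epochRate_i\le\iJRate_{i'}\le\max\{\AOneH,1/\AOneL\}\,\epochRate_i$ directly.
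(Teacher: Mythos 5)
Your approach is essentially the paper's: invoke Lemma~\ref{lem:iteration-epochs} to cap the overlap at two epochs, write $\iJRate_{i'}$ as a convex combination of the overlapped $\epochRate_j$'s, and compare each $\epochRate_j$ with $\epochRate_i$ via Assumption~A1. The difference is in how the two-epoch case is labeled. The paper writes the touched epochs as ``$i-1$ and $i$'' and carries through only that orientation (epoch $i$ the later of the two); it never treats $\mathcal{E}=\{i,i+1\}$. Your explicit three-way case split is the more careful version, and it surfaces a discrepancy the paper glosses over: when $\mathcal{E}=\{i,i+1\}$, A1 gives $\AOneL\,\epochRate_i\le\iJRate_{i'}\le\AOneH\,\epochRate_i$, i.e.\ $\iJRate_{i'}/\AOneH\le\epochRate_i\le\iJRate_{i'}/\AOneL$, which matches the paper's stated conclusion $\AOneL\,\iJRate_{i'}\le\epochRate_i\le\AOneH\,\iJRate_{i'}$ only when $\AOneL\,\AOneH=1$. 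Your symmetric constant $\kappa=\max\{\AOneH,1/\AOneL\}$ (with $\min\{\AOneL,1/\AOneH\}$ on the lower side) is what actually covers both orientations uniformly; the paper's constants happen to be reciprocal for all four datasets in Table~\ref{tab:assumptions}, but the lemma as written is not quite tight for a general $\mathcal{E}=\{i,i+1\}$ --- and that orientation is the one that actually arises in the application inside Theorem~\ref{t:JoinEst}, where the epoch in question is the earliest one touching the iteration. So the route is the same, but your version is the airtight one, at the modest cost of a slightly looser symmetric constant.
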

$$\AOneL \iJRate_{i'} \leq \epochRate_{i} \leq \AOneH \iJRate_{i'}$$

\begin{proof}
\noindent By Lemma~\ref{lem:iteration-epochs}, iteration $i'$ touches at most two epochs, say epochs $i-1$ and $i$. Thus, by definition of $\iJRate_{i'}$, for some $\lambda_1, \lambda_2 \geq 0$, $\lambda_1+\lambda_2 = 1$:
\begin{eqnarray*}
\iJRate_{i'} & = & \lambda_{1} \epochRate_{i-1} + \lambda_{2} \epochRate_{i}\\
&\leq &  \lambda_{1} \left(\frac{\epochRate_{i}}{ \AOneL}\right) + \lambda_{2} \epochRate_{i}\\
&\leq &   \frac{\epochRate_{i}}{ \AOneL}
\end{eqnarray*}

The second line in the above holds by Assumption A1.  The last line holds since $\lambda_1 + \lambda_{2} = 1$ and  $\AOneL\leq 1$.  A similar derivation yields that $\iJRate_{i'}  \geq  \frac{\epochRate_{i}}{\AOneH}$. Together this implies that:
$$ \AOneL \iJRate_{i'} \leq \epochRate_{i} \leq \AOneH \iJRate_{i'}.$$
\end{proof}

\begin{theorem} \label{t:JoinEst}
For any iteration $i\geq 1$, the following bound holds:
\end{theorem}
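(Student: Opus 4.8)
The plan is to connect $\JoinEst_i$ to $\iJRate_i$ through a short chain of estimates: first relate $\JoinEst_i$ to the good join rate of one particular epoch via Lemma~\ref{lem:bounded}, then relate that epoch's rate to the good join rates of the epochs touched by iteration~$i$ via a constant number of applications of Assumption~A1, and finally relate those rates to $\iJRate_i$ via Lemma~\ref{lem:four}. Recall that $\JoinEst_i$ is assigned in Line~2(e), at the end of iteration $i-1$, as $|\curIDs|/\estDur$, where $\estDur$ is the length of the most recently completed interval; call that interval $I$, let $t^\ast$ be the time $I$ ends, and let epoch~$k$ be the most recent epoch that $I$ touches, so that $t^\ast$ lies in epoch~$k$ (or at its boundary). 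Under the natural reading that the population count used here is the one recorded when $\estSet$ last changed it is exactly the population at time $t^\ast$; more generally, since only a constant number of iterations can separate $t^\ast$ from the end of iteration $i-1$ (by an argument analogous to Lemma~\ref{lem:iteration-epochs}), the $|\curIDs|$ in Line~2(e) is within a constant factor of the population at time $t^\ast$. Hence, up to a constant factor, $\JoinEst_i$ coincides with the quantity $|\curIDs|/\estDur$ bounded in Lemma~\ref{lem:bounded}, so
$$\left(\tfrac{5}{6}\right)\AOneL\,\ATwoL\,\epochRate_k \;\leq\; \JoinEst_i \;\leq\; 5\,\AOneH\,\ATwoH\,\epochRate_k .$$

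Next I would bound the index gap between epoch~$k$ and every epoch that iteration~$i$ touches. Let $I'$ be the (possibly not-yet-complete) interval beginning at $t^\ast$. Because $\JoinEst_i$ is built from $I$ rather than $I'$, no change of $\estSet$ occurs between $t^\ast$ and the end of iteration $i-1$, so the entire stretch of time from $t^\ast$ up to the start of iteration~$i$ is an initial segment of $I'$; by Lemma~\ref{lem:interval-epochs} that segment touches at most two epochs. Since $t^\ast$ lies in epoch~$k$ and precedes the start of iteration~$i$, the epoch containing the start of iteration~$i$ therefore has index $k$ or $k+1$. By Lemma~\ref{lem:iteration-epochs}, iteration~$i$ touches at most two epochs, the earlier of which is the one containing its start; hence every epoch~$m$ that iteration~$i$ touches satisfies $k \leq m \leq k+2$. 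Applying Assumption~A1 at most twice (and using $\AOneL \leq 1 \leq \AOneH$) gives
$$\AOneL^{2}\,\epochRate_k \;\leq\; \epochRate_m \;\leq\; \AOneH^{2}\,\epochRate_k$$
for every such~$m$.

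To conclude, fix any epoch~$m$ touched by iteration~$i$. By Lemma~\ref{lem:four} with $i'=i$, $\AOneL\,\iJRate_i \leq \epochRate_m \leq \AOneH\,\iJRate_i$. Combining this with the two displays above, the chain $\JoinEst_i \leftrightarrow \epochRate_k \leftrightarrow \epochRate_m \leftrightarrow \iJRate_i$ yields a two-sided bound of the form
$$\frac{5\,\AOneL^{2}\,\ATwoL}{6\,\AOneH^{2}}\;\iJRate_i \;\leq\; \JoinEst_i \;\leq\; \frac{5\,\AOneH^{2}\,\ATwoH}{\AOneL^{2}}\;\iJRate_i ,$$
where the constants are absorbed products of those from Lemmas~\ref{lem:bounded} and~\ref{lem:four} together with the at-most-two factors from A1 (and the constant-factor slack from the first paragraph). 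The base case $i=1$ is handled separately: there $\JoinEst_1$ is inherited from the initialization phase, whose estimate ($|S_0|$ divided by the initialization time) is within a constant factor of $\iJRate_1$ by Assumptions~A1 and~A2.

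The step I expect to be the main obstacle is the middle one, namely pinning down which interval, and hence which epoch, the estimate $\JoinEst_i$ is actually built from relative to iteration~$i$, and bounding their separation. This is delicate precisely because the interval generating the estimate can finish well before iteration~$i$ even begins, and interval, epoch, and iteration boundaries are all mutually misaligned; it is exactly the situation that Lemmas~\ref{lem:iteration-epochs}, \ref{lem:interval-epochs}, and~\ref{lem:bounded} were engineered to control, so the remaining work is in assembling them carefully rather than in proving any new concentration or counting bound. The only other nonroutine point, reconciling the $|\curIDs|$ read in Line~2(e) with the interval-end population, follows once one observes that only a constant number of iterations can separate those two moments.
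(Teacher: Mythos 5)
Your overall strategy — bound $\JoinEst_i$ against $\epochRate_k$ via Lemma~\ref{lem:bounded}, then against $\iJRate_i$ via Lemma~\ref{lem:four} and Assumption~A1, using Lemmas~\ref{lem:iteration-epochs} and~\ref{lem:interval-epochs} to control the epoch gap — is exactly the paper's. But the constants you end up with are strictly weaker than what the theorem asserts: you prove
\[
\frac{5\,\AOneL^{2}\,\ATwoL}{6\,\AOneH^{2}}\,\iJRate_i \;\leq\; \JoinEst_i \;\leq\; \frac{5\,\AOneH^{2}\,\ATwoH}{\AOneL^{2}}\,\iJRate_i,
\]
whereas the theorem claims $\AOneH^{1}$ and $\AOneL^{1}$, not squared, in those denominators; since $\AOneL\leq 1\leq\AOneH$, this does not establish the stated bound. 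The loss occurs in your middle step: you bound \emph{every} epoch~$m$ touched by iteration~$i$ as lying in $\{k,k+1,k+2\}$ and then allow $m$ to be any of them, which can force two applications of~A1 to return from $\epochRate_m$ to $\epochRate_k$. The paper instead fixes one specific epoch touching iteration~$i$, namely the one containing iteration~$i$'s start. Since the in-progress interval~$I'$ beginning at $t^\ast$ has not yet completed when iteration~$i$ begins, and by Lemma~\ref{lem:interval-epochs} $I'$ cannot reach past epoch~$k+1$, that starting epoch is $k$ or $k+1$. Applying Lemma~\ref{lem:four} to it needs at most \emph{one} A1 step to bridge back to $\epochRate_k$, which recovers the stated constants. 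You already have every ingredient for this sharper version; the only change is to exploit the freedom in choosing $m$ rather than quantifying over all $m$ touched by iteration~$i$.

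On the side concern you raise about $|\curIDs|$ in Line~2(e) being evaluated at the end of iteration~$i-1$ rather than at $t^\ast$: this is a fair observation, but the paper's own proof does not address it either — it silently identifies $\JoinEst_i$ with the quantity $|\curIDs|/\estDur$ that Lemma~\ref{lem:bounded} bounds, i.e., with the population \emph{at the end of the interval}. Your proposed patch (that only a constant number of iterations can separate $t^\ast$ from the end of iteration~$i-1$, by analogy with Lemma~\ref{lem:iteration-epochs}) is not obviously correct as stated: the interval-termination condition tracks only newly joined IDs in $|\curIDs - \estSet|$, while the iteration-termination condition counts joins plus departures, so the two clocks need not advance in lockstep. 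If you want to make this rigorous, you would need an argument specific to the interval/iteration relationship rather than the iteration/epoch one.
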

$$\left(\frac{5}{6}\right) \frac{\AOneL^2 \ATwoL}{\AOneH}  \iJRate_{i} \leq\JoinEst_i \leq \frac{5\, \AOneH^2 \ATwoH}{\AOneL}  \iJRate_{i}$$
\begin{proof}
The estimate $\JoinEst_i$ used in iteration $i$ corresponds to the most recent interval that completed before iteration $i$ started. Let epoch $j$ be the most recent epoch that touches this interval. By Lemma~\ref{lem:bounded}:
$$\left(\frac{5}{6}\right) \AOneL \ATwoL \epochRate_j \leq\JoinEst_i \leq 5\,\AOneH \ATwoH \epochRate_j.$$

Epoch $j$ may end prior to the start of iteration $i$; that is, epoch $j$ may not necessarily touch iteration $i$. In this case, note that by Lemma~\ref{lem:interval-epochs}, the current interval touches epoch $j+1$ and must end before epoch $j+1$ ends. This fact, along with the observation that the current interval touches iteration $i$, implies that epoch $j+1$  touches iteration $i$. 

By the above, we know that either epoch $j+1$ or epoch $j$ touches iteration $i$.. Lemma~\ref{lem:four} implies:
$$ \AOneL \iJRate_{i} \leq \epochRate_{j} \leq \AOneH \iJRate_{i}$$
or
$$ \AOneL \iJRate_{i} \leq \epochRate_{j+1} \leq \AOneH \iJRate_{i}.$$

To employ our top-most equation, we use A1 to derive:
$$ \epochRate_{j+1}/\AOneH \leq \epochRate_{j} \leq  \epochRate_{j+1}/\AOneL  $$
\noindent and then plug into our top-most equation, we have:
$$\left(\frac{5}{6}\right) \frac{\AOneL^2 \ATwoL}{\AOneH}  \iJRate_{i} \leq\JoinEst_i \leq \frac{5\,\AOneH^2 \ATwoH}{\AOneL}  \iJRate_{i}$$
\end{proof}


\subsection{Cost Analysis} \label{sec:cost-analysis-gm}

We make use of the following two algebraic facts that both follow from the Cauchy-Schwartz inequality.

\begin{lemma}\label{l:cs2}
Let $n$ be a positive number, and for all $1 \leq i \leq n$, $s_i \geq 0$ and let $S = \sum_{i=1}^n s_i$.  Then
$$ \sum_{i=1}^n s_i^2 \geq S^2/n$$
\end{lemma}
\begin{proof}

Let $u$ be a vector of length $n$ with for all $1 \leq i \leq n$, $u[i] = s_i$, and let $v$ be a vector of length $n$ with, for all $1 \leq i \leq n$, $v[i] = 1$.  Then by Cauchy-Schwartz:
\begin{eqnarray*}
	|\langle u, v \rangle|^2 & \leq & \langle u,u \rangle \cdot \langle v, v \rangle \\
	S^2  & \leq & \left(\sum_{i=1}^n s_i^2 \right) \cdot n
\end{eqnarray*}
Rearranging completes the proof.
\end{proof}

Let {\boldmath{$\advCost_i$}} denote the cost rate paid by the adversary  during iteration $i$. Let {\boldmath{$\joinRateBad_{i}$}} be the join rate of bad IDs during iteration $i$. For simplicity, we provide notation for the following constants:
$${\boldmath{\bigconstant}} = \frac{5\,\AOneH^2 \ATwoH}{\AOneL}$$
\noindent and 
$$\smallconstant =  \left(\frac{5}{6}\right) \frac{\AOneL^2 \ATwoL}{\AOneH} $$
\noindent where \texttt{JE} stands for ``join estimate''.

\begin{lemma}\label{l:joinBad}
For any iteration $i>1$, $\joinRateBad_i \leq c(L8) \sqrt{T_i (J^G_i+1)}$, where $c(L8) = \sqrt{2\bigconstant}$.
\end{lemma}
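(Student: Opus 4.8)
The plan is to bound the number of bad IDs that join during iteration $i$ by relating their join rate to the entrance cost they are forced to pay. Recall that in iteration $i$, a joining ID must solve a puzzle of difficulty equal to $1$ plus the number of IDs (good \emph{and} bad) that joined in the last $1/\JoinEst_i$ seconds, so the entrance cost scales with the \emph{total} recent join rate. First I would let $\joinRateAll_i = \joinRate^G_i + \joinRateBad_i$ denote the total join rate over iteration $i$, and observe that, over any window of length $1/\JoinEst_i$ during the iteration, roughly $\joinRateAll_i / \JoinEst_i$ IDs join; hence each joining bad ID pays an entrance cost of order $\joinRateAll_i / \JoinEst_i$. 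Using Theorem~\ref{t:JoinEst}, $\JoinEst_i = \Theta(\iJRate_i) = \Theta(\joinRate^G_i)$ up to the explicit constants $\bigconstant$ and $\smallconstant$, so the per-ID entrance cost is $\Omega(\joinRateAll_i / \joinRate^G_i)$ (with the constant absorbing $\bigconstant$).

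Next I would total this over all bad joins in the iteration. The adversary makes $\joinRateBad_i \cdot \ell_i$ bad joins over the iteration (where $\ell_i$ is its length), each costing $\Omega(\joinRateAll_i/(\bigconstant\, \joinRate^G_i))$, so the adversary's total spend on entrance puzzles in iteration $i$ is at least a constant times $\joinRateBad_i \ell_i \cdot \joinRateAll_i / (\bigconstant\, \joinRate^G_i)$. Since the adversarial cost rate is $T_i$, the adversary's total spend is $T_i \ell_i$, giving
$$ T_i \ell_i \;\geq\; \frac{1}{c}\cdot \frac{\joinRateBad_i\, \ell_i \,(\joinRateBad_i + \joinRate^G_i)}{\bigconstant\, \joinRate^G_i} $$
for an absolute constant $c$. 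Cancelling $\ell_i$ and rearranging yields $\joinRateBad_i(\joinRateBad_i + \joinRate^G_i) \leq c\,\bigconstant\, T_i\, \joinRate^G_i$, from which one case-splits: if $\joinRateBad_i \leq \joinRate^G_i$ then the left side is $\Omega(\joinRateBad_i \joinRate^G_i)$... but to get the clean statement with $\joinRate^G_i + 1$ I would instead use $\joinRateBad_i^2 \leq \joinRateBad_i(\joinRateBad_i+\joinRate^G_i) \leq c\,\bigconstant\, T_i(\joinRate^G_i+1)$, so $\joinRateBad_i \leq \sqrt{c\,\bigconstant\, T_i(\joinRate^G_i+1)}$, and track the constants carefully to land exactly on $c(L8) = \sqrt{2\bigconstant}$.

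The main obstacle I anticipate is making the ``entrance cost is $\Omega(\joinRateAll_i/\joinRate^G_i)$ per joining ID'' claim rigorous, since the entrance difficulty at a given moment depends on the instantaneous count of recent joins rather than a smoothed rate. This is where Assumption A2 (good joins are not too bunched) together with Lemma~\ref{lem:bounded} and Theorem~\ref{t:JoinEst} must be invoked: they guarantee $\JoinEst_i$ is within explicit constant factors of the true good join rate throughout the iteration, so the window of $1/\JoinEst_i$ seconds really does capture a number of joins comparable to $\joinRateAll_i/\joinRate^G_i$. I would also need to handle the edge case where very few bad IDs join (the ``$+1$'' in $\joinRate^G_i+1$ and the additive $1$ in the puzzle difficulty cover the degenerate regime where $\joinRate^G_i$ is tiny), and to confirm that the $\ell_i$ cancellation is legitimate, i.e. that iteration $i>1$ has positive length and a well-defined constant-factor join-rate estimate inherited from the previous iteration's completed interval.
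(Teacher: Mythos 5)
Your intuition is the same as the paper's (it mirrors the informal derivation in Section~\ref{sec:entrance-cost}), but the central rigorous step is missing and the fix you suggest would not supply it. Specifically, the claim that ``each joining bad ID pays an entrance cost of order $\joinRateAll_i/\JoinEst_i$'' is an averaging heuristic, not a lower bound: the entrance difficulty at any instant is the number of joins in the trailing window of length $1/\JoinEst_i$, and the adversary controls the \emph{timing} of its joins, so nothing a priori prevents it from scheduling bad joins so that most of them see only a few other joins in their window. Invoking A2, Lemma~\ref{lem:bounded}, and Theorem~\ref{t:JoinEst}, as you propose, only controls how well $\JoinEst_i$ tracks $\iJRate_i$; it says nothing about the adversary's join schedule, which is the actual obstruction.

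The paper closes this gap with a different device: it normalizes $\ell_i=1$, partitions the iteration into $\lceil\JoinEst_i\rceil$ sub-iterations of length $1/\JoinEst_i$, and observes that if $j_x$ bad IDs join within sub-iteration $x$, they collectively pay at least $\sum_{k=1}^{j_x}k\geq j_x^2/2$ (the $k$th of them necessarily sees the previous $k-1$ in its window, so pays at least $k$; good joins are pessimistically ignored, which only lowers the bound). Then $T_i\geq\tfrac12\sum_x j_x^2\geq(\joinRateBad_i)^2/(2\lceil\JoinEst_i\rceil)$ by Lemma~\ref{l:cs2} (a Cauchy--Schwarz consequence), which is exactly the statement that the adversary cannot beat the uniform schedule. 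Only at that point is Theorem~\ref{t:JoinEst} used, to replace $\JoinEst_i$ by $\bigconstant\iJRate_i$. Without the sub-iteration decomposition and the convexity/Cauchy--Schwarz step, your derivation asserts rather than proves the per-ID cost bound, and the final constant $c(L8)=\sqrt{2\bigconstant}$ cannot be recovered from an unquantified ``$\Omega$.''
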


\begin{proof}

For simplicity, we normalize time units so that $\ell_i=1$.  Partition iteration $i$ from left to right into sub-iterations, all of length $1/\JoinEst_i$, except the last, which is of length at most $1/\JoinEst_i$.  We lower bound the cost paid by the adversary for joins by pessimistically assuming that only bad IDs are counted when computing entrance costs. For  $1 \leq x \leq \lceil \JoinEst_i  \rceil$, let $j_x$ be the total number of bad IDs that join in sub-iteration $x$.  Since $\sum_{y=1}^{j_x} y = (j_x + 1)j_x/2 \geq (j_x)^2/2$,  the total entrance cost paid by bad IDs is at least $(1/2) \sum_{x = 1}^{\lceil \JoinEst_i  \rceil} (j_x)^2$. 
 
Since $\sum_{x = 1}^{\lceil \JoinEst_i  \rceil} j_x = \joinRateBad_i$,   by applying Lemma~\ref{l:cs2}, we have:

$$T_i \geq \frac{1}{2} \sum_{x = 1}^{\lceil \JoinEst_i  \rceil} (j_x)^2 \geq \frac{(\joinRateBad_i)^2}{2 (\lceil \JoinEst_i  \rceil)} $$	

Cross-multiplying and taking the square root, we get:

\begin{eqnarray*}
	\joinRateBad_i & \leq & \sqrt{2 T_i \lceil \JoinEst_i  \rceil} \\
	 & \leq & \sqrt{2 T_i (\JoinEst_i +1)} \\
	 & \leq & \sqrt{2  T_i \left(\bigconstant \iJRate_{i} + 1\right)} \\
\end{eqnarray*}
where the second step follows from noting that $\lceil x \rceil \leq x +1$ for all $x$, and the final step follows from Lemma~\ref{t:JoinEst} which states that: 
$$\JoinEst_i  \leq  \frac{ 5 \AOneH^2 \ATwoH}{\AOneL}  \iJRate_{i} = \bigconstant \iJRate_{i} $$
\noindent which yields the lemma statement. 
\end{proof}

\begin{lemma} \label{l:algCost}
		Let $\mathcal{A}_i$ be the average spend rate for the algorithm in any iteration $i>1$.  Then: 
$$ \mathcal{A}_i \leq \frac{c(L9)|S_{i-1}|}{\ell_i}$$
where $c(L9) = \left( \frac{12}{11} + \frac{\AOneH \ATwoH}{11\smallconstant} \right)$.
\end{lemma}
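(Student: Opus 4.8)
The plan is to bound the total computational cost incurred by good IDs during iteration $i$ by $c(L9)|S_{i-1}|$, after which dividing by $\ell_i$ gives the stated bound on the average spend rate $\mathcal{A}_i$. In iteration $i$ the good IDs pay for exactly two things: the difficulty-$1$ \ppuzzle that every good ID in the system solves at the end of the iteration (Step 2(b)), and the entrance puzzles solved by the good IDs that join during the iteration (Step 1). For the purge, the cost to the algorithm is at most $|G_i|\le |S_i|$. Since processing of new joins stops as soon as $n_i^a+n_i^d\ge (1/11)|S_{i-1}|$, we have $n_i^a\le (1/11)|S_{i-1}|$, so $|S_i|\le |S_{i-1}|+n_i^a\le (12/11)|S_{i-1}|$; this yields the first term of $c(L9)$.

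For the entrance cost, the key manoeuvre is to change the order of summation. Each good ID $v$ joining in iteration $i$ pays a cost equal to the number of IDs (good and bad) that joined within the preceding $1/\JoinEst_i$ seconds of the iteration, itself included. Summing this over all good joiners $v$ and reindexing by the earlier join event $w$, the total entrance cost paid by good IDs equals $\sum_w a(w)$, where $w$ ranges over the $n_i^a$ join events of iteration $i$ and $a(w)$ is the number of good IDs that join within the $1/\JoinEst_i$ seconds after $w$ and still within iteration $i$ (entrance costs only look back within the current iteration, so no window crosses an iteration boundary). Hence it suffices to establish the sub-claim that in any time window of length at most $1/\JoinEst_i$ contained in iteration $i$, at most $\kappa := \AOneH\ATwoH/\smallconstant$ good IDs join. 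Given the sub-claim, the total entrance cost is at most $\kappa\, n_i^a \le \frac{\AOneH\ATwoH}{11\smallconstant}|S_{i-1}|$, which is the second term of $c(L9)$; adding the two contributions and dividing by $\ell_i$ finishes the proof.

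To prove the sub-claim, fix such a window $W$. By Lemma~\ref{lem:iteration-epochs} the iteration touches at most two epochs, so $W$ touches at most two epochs as well. If $W$ lies inside a single epoch $j$ and contains at least two good join events, Assumption A2 bounds the good join rate over $W$ by $\ATwoH\epochRate_j$, giving at most $\ATwoH\epochRate_j/\JoinEst_i$ good joins; since epoch $j$ touches iteration $i$, Lemma~\ref{lem:four} gives $\epochRate_j\le \AOneH\iJRate_i$, and Theorem~\ref{t:JoinEst} gives $\JoinEst_i\ge \smallconstant\iJRate_i$, so the count is at most $\kappa$. If instead $W$ straddles the boundary between two consecutive epochs, split $W$ at that boundary; the two pieces have total length at most $1/\JoinEst_i$, both epochs touch iteration $i$, and applying A2, Lemma~\ref{lem:four}, and Theorem~\ref{t:JoinEst} to each piece — using $\kappa\ge 1$ to absorb any piece with fewer than two good joins — again yields at most $\kappa$ good joins in $W$.

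The main obstacle is the sub-claim, and specifically the bookkeeping of constants inside it: in each case one must verify that the epoch whose good-join rate is invoked genuinely \emph{touches iteration $i$} (not merely $W$, or some interval $\estDur$) so that Lemma~\ref{lem:four} applies, and one must handle the degenerate low-join-count cases and the epoch-straddling split so that the constant comes out to exactly $\AOneH\ATwoH/\smallconstant$ rather than something slightly larger. The reindexing step in the second paragraph is the conceptual heart of the argument: it converts a sum in which an individual good joiner may be charged a large, adversarially inflated entrance cost into a sum over the $n_i^a=O(|S_{i-1}|)$ join events, each of which is charged only $O(1)$ times, which is exactly why the adversary's spending does not appear in this particular bound.
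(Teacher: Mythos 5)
Your reindexing of the entrance cost --- from a sum over good joiners (each of whom may be charged an adversarially inflated amount) to a sum over all $n_i^a$ join events, each charged only the number of \emph{good} joiners in the window following it --- is exactly the mechanism that makes the paper's bound work. The paper phrases this implicitly (it states a per-ID entrance-cost bound of $1+\AOneH\ATwoH/\smallconstant$ and multiplies by the number of join events $|S_{i-1}|/11$, which is really the same double-counting argument), so you have taken essentially the paper's route but made the key step explicit; that is a net improvement in clarity.

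However, your sub-claim --- that at most $\kappa := \AOneH\ATwoH/\smallconstant$ good IDs join in any window of length $\le 1/\JoinEst_i$ --- has a real gap in the epoch-straddling case. When $W$ splits into $W_1, W_2$ across an epoch boundary and (say) $W_2$ has fewer than two good joins, you bound $W_2$'s contribution by $1$ and appeal to A2 only on $W_1$. But A2 applied to $W_1$ alone gives a bound of up to $\ATwoH\AOneH\iJRate_i\,|W_1|$, and $|W_1|$ can be nearly the full $1/\JoinEst_i$, so $W_1$ alone can already use up the entire $\kappa$ budget; adding $W_2$'s $1$ gives $\kappa+1$, not $\kappa$. ``Using $\kappa \ge 1$ to absorb'' the degenerate piece only works when the other piece leaves slack, which it need not. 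Note that this is precisely why the paper's per-event bound is written as $1 + \AOneH\ATwoH/\smallconstant$ rather than just $\AOneH\ATwoH/\smallconstant$; the paper then matches your $12/11 + \kappa/11$ total by being looser on the purge side (it charges $|S_{i-1}|$ rather than your $(12/11)|S_{i-1}|$). As written, your proof gives $\left(\frac{12}{11} + \frac{\kappa+1}{11}\right)|S_{i-1}|/\ell_i$, which exceeds $c(L9)|S_{i-1}|/\ell_i$ by $|S_{i-1}|/(11\ell_i)$. You can recover the stated constant by noting that purge cost is actually bounded by the number of \emph{good} IDs present, $|G_i| \le (5/6)|S_i| \le (10/11)|S_{i-1}|$ (via the Population Invariant, Lemma~\ref{lem:pop-gmcom}), which frees up more than $|S_{i-1}|/11$ of slack --- or simply by adjusting the constant, since the lemma is only used up to constants downstream.
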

\begin{proof}
For simplicity, we first normalize time units so that $\ell_i=1$.  Partition iteration $i$ from left to right into sub-iterations, all of length $1/\joinRate_i$, except the last, which is of length at most $1/\joinRate_i$.  

The spend rate for the algorithm due to purge costs is  $|S_{i-1}|$.  

For entrance costs, note the following two facts.  First, by Assumptions A1 and A2, there are at most $\AOneH \ATwoH $ good IDs in any sub-iteration (note that a sub-iteration might span two epochs). Second,  the entrance cost for any good ID is $1$ plus the number of join events over the past $1/\JoinEst_i$ seconds. By Theorem~\ref{t:JoinEst}, $1/\JoinEst_i \leq 1/(\smallconstant\joinRate_i)$, and so the entrance cost is at most $1$ plus the number of join events over the past $1/\smallconstant$ sub-iterations; that is $1 + \AOneH \ATwoH/\smallconstant$.

By these two facts, and by the fact that there are at most $|S_{i-1}|/11$ join events in an iteration, the total entrance costs paid by good IDs in iteration $i$ is at most:
$$\left(1 + \frac{\AOneH \ATwoH}{\smallconstant}\right)\left(\frac{|S_{i-1}|}{11}\right).$$  

Adding the bounds for both entrance and purge costs and dividing by the value of $\ell_i$ yields a total cost of at most:
\begin{eqnarray*}
&&|S_{i-1}| + \left(1 + \frac{\AOneH \ATwoH}{\smallconstant}\right)\left(\frac{|S_{i-1}|}{11}\right)\\  
&= & \frac{|S_{i-1}|}{\ell_i}\left( \frac{12}{11} + \frac{\AOneH \ATwoH}{11\smallconstant} \right)
\end{eqnarray*}
\noindent which completes the proof. 
\end{proof}

\begin{lemma}\label{l:cs1}
	Let $u$ and $v$ be $n$-dimensional vectors. Then $$\sum_{j=1}^n\sqrt{u_jv_j} \leq \sqrt{\left(\sum_{j=1}^n u_j\right) \left(\sum_{j=1}^n v_j\right)}.$$
\end{lemma}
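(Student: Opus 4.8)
The plan is to reduce this statement directly to the Cauchy--Schwarz inequality, in exactly the same spirit as the proof of Lemma~\ref{l:cs2}. First I would note that for the square roots in the statement to be well-defined we may assume $u_j \geq 0$ and $v_j \geq 0$ for all $j$; this holds in every application of the lemma, since there the $u_j$ and $v_j$ are costs or join rates.

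Next, I would introduce two $n$-dimensional vectors $a$ and $b$ defined by $a_j = \sqrt{u_j}$ and $b_j = \sqrt{v_j}$ for $1 \leq j \leq n$. Since $a_j b_j = \sqrt{u_j v_j}$, $a_j^2 = u_j$, and $b_j^2 = v_j$, applying the Cauchy--Schwarz inequality $|\langle a, b\rangle|^2 \leq \langle a, a\rangle \cdot \langle b, b\rangle$ to this pair of vectors yields
$$\left(\sum_{j=1}^n \sqrt{u_j v_j}\right)^2 \;\leq\; \left(\sum_{j=1}^n u_j\right)\left(\sum_{j=1}^n v_j\right).$$

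Finally, I would take the square root of both sides, which is legitimate because both sides are nonnegative, obtaining the claimed bound $\sum_{j=1}^n \sqrt{u_j v_j} \leq \sqrt{\left(\sum_{j=1}^n u_j\right)\left(\sum_{j=1}^n v_j\right)}$. There is no genuine obstacle here: the argument is a one-line invocation of Cauchy--Schwarz, and the only point meriting even minor care is the nonnegativity of the $u_j$ and $v_j$, which is needed both to define the square roots and to take the final square root.
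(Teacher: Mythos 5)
Your proof is correct and matches the paper's own argument: both define auxiliary vectors with components $\sqrt{u_j}$ and $\sqrt{v_j}$ and apply Cauchy--Schwarz, then take square roots. The only difference is that you explicitly flag the nonnegativity hypothesis, which the paper leaves implicit.
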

\begin{proof}
	Let $x$ and $y$ be vectors such that for all $1 \leq i \leq n$, $x_i = \sqrt{u_i}$ and $y_i = \sqrt{v_i}$.  Then applying the Cauchy-Schwarz inequality to $x$ and $y$, we have that $x \cdot y \leq |x||y|$, from which the lemma follows directly.
\end{proof}

Let $\Iters$ be any subset of iterations that for integers $x$ and $y$, $1 \leq x \leq y$, contains every iteration with index between $x$ and $y$ inclusive.  Let $\delta(\Iters)$ be $|S_{x} - S_y|$; and let $\Delta(\Iters)$ be  $\delta(\Iters)$ divided by the length of $\Iters$. Let $\mathcal{A}_{\Iters}$ and $T_{\Iters}$ be the algorithmic and adversarial spend rates over $\Iters$; and let $\iJRate_{\Iters}$ be the good join rate over all of $\Iters$.  Then we have the following lemma.

\begin{lemma}\label{l:cost}
   	For any subset of contiguous iterations, $\Iters$, which starts after iteration $1$, it holds that:
\end{lemma}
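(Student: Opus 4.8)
The plan is to bound the total algorithmic cost over $\Iters$, that is $\mathcal{A}_{\Iters}\,L = \sum_{i\in\Iters}\mathcal{A}_i \ell_i$, and then divide by $L = \sum_{i\in\Iters}\ell_i$. Since $\Iters$ starts after iteration $1$, Lemma~\ref{l:algCost} applies to every $i\in\Iters$ and gives $\mathcal{A}_i\ell_i \le c(L9)\,|S_{i-1}|$, so it suffices to bound $\sum_{i\in\Iters}|S_{i-1}|$. A purge in iteration $i$ triggers at the first moment when $n_i^a + n_i^d \ge (1/11)|S_{i-1}|$, hence $|S_{i-1}| \le 11(n_i^a + n_i^d) = 11(g_i^a + b_i^a + g_i^d + b_i^d)$. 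As in the proof of Lemma~\ref{lemma:committee-cost}, $b_i^d \le |B_{i-1}| + b_i^a \le \alpha|S_{i-1}| + b_i^a$, using the post-purge bound $|B_{i-1}| \le \alpha|S_{i-1}|$ (see the proof of Lemma~\ref{lem:modifiedpop}); since $11\alpha < 1$ for $\alpha\le 1/18$, rearranging yields $|S_{i-1}| \le c_1(g_i^a + g_i^d + b_i^a)$ for an absolute constant $c_1$, and therefore $\sum_{i\in\Iters}|S_{i-1}| \le c_1\big(\sum_{i\in\Iters} g_i^a + \sum_{i\in\Iters} g_i^d + \sum_{i\in\Iters} b_i^a\big)$.

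Next I would handle the three sums separately. The good-join sum is immediate: $\sum_{i\in\Iters} g_i^a = \iJRate_{\Iters}\,L$. For the good-departure sum, observe that for each good ID $v$ the numbers of its departure and join events within $\Iters$ differ by exactly $\mathbf{1}[v\in G_x]-\mathbf{1}[v\in G_y]$; summing over all good IDs gives $\sum_{i\in\Iters} g_i^d = \sum_{i\in\Iters} g_i^a + |G_x| - |G_y| \le \sum_{i\in\Iters} g_i^a + |G_x\setminus G_y|$, and since a good ID never becomes bad we have $G_x\setminus G_y\subseteq S_x\setminus S_y$, so $\sum_{i\in\Iters} g_i^d \le \iJRate_{\Iters}L + \delta(\Iters) = (\iJRate_{\Iters}+\Delta(\Iters))L$. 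For the bad-join sum, Lemma~\ref{l:joinBad} bounds $b_i^a$ in iteration $i$ by $O\big(\sqrt{T_i(g_i^a+1)}\big)$, after which Lemma~\ref{l:cs1} (Cauchy--Schwarz) gives
\[
\sum_{i\in\Iters}\sqrt{T_i\,(g_i^a+1)}\ \le\ \sqrt{\Big(\sum_{i\in\Iters}T_i\Big)\Big(\sum_{i\in\Iters}(g_i^a+1)\Big)}\ \le\ \sqrt{\big(T_{\Iters}L\big)\big(\iJRate_{\Iters}L+L\big)}\ =\ L\sqrt{T_{\Iters}(\iJRate_{\Iters}+1)},
\]
where I use $\sum_{i\in\Iters}T_i = T_{\Iters}L$ and $|\Iters|\le \sum_{i\in\Iters}\ell_i = L$ (each iteration lasts at least one round).

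Combining the three bounds, $\sum_{i\in\Iters}|S_{i-1}| = O\big(L\sqrt{T_{\Iters}(\iJRate_{\Iters}+1)} + \Delta(\Iters)L + \iJRate_{\Iters}L\big)$, and dividing by $L$ yields $\mathcal{A}_{\Iters} = O\big(\sqrt{T_{\Iters}(\iJRate_{\Iters}+1)} + \Delta(\Iters) + \iJRate_{\Iters}\big)$, with the hidden constant a fixed function of $c(L8)$, $c(L9)$ and $\alpha$; this is the claimed bound (and matches Corollary~\ref{cor:main-upper}).

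The step I expect to be most delicate is the Cauchy--Schwarz estimate together with keeping the ``$+1$'' inside the square root under control: it comes out correctly only because $\sum_{i\in\Iters}\ell_i = L$, so the sum of the ``$+1$'' terms is $|\Iters|\le L$, which is exactly the $L$ needed to pair with the $\iJRate_{\Iters}L$ term under the root; one must also check that Lemma~\ref{l:joinBad}'s per-iteration bound is stated in the right (total versus rate) normalization before summing. A secondary point of care is the telescoping/multiplicity argument that converts $\sum_{i\in\Iters} g_i^d$ into a bound involving only $\sum_{i\in\Iters} g_i^a$ and $\delta(\Iters)$, and verifying that the ``absorb $\alpha|S_{i-1}|$'' rearrangement is legitimate, which relies on the post-purge bound $|B_{i-1}|\le\alpha|S_{i-1}|$ and on $\alpha\le 1/18$.
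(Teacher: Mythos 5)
Your proof is correct and follows the same overall route as the paper: bound $\mathcal{A}_i\ell_i \leq c(L9)|S_{i-1}|$ via Lemma~\ref{l:algCost}, use the purge threshold to bound $|S_{i-1}|$ by the churn, apply Lemma~\ref{l:joinBad} together with the Cauchy--Schwarz inequality (Lemma~\ref{l:cs1}) to the bad-join term, and divide by $L$. Where you diverge is in the departure accounting: the paper folds good and bad departures into one total and bounds it by observing that every departing ID either sat in the starting set or joined during $\mathcal{I}$, whereas you split departures by type, absorb $b_i^d$ via the post-purge invariant $|B_{i-1}|\leq\alpha|S_{i-1}|$ (valid since $11\alpha<1$ when $\alpha\leq 1/18$), and bound $\sum_i g_i^d$ by the exact telescoping identity $\sum_i g_i^d - \sum_i g_i^a = |G_x|-|G_y|\leq\delta(\mathcal{I})$. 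Both yield the same asymptotic bound; your accounting is arguably crisper, replacing the paper's informal ``came from $S_x$ or joined'' step by two sharp ones. One slip worth fixing: Lemma~\ref{l:joinBad} is stated in rates, so converting to per-iteration totals gives $b_i^a = O\bigl(\sqrt{\hat{T}_i\,(g_i^a + \ell_i)}\,\bigr)$ with $\hat{T}_i$ the total adversarial spend in iteration $i$, not $O\bigl(\sqrt{\hat{T}_i\,(g_i^a + 1)}\,\bigr)$. After Cauchy--Schwarz this makes no difference, since $\sum_i(g_i^a+\ell_i) = J^G_{\mathcal{I}}L + L$ exactly (which is precisely what you invoked $|\mathcal{I}|\leq L$ to obtain), but the per-iteration inequality as you wrote it does not actually follow from the lemma.
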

$$11c(L9)  \left(  2\Delta(\Iters)  +  c(L8)\sqrt{2T_{\Iters} (\bigconstant \iJRate_{\Iters} + 1)}  + \iJRate_{\Iters}\right).$$
\begin{proof}
By Lemma~\ref{l:algCost}, we have:  
\begin{equation} 	  \label{eqn:algcost}
	\sum_{i \in \Iters} \mathcal{A}_i \ell_i \leq c(L9)\sum_{i \in \Iters} |S_{i-1}| 
\end{equation}
\noindent Let {\boldmath{$\depRateTot_{i}$}} be the departure rate of both good and bad IDs during iteration $i$.  By the conditions for when a purge occurs, we have that:
\begin{eqnarray*}
	\sum_{i \in \Iters} \frac{|S_{i-1}|}{11} & \leq & \sum_{i \in \Iters} (\depRateTot_i + \joinRateBad_i + \iJRate_{i}) \ell_i \\
	& \leq & 2 \delta(\Iters) + \sum_{i \in \Iters} \joinRateBad_i \ell_i + \sum_{i \in \Iters}\iJRate_{i}\ell_i   \\
	& \leq & 2 \delta(\Iters) + \sum_{i \in \Iters} \sqrt{2T_i \ell_i (\bigconstant \iJRate_i + 1) \ell_i} + \sum_{i \in \Iters} \iJRate_{i} \ell_i \\
    & \leq & 2 \delta(\Iters) +  \sqrt{\left( \sum_{i \in \Iters} 2T_i \ell_i \right) \left( \sum_{i \in \Iters} (\bigconstant \iJRate_i + 1) \ell_i \right)} + \sum_{i \in \Iters} \iJRate_{i} \ell_i 
\end{eqnarray*}

The second line in the above follows from the fact that every ID that departs must have departed from the set of IDs in the system at the start of $\Iters$ or else must have been an ID that joined during $\Iters$.  The third line follows from the Lemma~\ref{l:joinBad} bound on $\joinRateBad_i$, and noting that $\ell_i = \sqrt{\ell^2_i}$.  Finally, the last line follows from Lemma~\ref{l:cs1}, by letting $u_i = 2T_i\ell_i$ and $v_i = (\bigconstant \joinRate_i + 1)\ell_i$ for all $i \in \mathcal{I}$.

Now combining Equation~\ref{eqn:algcost} with the above inequality, we have:
\begin{eqnarray*}
	\sum_{i \in \Iters} \mathcal{A}_i \ell_i & \leq & 11 c(L9) \left( 2 \delta(\Iters) +  \sqrt{\left( \sum_{i \in \Iters} 2T_i \ell_i \right) \left( \sum_{i \in \Iters} (\bigconstant\iJRate_i + 1) \ell_i \right)} + \sum_{i \in \Iters} \iJRate_{i} \ell_i \right)
\end{eqnarray*}

Dividing both sides of the above inequality by $\ell_{tot} = \sum_{i \in \Iters} \ell_i$ and recalling that $c(L8) = \sqrt{2\bigconstant}$, we get
\begin{eqnarray*}
	 \mathcal{A}_{\Iters} & \leq & 11c(L9)\left( \frac{2 \delta(\Iters)}{\ell_{tot}} +  c(L8)\sqrt{\frac{\left( \sum_{i \in \Iters} T_i \ell_i \right)}{\ell_{tot}} \frac{\left( \sum_{i \in \Iters} (\iJRate_i + 1) \ell_i \right)}{\ell_{tot}}} + \frac{\sum_{i \in \Iters} \iJRate_{i} \ell_i}{\ell_{tot}} \right)\\
	  & = & 11c(L9)  \left(  2\Delta(\Iters)  +  c(L8)\sqrt{2T_{\Iters} (\bigconstant \iJRate_{\Iters} + 1)}  + \iJRate_{\Iters}\right)
\end{eqnarray*}
\noindent which completes the proof.
\end{proof}
 
Maintenance of the Population Invariant is guaranteed by Lemma~\ref{lem:pop-gmcom}, and the Committee Invariant is guaranteed by Lemma~\ref{lem:maj_comm-GM}.  Corollary~\ref{cor:main-upper} follows directly from Lemma~\ref{l:cost}. Theorem~\ref{thm:main-upper} also follows from Lemma~\ref{l:cost} by  noting that $\Delta(\Iters)=0$ when $\Iters$ is all iterations, since the system is initially empty.

Finally, in concluding this section, we note that the same enhancements to \AlgB presented in Section~\ref{sec:enhancements} can be applied to \algGM. \medskip

\section{Lower Bounds}\label{sec:lower}

In this section, we provide a lower bound that applies to the class of algorithms which have the attributes $B1 - B3$ described below. 

\smallskip

\begin{itemize}
\item{\bf B1.} Each new ID must pay an entrance fee in order to join the system and this is defined by a \defn{cost function} {\boldmath{$f$}}, which takes as inputs the good join rate and the adversarial join rate.\smallskip
\item{\bf B2.} The algorithm executes over iterations, delineated by when the condition $n_i^a + n_i^d \geq  \delta\,|\mathcal{S}_i|$ holds, for any fixed positive $\delta$.\smallskip
\item {\bf B3.} At the end of each iteration, each ID must pay $\Omega(1)$ to remain in the system.
\end{itemize}

\smallskip
We emphasize that B1 captures any cost function where the cost during an iteration depends only on the good join rate and the bad join rate.  Our analysis of \algGM makes this assumption since its cost function depends on estimates of the good join rate and the total, i.e. the good-ID join rate plus the bad-ID join rate.  With regard to B2 and B3, recall that we wish to preserve the population invariant (i.e., a majority of good IDs).  It is hard to imagine an algorithm that preserves this invariant without a computational test being imposed on all IDs whenever the system membership changes significantly.

\subsection{Lower-Bound Analysis}
\noindent Restating in terms of the conditions above, we have the following result.\smallskip

{\bf Theorem 2.}
{\it Suppose an algorithm satisfies conditions B1-B3, then there exists an adversarial strategy that forces the algorithm to spend at a rate of $ \Omega(\sqrt{\advAveCost\,\joinRate} + \joinRate)$, where $\joinRate$ is the good ID join rate, and $\advAveCost$ is the algorithmic spending rate, both taken over the iteration.
}

\begin{proof}
 Fix an iteration $i$.  Let $n$ be the number of IDs in the system at the start of iteration $i$.  The adversarial will have bad IDs join uniformly at the maximum rate possible, and then have the bad IDs drop out during the purge.  In particular, let $\advRate$ be the rate at which bad IDs join, and let $f(\advRate, \joinRate)$ be the algorithm's entrance cost function based on $\advRate$ and $\joinRate$; we pessimistically assume that the algorithm knows both $\advRate$ and $\joinRate$ exactly.  Then $\advRate = T / f(\advRate, \joinRate)$.

We first calculate the algorithmic spending rate due to purge puzzle costs in iteration $i$.  Since the iteration ends after $\Theta(n)$ join events (B2), and since each purge puzzle has a cost of $\Omega(1)$ (B3), the rate of spending on purge puzzles is asymptotically at least equal to the rate at which good and bad IDs join the system.  Thus, the spending rate due to purge puzzles is $\Omega (\joinRate + \advRate)$.
 		 	
\smallskip			
\noindent We now have two cases:\medskip

\noindent	
     \textbf{Case 1:} {\boldmath{$f(\advRate, \joinRate) \leq \advRate/\joinRate$}}.  In this case, we have:    
           \begin{eqnarray*}
	 	\advRate & \geq & \joinRate f(\advRate, \joinRate) =  \joinRate T/ \advRate
	 \end{eqnarray*}
where the above equality holds since $f(\advRate, \joinRate) = \advAveCost/\advRate$.  Solving for $\advRate$ we get:
	 \begin{eqnarray*}
	 	\advRate &\geq & \sqrt{\advAveCost\,\joinRate}
	 \end{eqnarray*} 	 

Since the algorithmic spending rate due to purge costs is $\Omega(\advRate+ \joinRate)$, the total algorithmic spending rate is:
      \begin{eqnarray*}
 \Omega(\advRate + \joinRate) & = & \Omega\left(\sqrt{\advAveCost\,\joinRate} + \joinRate\right)
	 \end{eqnarray*}

\noindent\textbf{Case 2:} {\boldmath{$f(\advRate, \joinRate) > \advRate/\joinRate$}}. In this case, we have:	 
	 \begin{eqnarray*}
	 	\advRate & < & \joinRate\,f(\advRate, \joinRate) = \joinRate T/\advRate
	 \end{eqnarray*}
where the above equality follows since $f(\advRate, \joinRate) = \advAveCost/\advRate$.  Solving for $\advRate$, we get:
	 \begin{eqnarray*}
	 	\advRate & < & \sqrt{\advAveCost\,\joinRate}.
	 \end{eqnarray*}

The spending rate for the algorithm due to entrance costs is $\Omega(\joinRate\, f(\advRate, \joinRate))$.  Adding in the spending rate for purge costs of $\Omega (\joinRate + \advRate)$, we get that the total spend rate of the algorithm is:
	 \begin{eqnarray*}
 \Omega(\joinRate\, f(\advRate, \joinRate) + (\joinRate + \advRate))
	 & = & \Omega(\joinRate\, \advAveCost/\advRate + (\joinRate + \advRate)) \\
	 & = & \Omega\left(\sqrt{\advAveCost\,\joinRate} + \joinRate\right)
	 \end{eqnarray*}
\end{proof}



\section{Experiments}~\label{sec:experiments} 

\noindent In this section, we report our simulation results for \AlgB and \algGM. Our goal is to examine the performance predicted by our theoretical results with respect to the computational cost of solving puzzles. Given this goal, we do not model Byzantine consensus or committee formation.  Throughout all of our experiments, we assume a computational cost of $k$ for solving a puzzle of difficulty $k$. 

We present our findings as follows. In Section \ref{s:JandLAssum}, we validate our assumptions from Section \ref{sec:modelGMCom} over real-world peer-to-peer networks. In Section \ref{section:empasym}, we evaluate the computational cost for the proposed algorithms, \AlgB and \algGM, as a function of the adversarial cost, and we compare our algorithms against another PoW-based Sybil Defense algorithm.  Finally,  in Section \ref{section:heuristics}, we propose and implement several heuristics to optimize the performance of \algGM.

Our experiments make use of data from four real-world networks: 
\begin{itemize}
\item{\it Bitcoin:}  This dataset consists of roughly $7$ days of join/departure-event timestamps; this data was obtained by personal correspondence with Till Neudecker~\cite{7140490}. The dataset records the join and departure events of IDs in the network, timestamped to the second.  
\item {\it BitTorrent Debian:} Peers connecting to the BitTorrent network to obtain a Debian ISO image.
\item {\it BitTorrent RedHat:} Peers connecting to the BitTorrent network to obtain a RedHat ISO image.
\item {\it BitTorrent FlatOut:} Peers connecting to the BitTorrent network to obtain aa demo of the game Flatout.
\end{itemize} 

For the BitTorrent data, our simulations assume the Weibull distribution to model session time, with shape parameter values $0.38$, $0.59$ and $0.34$, and scale parameter values $42.2$, $41.0$ and $21.3$; these values are chosen based on a well-cited study by Stuzbach and Rejaie~ \cite{Stutzbach:2006:UCP:1177080.1177105}.

Finally, we note that our simulation code is written in MATLAB and can be obtained from~\cite{diksha-code}. All of our experiments were executed on a MacBook Air with macOS Mojave (version 10.14.3) using a 1.6 GHz Intel Core i5 processor and 8 GB of 2133 MHz LPDDR3. 

\subsection{Testing \algGM Join and Departure Assumptions} \label{s:JandLAssum}

\begin{figure*}[b!]
\vspace{-10pt}
\captionsetup[subfigure]{labelformat=empty}
\begin{subfigure}{0.49\textwidth}
	\centering
	\includegraphics[trim = 1cm 6cm 1cm 6.5cm, width=1\textwidth]{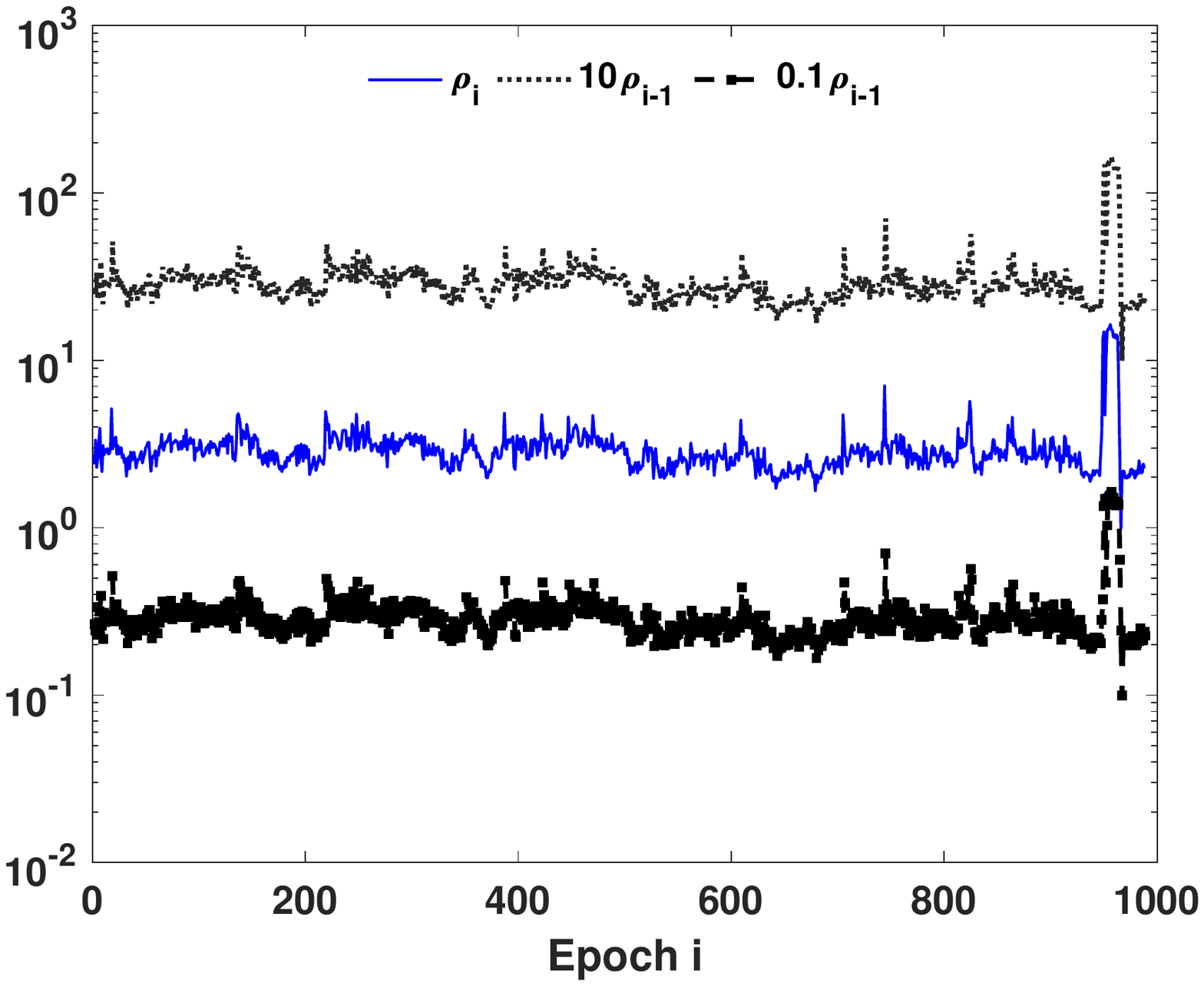} 
	\vspace{-15pt}
	\caption{(a)}
\end{subfigure}
\begin{subfigure}{0.49\textwidth}
	\includegraphics[trim = 1cm 6cm 1cm 6.5cm, width=1\textwidth]{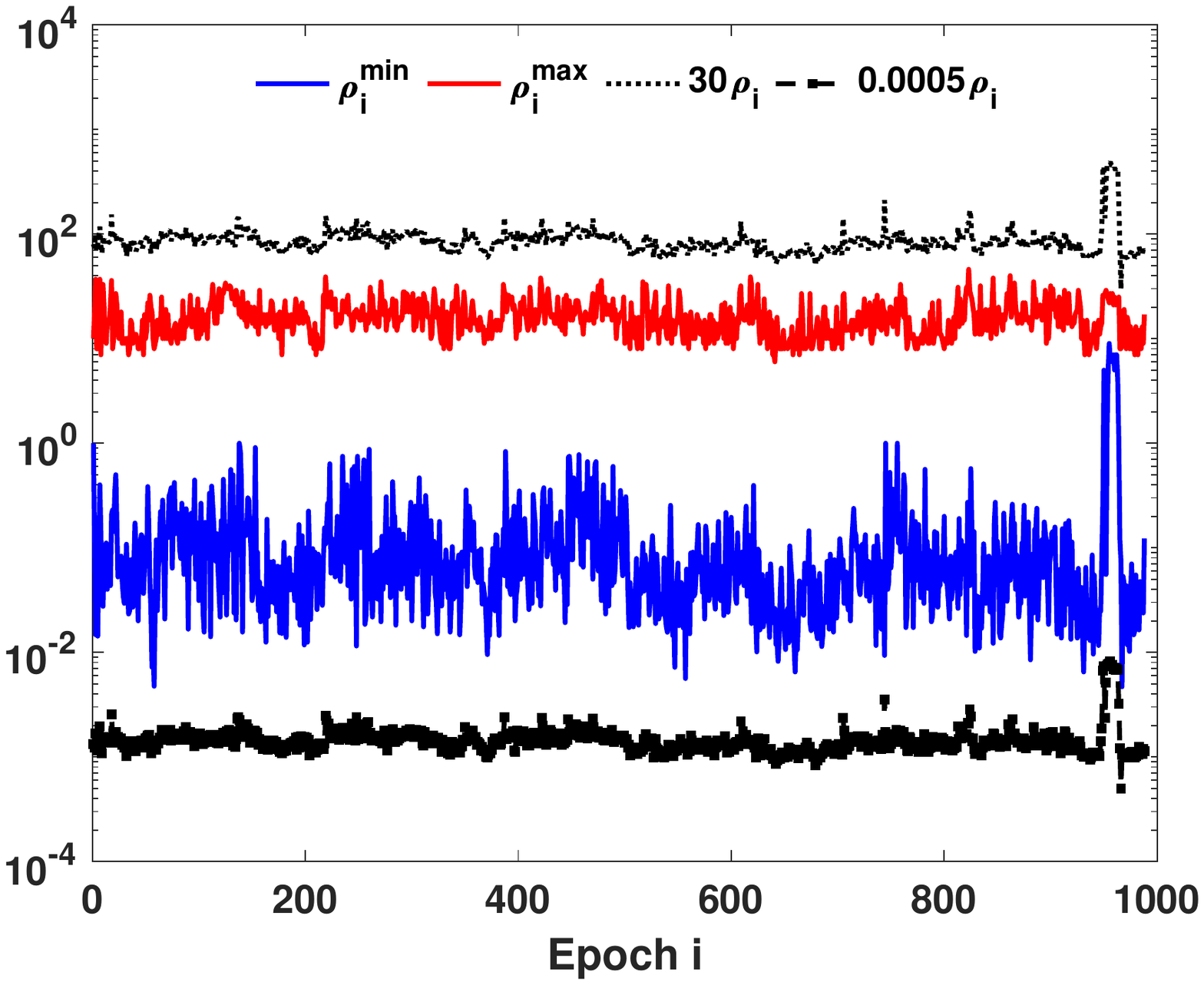}  
	\vspace{-15pt}
	\caption{(b)}
\end{subfigure}
\setcounter{figure}{5} 
\caption{Testing Assumptions A1 and A2 for Bitcoin.}
\label{fig:BitcoinAssumptions}
\end{figure*}

\begin{figure*}[h!]
\captionsetup[subfigure]{labelformat=empty}
\begin{subfigure}{0.49\textwidth}
	\centering
	\includegraphics[trim = 1cm 6cm 1cm 6.5cm, width=1\textwidth]{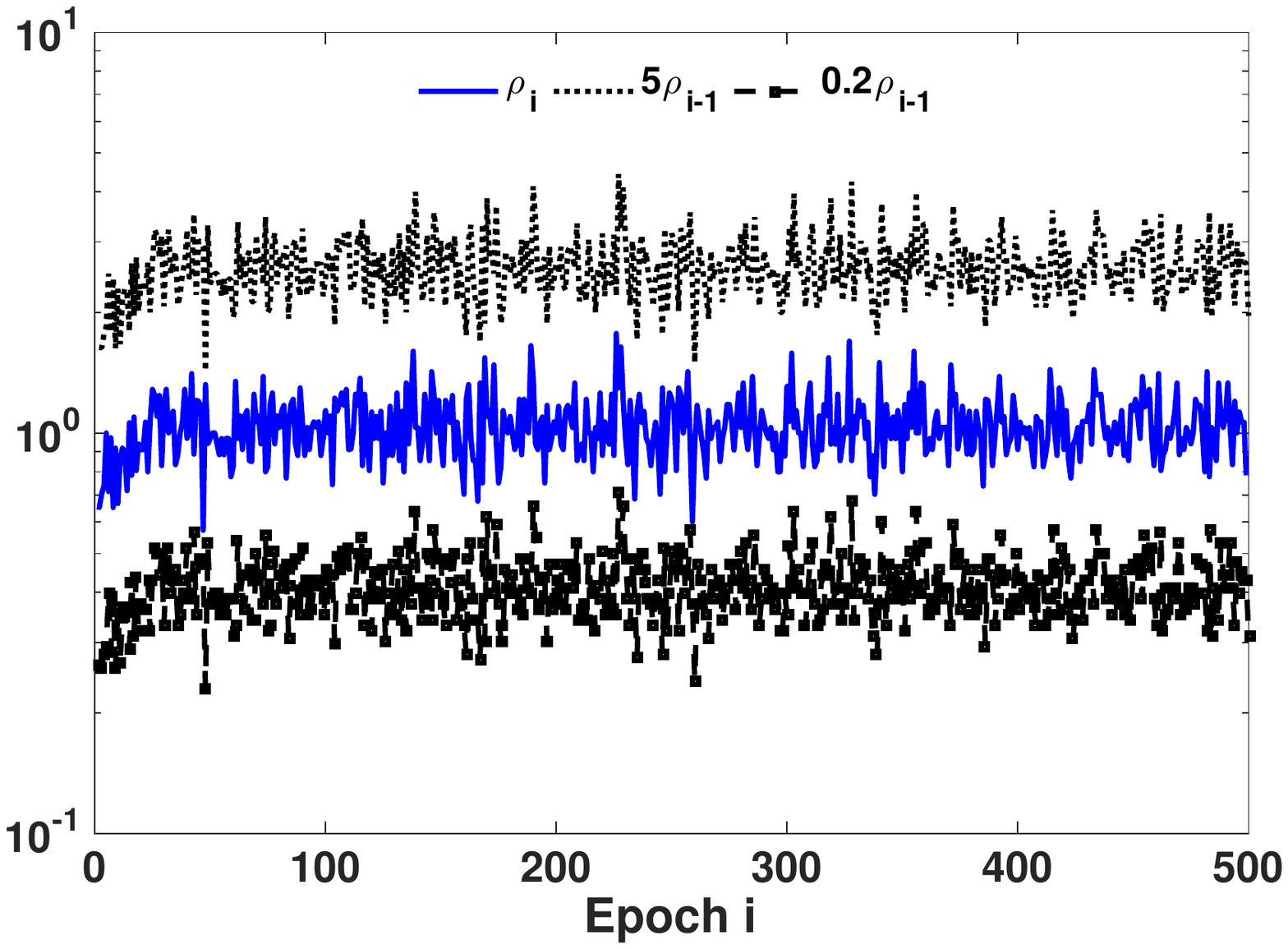} 
	\vspace{-25pt}
	\caption{(a)}
\end{subfigure}
\begin{subfigure}{0.49\textwidth}
	\includegraphics[trim = 1cm 6cm 1cm 6.5cm, width=1\textwidth]{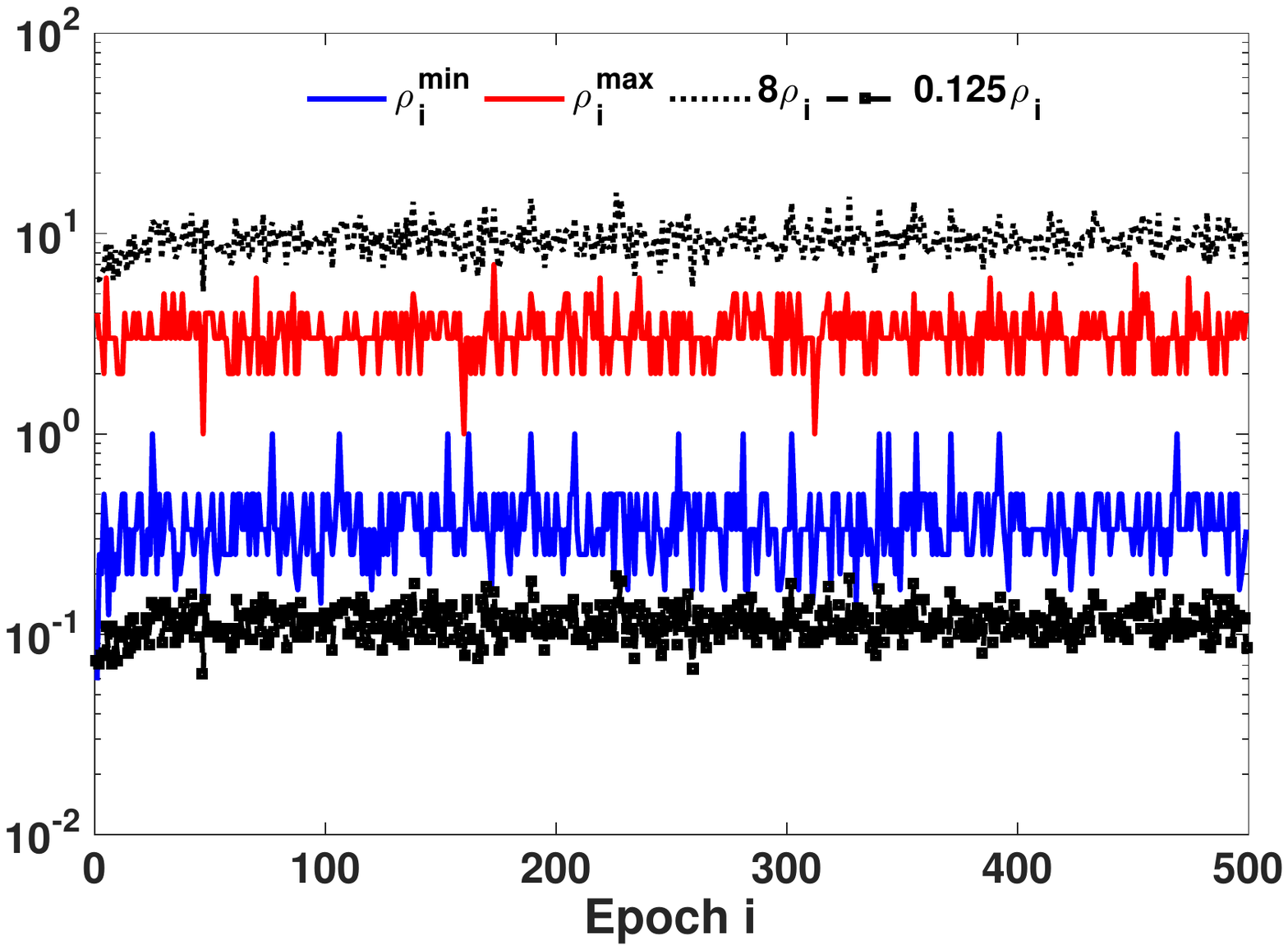}  
	\vspace{-25pt}
	\caption{(b)}
\end{subfigure}
\begin{subfigure}{0.49\textwidth}
	\vspace{-10pt}
	\includegraphics[trim = 1cm 6cm 1cm 6.5cm, width=1\textwidth]{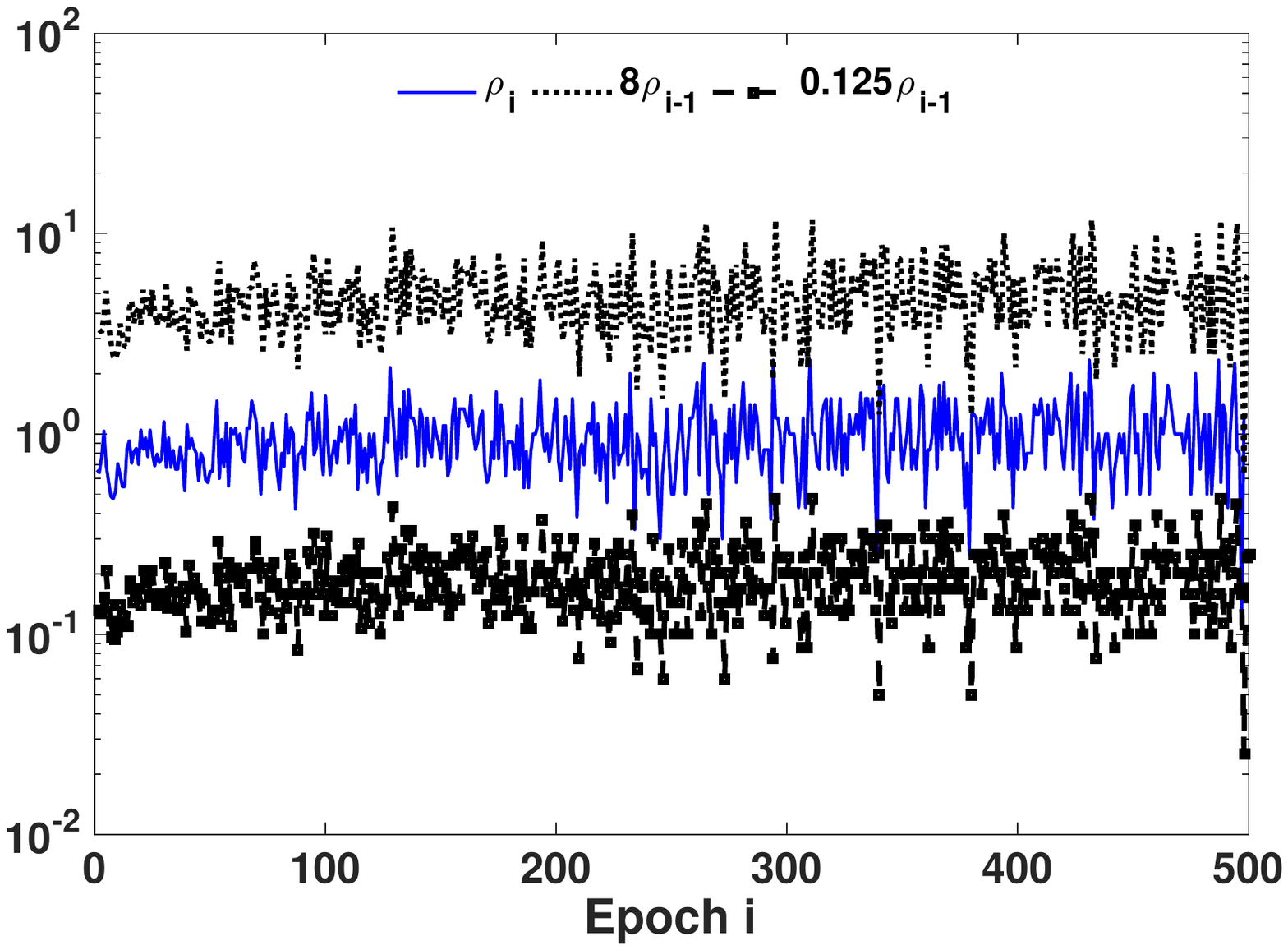}
	\vspace{-25pt}
	\caption{(c)}
\end{subfigure}
\begin{subfigure}{0.49\textwidth}
	\vspace{-10pt}
	\includegraphics[trim = 1cm 6cm 1cm 6.5cm, width=1\textwidth]{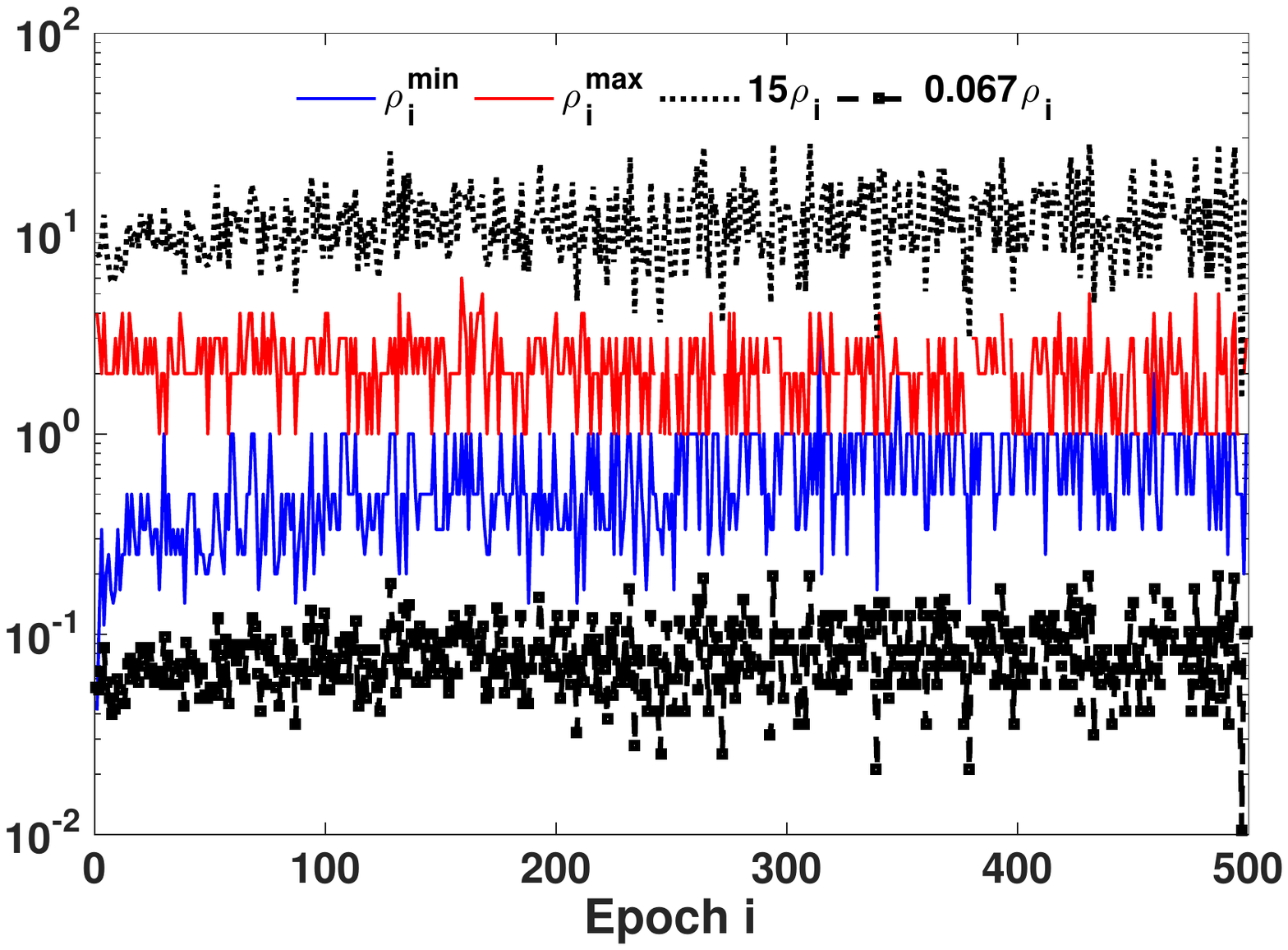} 
	\vspace{-25pt}
	\caption{(d)}
\end{subfigure}
\begin{subfigure}{0.50\textwidth}
	\includegraphics[trim = 1cm 6cm 1cm 6.5cm, width=1\textwidth]{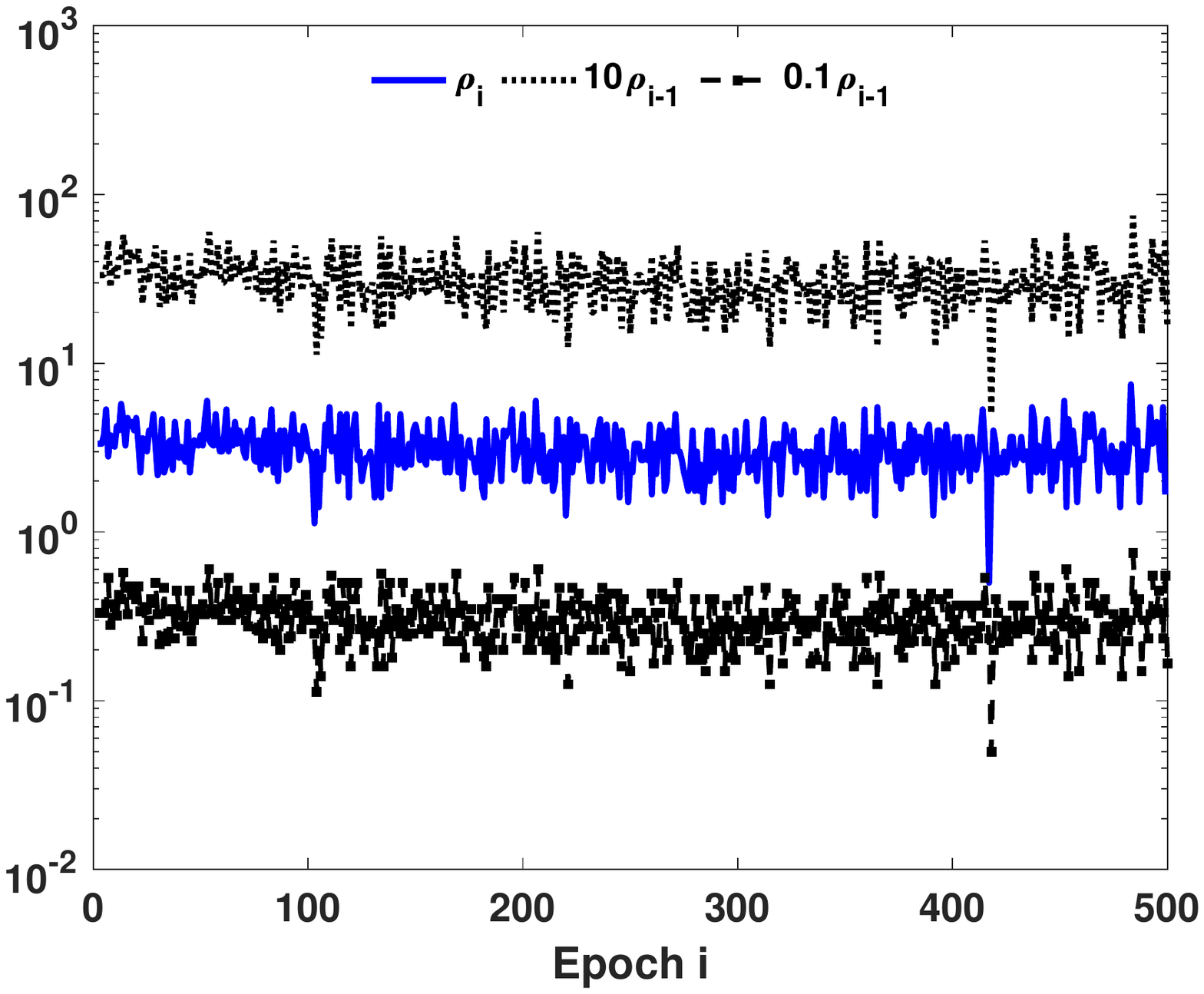} 
	\vspace{-15pt}
	\caption{(e)}
\end{subfigure}
\begin{subfigure}{0.49\textwidth}
	\includegraphics[trim = 1cm 6cm 1cm 6.5cm, width=1\textwidth]{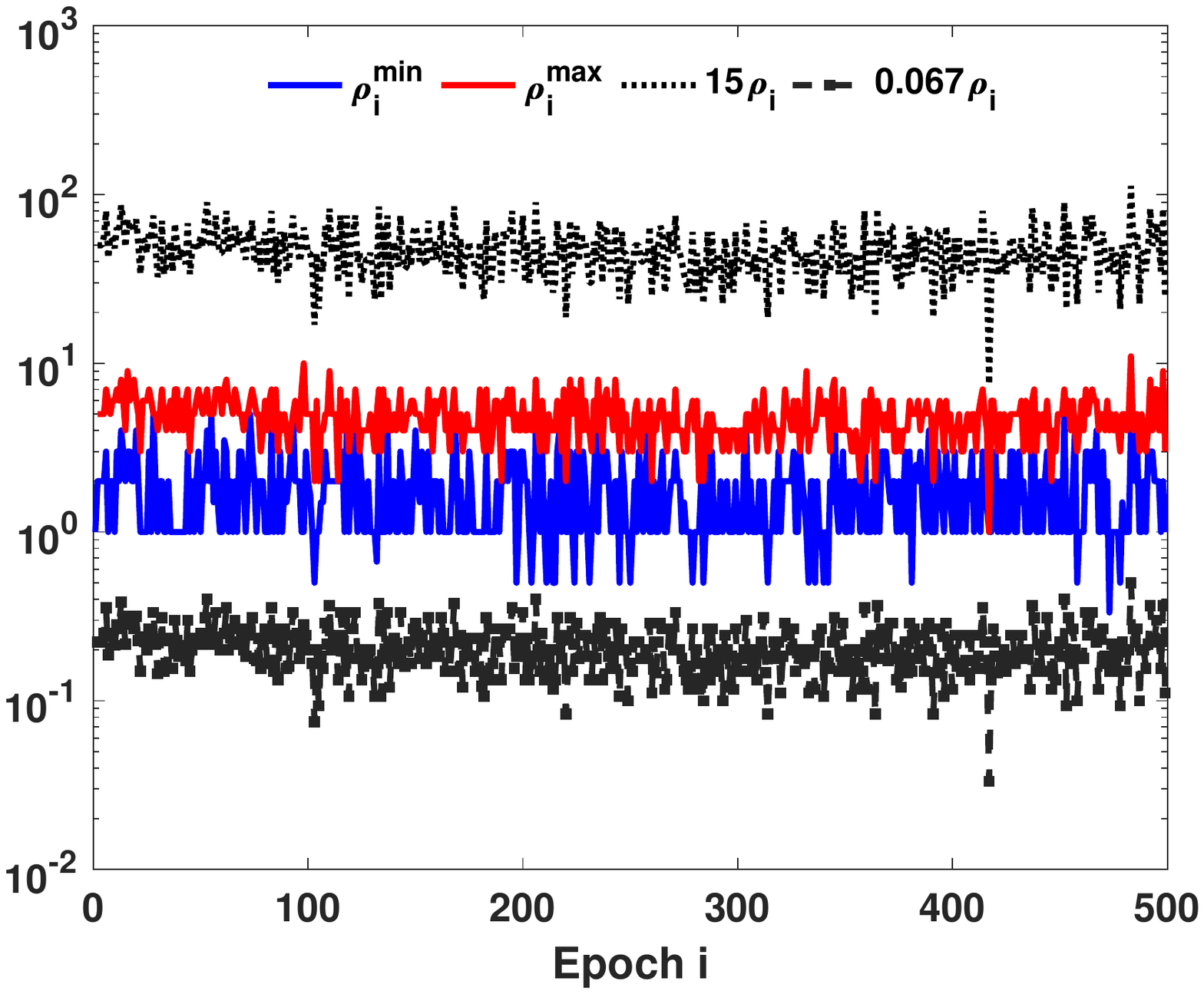} 
	\vspace{-15pt}
	\caption{(f)}
\end{subfigure}
\caption{Testing Assumptions A1 and A2 for: (a) and( b) Debian, (c) and (d) RedHat, (e) and (f) FlatOut. }
\label{fig:assumptionValidation}
\vspace{20pt}
\end{figure*}

To validate our assumptions from Section \ref{sec:modelGMCom}, we performed empirical studies on the join rate of IDs. Recall from Section \ref{sec:modelGMCom}, we define $\rho_i$ to be the rate of join of good IDs in epoch $i$. Since we are interested in only the join rate of good IDs, we assume all IDs that ever join the system during our simulations in this section are good.

\smallskip
{\bf Experimental Setup.} For the Bitcoin network, the system initially consists of 9212 IDs, and the join and departure events are based off the dataset from \cite{neudecker-atc16}. For the BitTorrent networks, we initialize the three networks with 1000 IDs each, and simulate the join and departure events over 1000 epochs. We verified our observations over 20 independent runs for the four networks. 

In order to test if Assumption A1 holds, for each epoch $i \in \{2,3,...,1000\}$, the join rate of epoch $i$, $\rho_i$, was compared to that join rate of the previous epoch, $\rho_{i-1}$. This was done for each of the four networks, the results are plotted in Figure \ref{fig:BitcoinAssumptions}(a) for the Bitcoin network, and Figure \ref{fig:assumptionValidation}(a), (c) and (e) for BitTorrent Debian, BitTorrent RedHat and BitTorrent FlatOut networks, respectively.

Next, we test Assumption A2 for each of the four networks. In particular, we test for each epoch $i \in \{1,2,...,1000\}$, that the join rate between two successive joins by good IDs is within some constant of $\rho_i$. To do so, we measured the minimum and maximum join rate for epoch $i$, denoted by $\rho_i^{min}$ and $\rho_i^{max}$, respectively, and compared these values against $\rho_i$. 

\smallskip

The result  are plotted in Figure \ref{fig:BitcoinAssumptions} (b) for the Bitcoin network, and Figure \ref{fig:assumptionValidation}(b), (d) and (f) for BitTorrent Debian, BitTorrent RedHat and BitTorrent FlatOut Networks, respectively. Table \ref{tab:assumptions} lists the values of constants attached to our assumptions. 

\begin{table}[h!]
\centering
\begin{tabular}{|p{3cm}|P{1.8cm}|P{1.9cm}|P{1.8cm}|P{1.9cm}|}
\hline
\textbf{Network} & $\AOneL$ & $\AOneH$ & $\ATwoL$ & $\ATwoH$ \\
\hline
Bitcoin Network & 0.1 & 10 & 0.0005 & 30 \\
\hline
BitTorrent Debian & 0.2 & 5 & 0.125 & 8 \\
\hline
BitTorrent RedHat & 0.125 & 8 & 0.067 & 15 \\
\hline
BitTorrent FlatOut & 0.1 & 10 & 0.067 & 15 \\
\hline
\end{tabular}	
\caption{Constants Assumptions A1 and A2 for Section \ref{s:JandLAssum}}
\label{tab:assumptions}
\vspace{-20pt}
\end{table}


\subsection{Testing Computational Cost}\label{section:empasym}

We simulate \AlgB and \algGM to validate that they exhibit the predicted spending rate from Theorem \ref{thm:main-upper}.  Specifically, we  validate the asymmetric spending rate $O(\sqrt{\advAveCost\,(\joinRate+1)} + \joinRate )$, where $\advAveCost$ is the adversary's computational cost for solving puzzles divided by the duration of an attack,  and $\joinRate$ is the average join rate of good IDs over the duration of the experiment for the four networks.

\begin{figure*}[t!]
\captionsetup[subfigure]{labelformat=empty}
\begin{subfigure}{0.49\textwidth}
	\centering
	\includegraphics[trim = 1cm 6cm 1cm 6.5cm, width=1\textwidth]{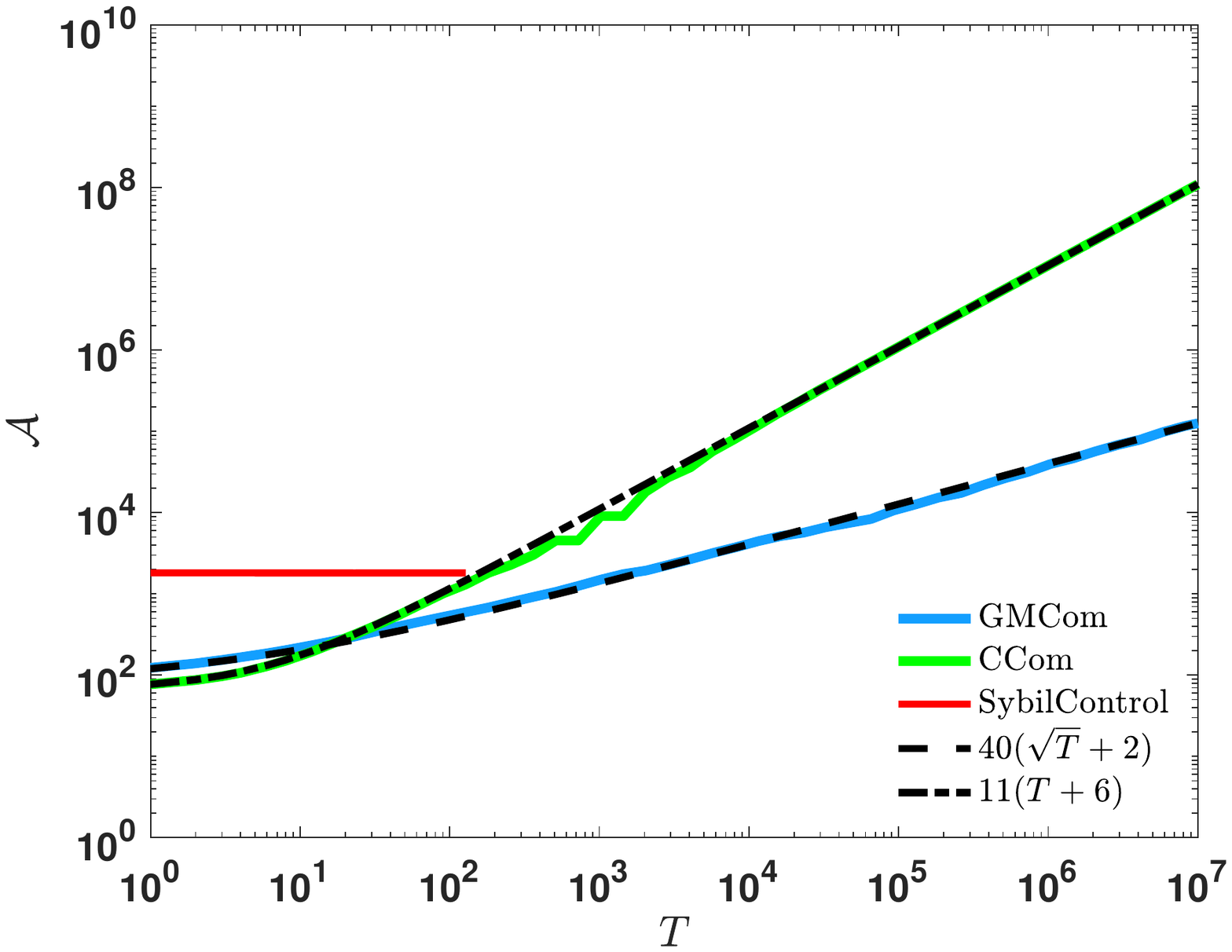} 
	\vspace{-20pt}
	\caption{(a)}
\end{subfigure}
\begin{subfigure}{0.49\textwidth}
	\includegraphics[trim = 1cm 6cm 1cm 6.5cm, width=1\textwidth]{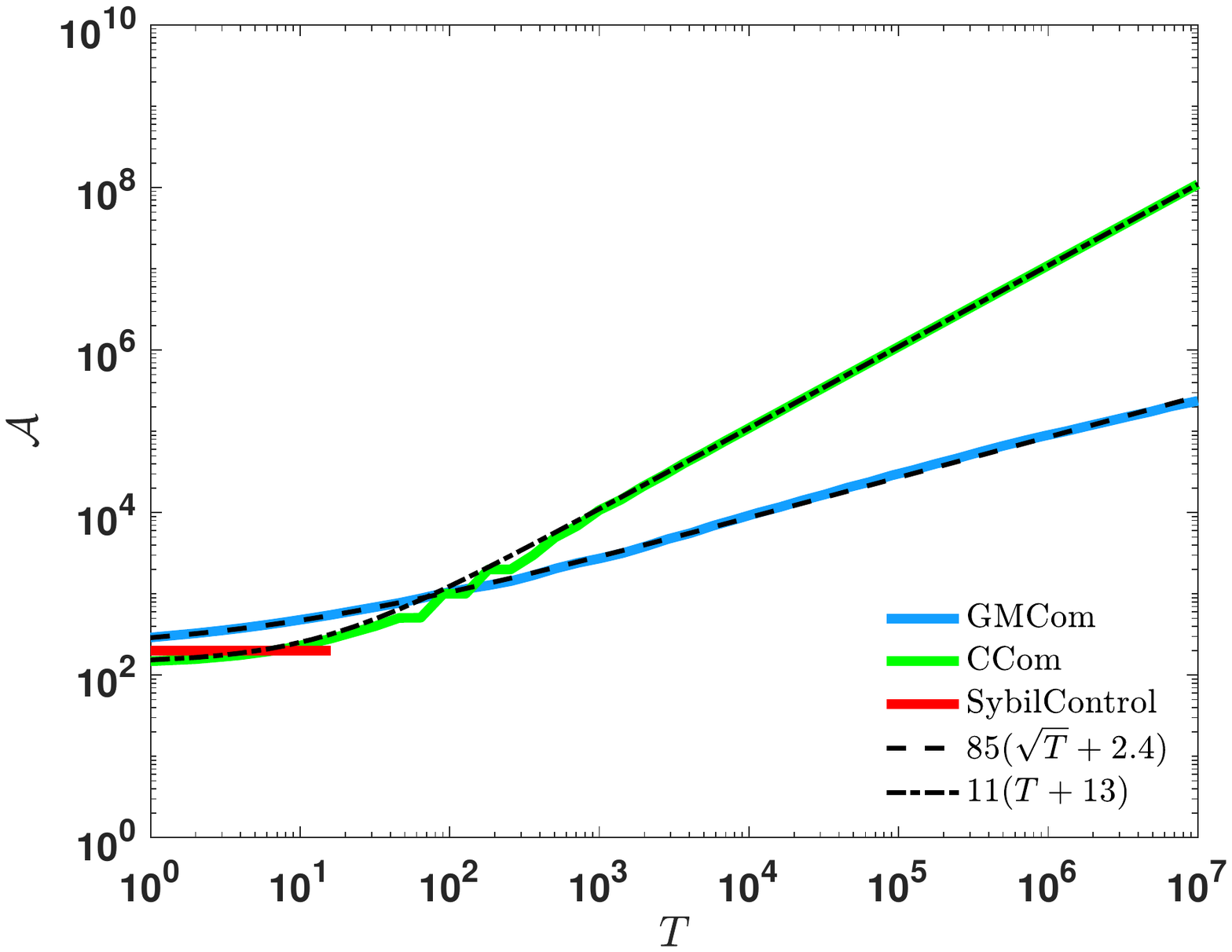}  
	\vspace{-20pt}
	\caption{(b)}
\end{subfigure}
\begin{subfigure}{0.49\textwidth}
	\vspace{-10pt}
	\includegraphics[trim = 1cm 6cm 1cm 6.5cm, width=1\textwidth]{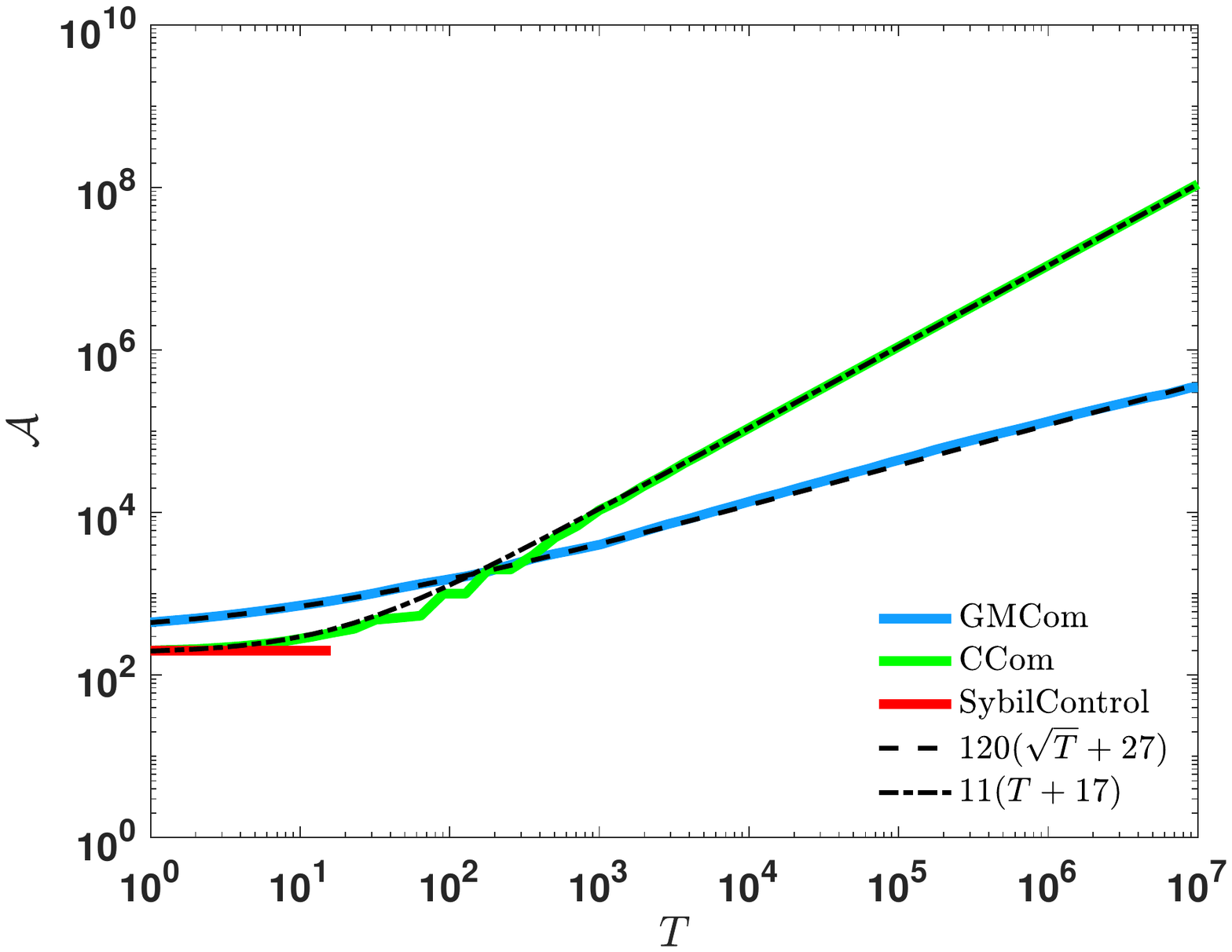} 
	\vspace{-20pt}
	\caption{(d)}
\end{subfigure}
\begin{subfigure}{0.49\textwidth}
	\includegraphics[trim = 1cm 6cm 1cm 6.5cm, width=1\textwidth]{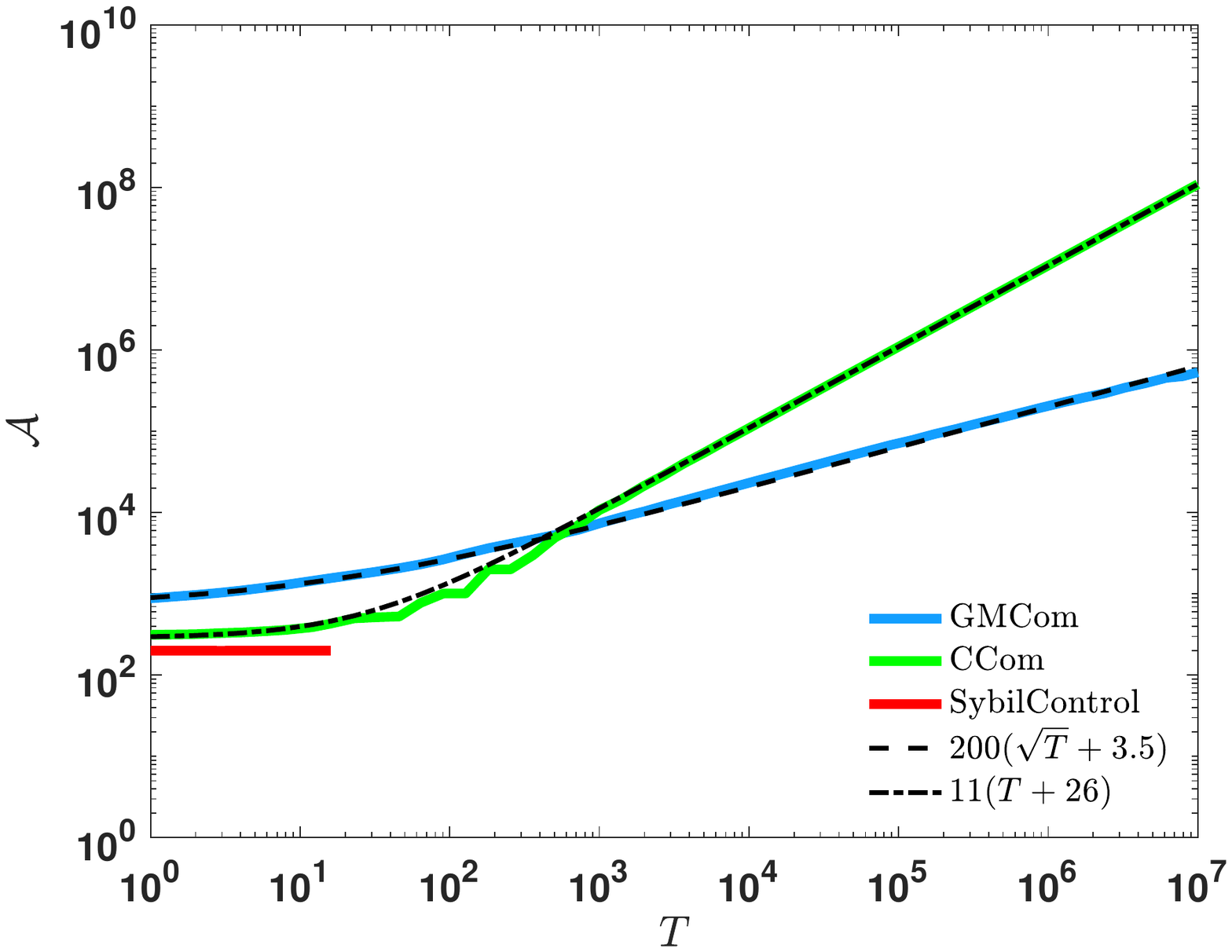}
	\vspace{-20pt}
	\caption{(c)}
\end{subfigure}
\vspace{-5pt}
\caption{Comparison of algorithmic cost to adversarial cost for \algGM, \AlgB and SybilControl.}
\label{fig:AvsT}
\end{figure*} 

We compare the performance of our algorithms against another PoW-based Sybil defense,  \textbf{\AlgA}~\cite{li:sybilcontrol}. Under this algorithm, each ID must solve a puzzle to join the system. Additionally, each ID periodically tests its neighbors with a puzzle, removing from its list of neighbors those IDs that fail to provide a solution within a time limit;  these tests are not coordinated between IDs. An ID may be a neighbor to more than one ID and so receive multiple puzzles; these are combined into a single puzzle whose solution satisfies all the received puzzles. 

Similar to Section \ref{s:JandLAssum}, the Bitcoin network initially consists of $9212$ IDs and the join and departure events are scheduled based on the events provided in the dataset. Also, for the three BitTorrent network, we initialize the system with 1000 IDs and the join/ departure events are scheduled based on the parameters and distribution from \cite{neudecker-atc16}. 

We assume $\alpha = 1/14$, and  $T$ ranges over $[2^0,2^{100}]$, where for each  value of $T$, the system is simulated for $10,000$ seconds. The adversary solves entrance puzzles to add bad IDs to the system. We pessimistically ignore the cost paid by the adversary for solving purge puzzles. We average our results over 20 independent runs for each of the four networks for the three algorithms. To avoid cluttering the plots, we omit error bars as they are negligible.

Figure \ref{fig:AvsT} illustrates our results. The blue line depicts the average computational cost to good IDs per second obtained by executing \algGM when the adversary spends $T$ per second. The green and red lines are the plots for \AlgB and \AlgA, respectively.

Note that the $x$-axis and $y$-axis are both $\log$ scaled. Initially, the blue line increases slowly due to $T$ not being substantially larger than $\joinRate$, and hidden constants. However, as $T$ grows, we observe behavior very close to $G = \sqrt{T}$, which validates the asymptotic behavior of the asymmetric cost. We cut off the red line at the point at which \AlgA~ is no longer able to ensure that the fraction of bad IDs is less than $1/2$.

\newpage
\subsection{Proposed Heuristics}\label{section:heuristics}

\begin{wrapfigure}{r}{0.55\textwidth}
	\centering
	\includegraphics[trim = 1cm 8cm 1cm 10cm, width=.48\textwidth]{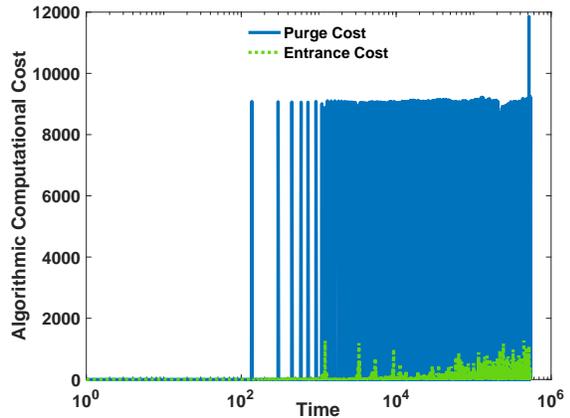}
	\caption{Comparison of purge cost to entrance cost for Bitcoin Network.}
	\vspace{-5pt}
	\label{fig:purgeVSentrance}
\end{wrapfigure}

In this section, we explore heuristics to improve performance of our algorithm. To determine effective heuristics, we first study two separate costs: the purge cost and the entrance cost to the good IDs. We studied these two costs for the Bitcoin Network in the absence of an attack. 

In Figure \ref{fig:purgeVSentrance}, we plot the costs as a function of time. In order to better visualize the frequency of purges, the $x$-axis is a log plot. It is clear that our algorithmic cost is dominated by the cost of purges. Hence, we focus on reducing the frequency of purges, without compromising correctness. We describe three heuristics below, followed by an evaluation of their performance.

\smallskip

\noindent\textbf{Heuristic 1:} We use the symmetric difference to determine when to do a purge.  Specifically, for iteration $i$, if $|(\curIDs \cup \setIDs_{i-1}) - (\curIDs \cap \setIDs_{i-1})| \geq |\setIDs_{i-1}|/11$, then a purge is executed.  This ensures that the fraction of bad IDs can increase by no more than in our original specification.  But it potentially lengthens the amount of time between purges, for example, in the case when some ID joins and departs repeatedly.

\smallskip

\begin{figure*}[b!]
\captionsetup[subfigure]{labelformat=empty}
\begin{subfigure}{0.49\textwidth}
	\centering
	\includegraphics[trim = 1cm 6cm 1cm 6.5cm, width=1\textwidth]{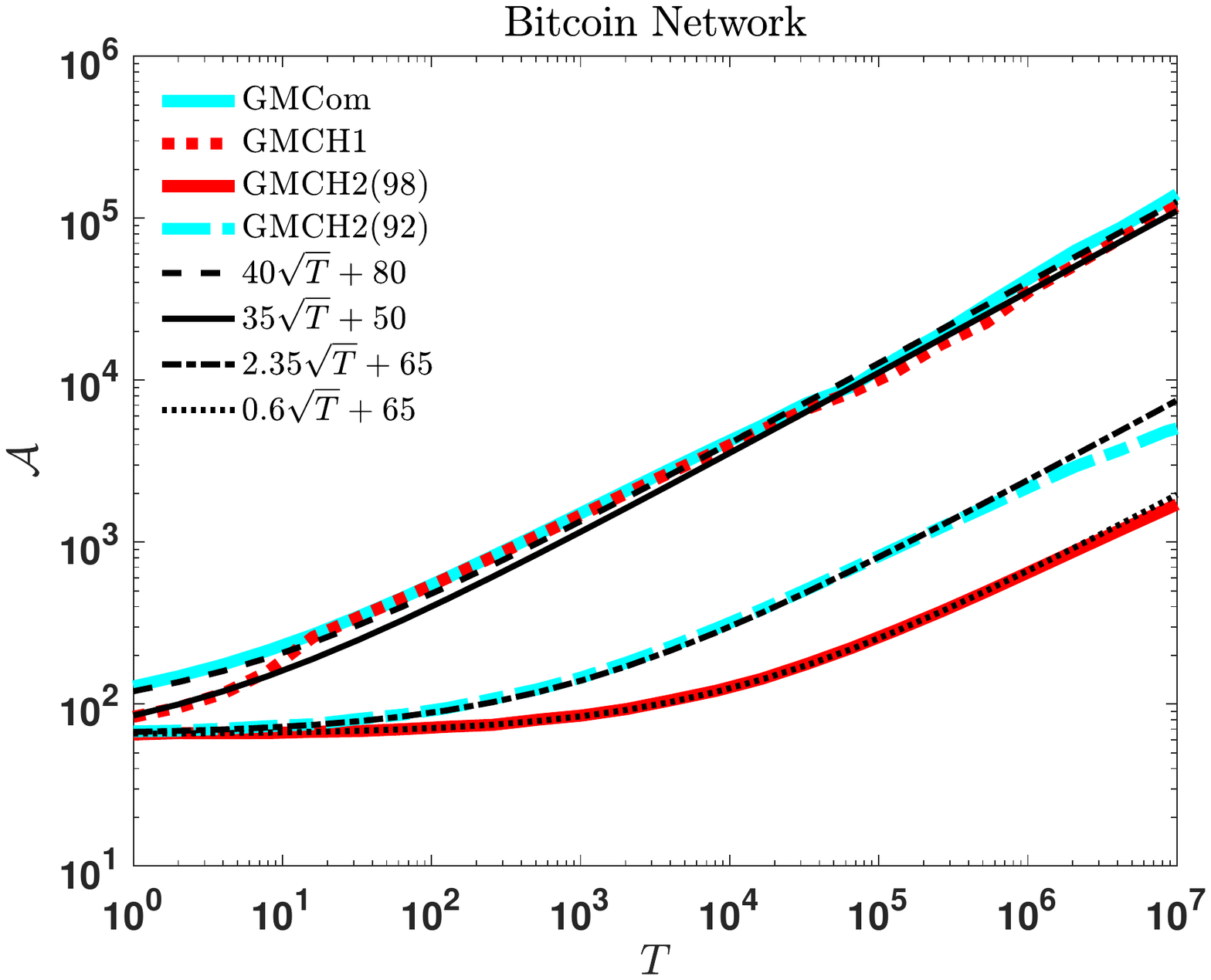} 
	\vspace{-20pt}
	\caption{(a)}
\end{subfigure}
\begin{subfigure}{0.49\textwidth}
	\includegraphics[trim = 1cm 6cm 1cm 6.5cm, width=1\textwidth]{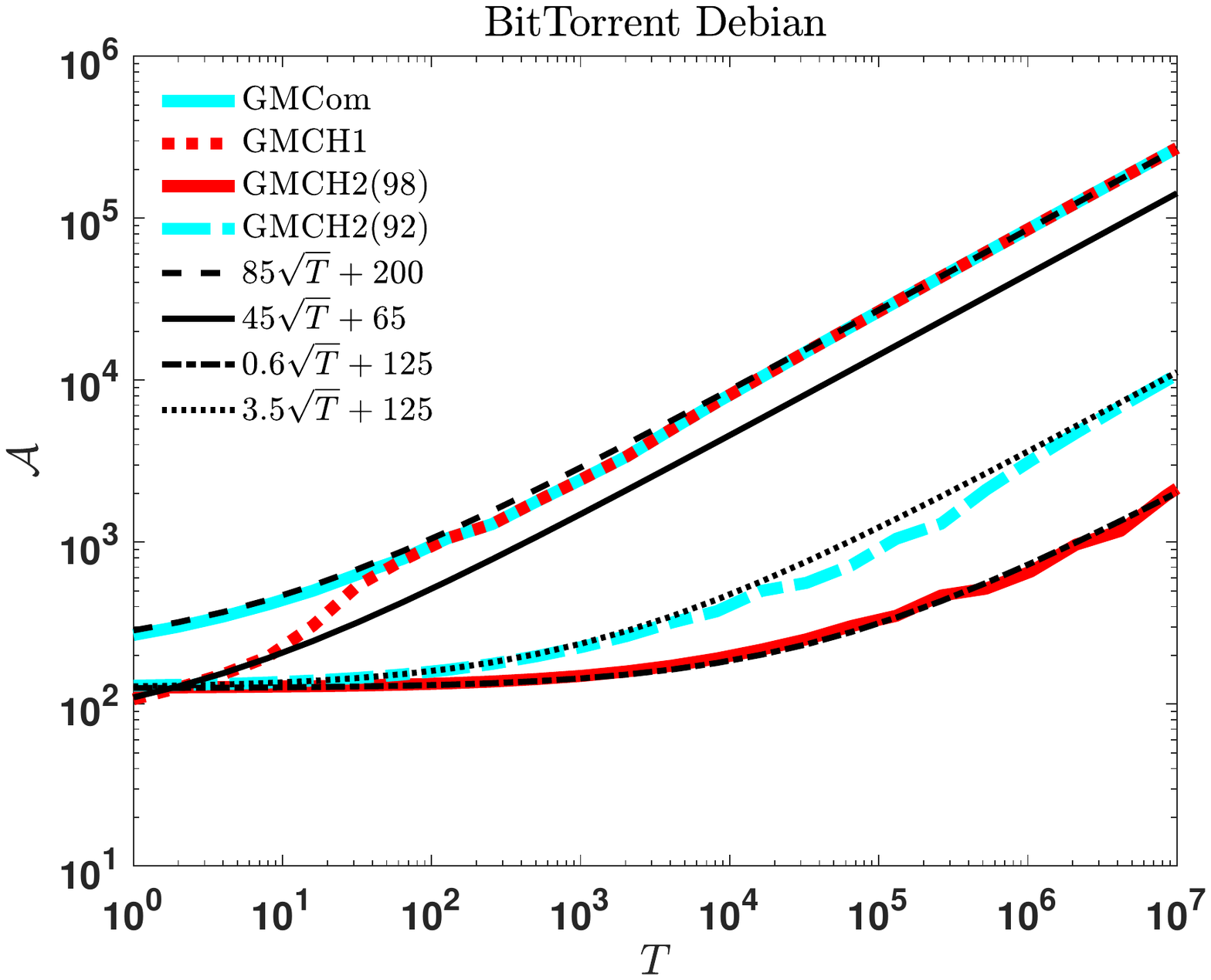}  
	\vspace{-20pt}
	\caption{(b)}
\end{subfigure}
\begin{subfigure}{0.49\textwidth}
	\includegraphics[trim = 1cm 6cm 1cm 6.5cm, width=1\textwidth]{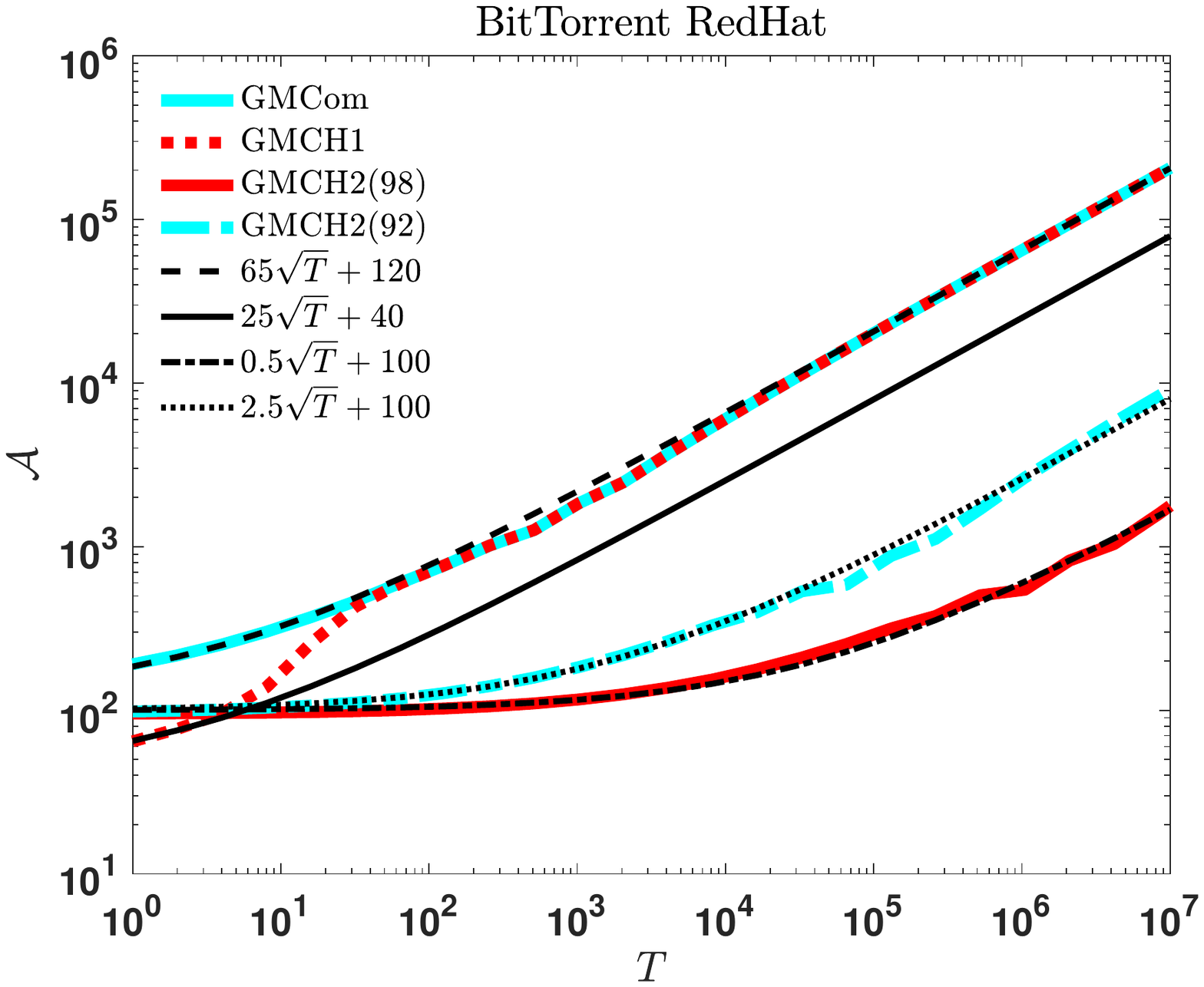}
	\vspace{-20pt}
	\caption{(c)}
\end{subfigure}
\begin{subfigure}{0.49\textwidth}
	\includegraphics[trim = 1cm 6cm 1cm 6.5cm, width=1\textwidth]{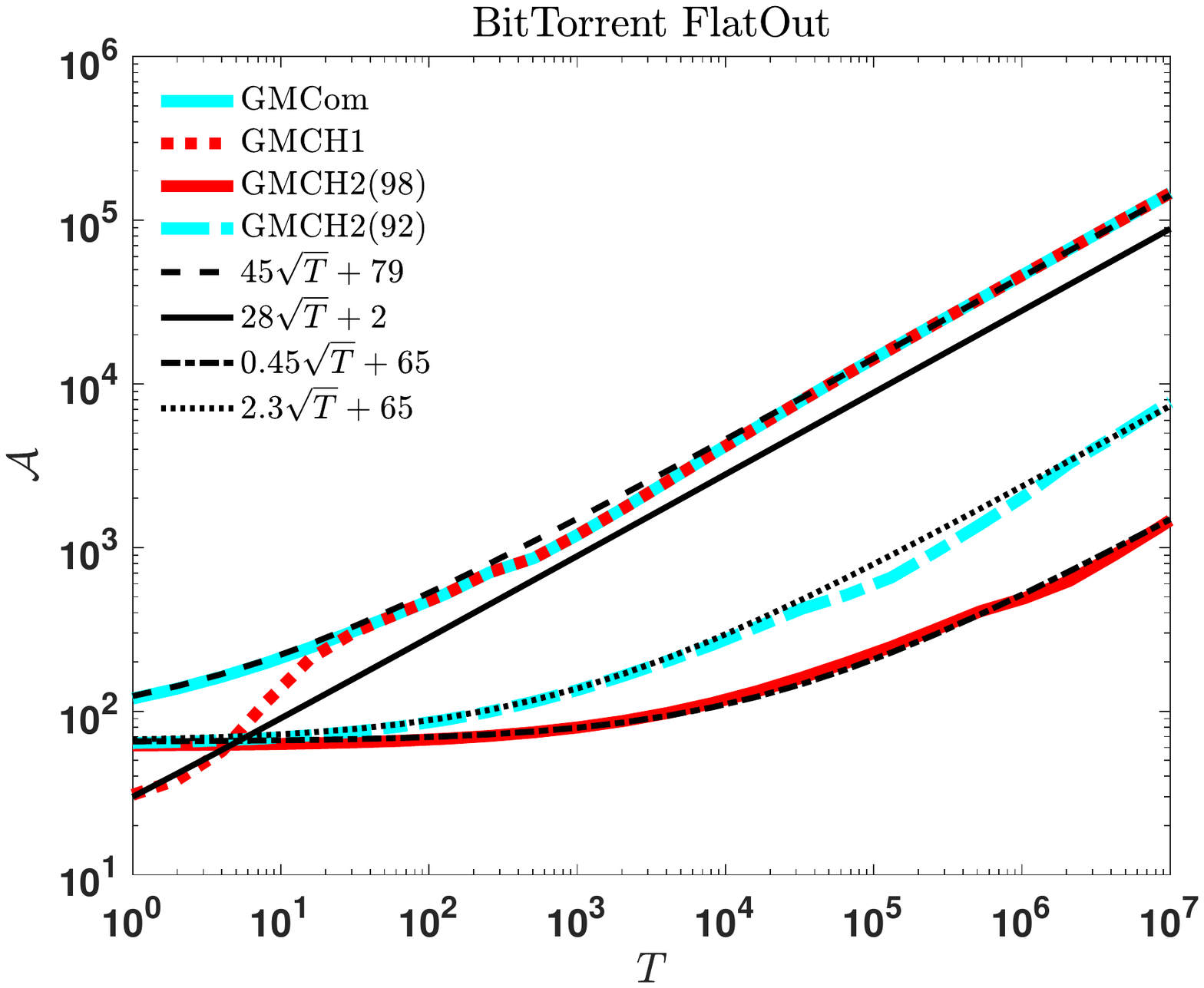} 
	\vspace{-20pt}
	\caption{(d)}
\end{subfigure}
\caption{Comparison of algorithmic cost to adversarial cost for \algGM and proposed heuristics.}
\vspace{-10pt}
\label{fig:Heuristic}
\end{figure*}

\noindent\textbf{Heuristic 2:} We use the estimated good join rate to bound the maximum number of bad IDs that might have joined during the current iteration. This allows us to upper-bound the fraction of bad IDs in the system and purge only when the system's correctness may be at risk. We use this fact in conjunction with Heuristic 1 to safely delay the purge.

\smallskip

\noindent\textbf{Heuristic 3:} We propose the use of existing Sybil-detection techniques to remove some Sybil IDs from the system.  For example, recent works have  explored the possibility of identifying Sybil IDs based on the network topology \cite{yu2008sybillimit,danezis2009sybilinfer,misra2016sybilexposer}. In our experiments, we focus on SybilInfer from \cite{misra2016sybilexposer}, and use it as a black box to possibly detect and discard Sybil IDs as they enter the system. The accuracy of the blackbox in detecting honest and Sybil IDs is a parameter in our experiments.  We focus on two specific accuracy values, $0.98$ and $0.92$, based on emperical results in \cite{misra2016sybilexposer}.

\smallskip

To evaluate the performance of these heuristics, we use \algGM as a baseline. The experimental setup is the same as Section~\ref{section:empasym}, with the same four data sets. We define \defn{GMCH1} to be \algGM using both Heuristic 1 and Heuristic 2. We define \defn{GMCH2(98)} and \defn{GMCH2(92)} to be \algGM using Heuristics 1 and 2, and also Heuristic 3, with the accuracy parameter of Heuristic 3 as 0.98 and 0.92.

Figure \ref{fig:Heuristic} illustrates our results. These indicate that the GMCH1 is effective in reducing costs when adversarial effort is limited. On the other hand, GMCH2 reduces costs significantly during adversarial attack, with improvements of up to two orders of magnitude during the most significant attacks tested.



\section{Conclusion and Future Work}~\label{sec:future}

We have presented and analyzed two PoW-based algorithms for defending against the Sybil attack. Our empirical work suggests that these algorithms are competitive with the contemporary \POW-based Sybil defense, \AlgA, and that they can be significantly more efficient when the system is under  attack.  Finally, we have proved a lower bound showing that our algorithm's computational cost is asymptotically optimal among a large class of Sybil-defense algorithms.   

Many open problems remain including the following. One of particular interest is: Can we adapt our technique to secure multi-party computation? The problem of \defn{secure multi-party computation (MPC)} involves designing an algorithm for the purpose of computing the value of an $n$-ary function $f$ over private inputs from $n$  IDs $x_1, x_2,...,x_n$, such that the IDs learn the value of $f(x_1,x_2,...,x_n)$, but learn nothing more about the inputs than what can be inferred from this output of $f$.  The problem is generally complicated by the assumption that an adversary controls a hidden subset of the IDs.  In recent years, a number of attempts have been made to solve this problem for very large $n$ manner~\cite{applebaum2010secrecy,beerliova2006efficient,bogetoft2009secure,damgard2006scalable,damgaard2008scalable,dani2014quorums,goldreich1998secure}.  We believe that our technique for forming committees in a dynamic network could be helpful to solve a dynamic version of this problem, while ensuring that the resource costs to the good IDs is commensurate with the resource costs to an adversary.  Secure MPC may have applications in the area of smart contracts for cryptocurrencies~\cite{kosba2016hawk,benhamouda2019supporting,christidis2016blockchains}.

Another avenue of future work considers scenarios where ``good'' IDs may not always blindly follow our algorithm.  Instead, these IDs may be rational but selfish, in that they seek to optimize some known utility function.  The adversary still behaves in a worst-case manner, capturing the fact that the utility function of the adversary may be completely unknown.  This approach is similar to BAR (Byzantine, Altruistic, Rational) games~\cite{clement:theory} in distributed computing, see also~\cite{abraham:distributed,Aiyer:2005:BFT,Vilaca:NBT,Wong:2010:MBA}.  Extending our result to accommodate this model is of interest.

Finally, there is a substantial body of literature on attack-resistant overlays; for example~\cite{fiat:making,awerbuch:towards,awerbuch:random,awerbuch:towards2,young:practical,saia:reducing,young:towards,guerraoui:highly,naor:novel,naor_wieder:a_simple,sen:commensal,saad:self-healing,saad:self-healing2,awerbuch_scheideler:group,AnceaumeLRB08,HK}. These results critically depend on the fraction of bad IDs  always being upper-bounded by a constant less than $1/2$. However, there are only a handful of results that propose a method for guaranteeing this bound. A natural idea is to examine whether our algorithms can be applied to this setting in order to guarantee this bound, via the \sgoal.

\end{document}